\documentclass[11pt]{article}

\usepackage{authblk}

\usepackage{amsmath}
\usepackage{amsthm}
\usepackage{amssymb}
\usepackage{mathtools}
\usepackage{enumitem}

\usepackage{tikz}
\usetikzlibrary{calc, positioning, arrows, graphs, shapes, arrows.meta, decorations, decorations.markings}

\usepackage[margin=1in]{geometry}
\usepackage{thm-restate}

\usepackage{color}
\usepackage[pdfstartview=FitH,pdfpagemode=UseNone,colorlinks=true,citecolor=blue,linkcolor=blue]{hyperref}

\newtheorem{theorem}{Theorem}[section]
\newtheorem*{theorem*}{Theorem}
\newtheorem{claim}[theorem]{Claim}
\newtheorem{lemma}[theorem]{Lemma}

\newtheorem{remark}{Remark}
\newtheorem{construction}{Construction}
\newtheorem{definition}{Definition}
\newtheorem{corollary}{Corollary}
\newtheorem*{corollary*}{Corollary}
\newtheorem{conjecture}{Conjecture}

\newtheorem{fact}[conjecture]{Fact}
\newtheorem{problem}{Open Problem}

\DeclarePairedDelimiter\abs{\lvert}{\rvert}
\DeclarePairedDelimiter\norm{\lVert}{\rVert}

\DeclarePairedDelimiter\floor{\lfloor}{\rfloor}

\DeclarePairedDelimiter\ip{\langle}{\rangle}

\newcommand{\acc}{\cclass{{ACC}}^0}
\newcommand{\acz}{\cclass{{AC}}^0}
\newcommand{\AND}{\cclass{{AND}}}

\newcommand{\NOT}{\cclass{{NOT}}}
\newcommand{\maj}{\cclass{{MAJORITY}}}
\newcommand{\majn}{\cclass{{MAJORITY}}_n}
\newcommand{\dw}{\Delta_w}

\newcommand{\modm}{\cclass{{MOD}}_m}
\newcommand{\modg}{\cclass{{MOD}}}
\newcommand{\mdp}{\cclass{{MidBit}}^+}
\newcommand{\mdg}{\cclass{{MidBit}}}

\newcommand{\nulls}{\mathop{{nullspace}}}

\newcommand{\bspan}{\mathop{{row\text{--}span}}}

\newcommand{\tdeg}{\overline{{\mathop{{\deg}}}}}

\newcommand{\cclass}[1]{\mathsf{#1}}

\makeatletter
\newcommand{\dotDelta}{{\vphantom{\Delta}\mathpalette\d@tD@lta\relax}}
\newcommand{\d@tD@lta}[2]{%
  \ooalign{\hidewidth$\m@th#1\mkern-1mu\cdot$\hidewidth\cr$\m@th#1\Delta$\cr}%
}
\makeatother

\newcommand{\Ms}{\widehat{{M}}}
\newcommand{\Bs}{\widehat{{B}}}

\title{Degree Lower Bounds for Torus Polynomials and MAJORITY vs ACC\textsuperscript{0}}

\author[1]{Vaibhav Krishan}
\author[2]{Sundar Vishwanathan}
\affil[1]{The Institute of Mathematical Sciences, Chennai, India}
\affil[2]{Indian Institute of Technology Bombay, Mumbai, India}

\date{}

\begin{document}

\begin{titlepage}
\maketitle

\begin{abstract}
  The class \(\acc\) consists of Boolean functions that can be computed by constant-depth circuits of polynomial size with \(\AND, \NOT\) and \(\modm\) gates, where \(m\) is a natural number.
  At the frontier of our understanding lies a widely believed conjecture asserting that \(\maj\) does not belong to \(\acc\).

  A few years ago, Bhrushundi, Hosseini, Lovett and Rao (ITCS 2019) introduced torus polynomial approximations as an approach towards this conjecture.
  Torus polynomials approximate Boolean functions when the fractional part of their value on Boolean points is close to half the value of the function.
  They reduced the conjecture that \(\maj \notin \acc\) to a conjecture concerning the non-existence of low degree torus polynomials that approximate \(\maj\).

  We reduce the non-existence problem further, to a statement about finding feasible solutions for an infinite family of linear programs.
  The main advantage of this statement is that it allows for incremental progress, which means finding feasible solutions for successively larger collections of these programs.
  As an immediate first step, we find feasible solutions for a large class of these linear programs, leaving only a finite set for further consideration.
  Our method is inspired by the \emph{method of dual polynomials}, which is used to study the approximate degree of Boolean functions.
  Using our method, we also propose a way to progress further.

  We prove several additional key results with the same method, which include:
  \begin{itemize}
    \item
      A lower bound on the degree of \emph{symmetric} torus polynomials that approximate the \(\AND\) function.
      As a consequence, we get a separation that symmetric torus polynomials are \emph{weaker} than their asymmetric counterparts.
    \item
      An error-degree trade-off for symmetric torus polynomials approximating the \(\maj\) function, strengthening the corresponding result of Bhrushundi, Hosseini, Lovett and Rao (ITCS 2019).
    \item
      The \emph{first} lower bounds against torus polynomials approximating \(\AND\), showcasing the power of the machinery we develop.
      This lower bound nearly matches the corresponding upper bound.
      Hence, we get an almost complete characterization of the torus polynomial approximation degree of \(\AND\).
    \item 
      Lower bounds against asymmetric torus polynomials approximating \(\maj\), or \(\AND\), in the very low error regime.
      This partially answers a question posed in Bhrushundi, Hosseini, Lovett and Rao (ITCS 2019) about error-reduction for torus polynomials.
  \end{itemize}
\end{abstract}
\thispagestyle{empty}
\end{titlepage}

\section{Introduction}
\label{sec:intro}
Proving that an explicit function is not contained in a complexity class is the prime focus of complexity theorists, and such questions make up some of the hardest problems in computer science.
We study such a question at the frontier of our knowledge about Boolean circuit complexity classes.
To state the question, we first need to define the class of Boolean circuits we consider.

Denote by \(\acc\) the class of constant-depth Boolean circuits of polynomial size comprising \(\AND, \NOT\), and \(\modm\) gates.
A \(\modm\) gate outputs \(1\) if and only if the count of \(1\)s in the input is divisible by \(m\).
Nearly 35 years ago, Barrington~\cite{barrington1989bounded} conjectured that \(\acc\) does not contain \(\maj\).
Here, \(\maj\) outputs \(1\) if and only if the number of \(1\)s is at least half of the total number of inputs.
This conjecture has remained unresolved since.

\begin{conjecture}[Barrington's conjecture~\cite{barrington1989bounded}]
  \label{conj:majlb}
  \(\maj \notin \acc\).
\end{conjecture}

The objective of the conjecture is to prove that a particular circuit class cannot compute a certain function.
In the literature, this task is referred to as proving \emph{lower bounds} against that class.
We outline a few major approaches that have led to Boolean circuit lower bounds in the past.

One of the approaches is based on ``simplification'', and the probabilistic method, for example: using \emph{random restrictions}.
This is a classical technique, developed for studying various complexity classes, such as \(\acz\), the class of constant-depth circuits comprising \(\AND\) and \(\NOT\) gates.
A classic example is the landmark result of H{\aa}stad~\cite{hastad1986almost}, who proved that \(\acz\) circuits simplify when a significant fraction of the input variables are assigned values randomly.
It is easy to see that the parity function does not undergo such simplification.
The author formalized this intuition to prove nearly optimal lower bounds against \(\acz\) circuits.
However, random restrictions do not seem useful for lower bounds against \(\acc\), as it contains the parity function.

Another approach is to use the \emph{satisfiability algorithms}--to--lower bounds connection, to prove lower bounds from high complexity classes such as \(\cclass{NEXP}\).
Williams~\cite{williams2014nonuniform} developed this approach in a groundbreaking work, and used it to prove that \(\cclass{NEXP}\) is not contained in \(\acc\).
Subsequent works~\cite{chen2018average,murray2019circuit,chen2019non,chen2020almost,chen2023new} have considerably strengthened this connection.
In particular, Murray and Williams~\cite{murray2019circuit} have proved that \(\cclass{NQP}\) is not contained in \(\acc\), improving the lower bound.
However, it is not clear how to use this method to prove that an easily computable function, such as \(\maj\), is not contained in \(\acc\).

This leads us toward another classical approach, based on the \emph{polynomial method}.
In this framework, researchers study various notions of representing Boolean functions using polynomials.
This framework is quite powerful, and has numerous applications across theoretical computer science, see Aaronson's survey~\cite{aaronson2008polynomial} for an interesting and insightful account.
We describe two notions of polynomial representations that have found uses in the study of Boolean circuits.

The first notion, called real polynomial approximation, uses polynomials over the reals to approximate Boolean functions pointwise.
Nisan and Szegedy~\cite{nisan1994degree}, in a seminal work, studied this model, and proved its connections with other natural computational models, such as \emph{decision trees}.
Their work provided considerable impetus to the study of real polynomial approximations, and firmly established their use in mainstream complexity theory.
Today, the study of real polynomial approximations is a subfield by itself, with well-developed techniques for lower bounds and upper bounds, and numerous applications.
See Bun and Thaler's survey~\cite{bun2022approximate} for a comprehensive discussion of real polynomial approximations.

The second notion of polynomial approximation, called \emph{probabilistic polynomials}, uses a distribution of polynomials over a finite field.
On any Boolean point, a polynomial from the distribution should match the value of the given function with ``high'' probability.
Razborov~\cite{razborov1987lower}, and Smolensky~\cite{smolensky1987algebraic}, pioneered the use of this model in independent works.
Consider any function \(f\) computable by constant-depth circuits of polynomial size with \(\AND, \NOT\) and \(\modg_p\) gates, for a prime \(p\).
The authors, independently, proved that there exists a distribution of ``low'' degree polynomial over \(\mathbb{F}_p\) that matches \(f\) with ``high'' probability.
They also proved that the same does not hold for \(\maj\), or \(\modg_q\) for a prime \(q \neq p\), leading to a lower bound.

Broadly speaking, in order to prove that a function \(f\) is not contained in a class \(\mathcal{C}\), the theme here is to find a \emph{distinguisher}.
A distinguisher is a function \(\mu\) that maps \(f\) to a point outside the image of \(\mathcal{C}\) under \(\mu\), proving \(f \notin \mathcal{C}\).
That is, proving \(\mu(f) \notin \mu(\mathcal{C})\) implies \(f \notin \mathcal{C}\).
For example, in~\cite{razborov1987lower, smolensky1987algebraic}, the degree of the probabilistic polynomial acts as the distinguisher.
This degree is low for functions in \(\acz[p]\), while \(\maj\) requires large degree, hence the lower bound \(\maj \notin \acz[p]\).

The models of polynomial approximation defined above do not seem to give a distinguisher against \(\acc\).
For example, \(\acc\) circuits can use \(\modg_6\), which requires large degree in either model.
In fact, for a long time, there were no polynomial method based approaches known for proving \(\acc\) lower bounds.

Then, in a pivotal work few years ago, Bhrushundi, Hosseini, Lovett and Rao~\cite{bhrushundi2019torus} made an inspired suggestion of using the degree of torus polynomials as a distinguisher towards the \(\maj\) vs \(\acc\) question.
Torus polynomials approximate a Boolean function \(f\) if their fractional part is close to \(\frac{f}{2}\)\footnote{\(0\) and \(1\) have the same fractional part. Dividing \(f : \{0, 1\}^n \to \{0, 1\}\) by \(2\) makes the fractional parts different.}.
They proved that torus polynomial approximations extend both real polynomial approximations and approximation using probabilistic polynomials (see~\cite[Lemma 14]{bhrushundi2019torus}).
In fact, torus polynomials are more powerful, they can efficiently approximate \(\modm\) for any \(m\).
We define this model below, rephrasing~\cite[Definition 1]{bhrushundi2019torus}.

\begin{definition}[Torus Polynomial Approximation~\cite{bhrushundi2019torus}]
  \label{def:torus}
  Consider a Boolean function \(f : \{0, 1\}^n \to \{0, 1\}\), a polynomial \(P \in \mathbb{R}[X_1, \ldots, X_n]\) (we assume that \(P\) is multilinear without losing generality) and a real number \(0 \leq \varepsilon < \frac{1}{4}\).
  \(P\) is a torus polynomial that \(\varepsilon\)-approximates \(f\), if the following holds:

  There exists a function \(Z : \{0, 1\}^n \to \mathbb{Z}\), such that for any Boolean point \(a \in \{0, 1\}^n\), we have \(\abs*{P(a) - \frac{f(a)}{2} - Z(a)} \leq \varepsilon\).
  In other words, the \emph{fractional part} of \(P(a)\) is within \(\varepsilon\) of \(\frac{f(a)}{2}\).

  Denote the minimum degree of such a polynomial by \(\tdeg_\varepsilon(f)\).
\end{definition}

Given any function \(f : \{0, 1\}^n \to \{0, 1\}\), consider the unique multilinear polynomial that exactly matches \(f\) on all Boolean points\footnote{The existence of such a polynomial is folklore.}.
This would require degree \(n\) for most functions, with zero error of approximation.
Naturally, the question is, for which functions \(f\) can we construct a torus polynomial of much smaller degree, say \(\log^2(n)\), that \(\frac{1}{n^2}\)-approximate \(f\), for example.
In~\cite[Corollary 20]{bhrushundi2019torus}, the authors proved that something similar holds for all functions in \(\acc\).

\begin{restatable}[\cite{bhrushundi2019torus}]{theorem}{bhrushacc}
  \label{thm:bhrushacc}
  Consider any function \(f: \{0, 1\}^n \to \{0, 1\}\) such that \(f \in \acc\).
  Then, \(\tdeg_\varepsilon(f) \leq \log(n)^{O(1)}\) for any \(\varepsilon \geq \frac{1}{n^{O(1)}}\).
\end{restatable}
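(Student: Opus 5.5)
The plan is to go through the structural characterization of \(\acc\) due to Yao and to Beigel and Tarui, and then build a torus polynomial for each piece of that characterization. By Beigel--Tarui, every \(f \in \acc\) can be written as \(f = g \circ S\). Here \(S\) is a \(\mathsf{SYM}^+\) layer: a sum of \(s = 2^{\log(n)^{O(1)}}\) many \(\AND\)s, each of fan-in \(t = \log(n)^{O(1)}\). The outer function \(g : \{0,\ldots,s\} \to \{0,1\}\) is an arbitrary function of that count. This representation alone is not enough, because a general symmetric outer function \(g\) on \(s\) inputs is not easy for torus polynomials.

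So I would use the refinement that \(g\) can be taken to be a \(\mdg\)-type function. Following Green--Köbler--Torán, \(\acc\) is contained in \(\mdp\)-circuits of quasipolynomial size. That is, \(f(x) = \mdg(C_1(x), \ldots, C_s(x))\), where the \(C_i\) are \(\AND\)s of polylogarithmic fan-in and \(\mdg(N)\) is the \(\lfloor \log N/2 \rfloor\)-th bit of \(N = \sum_i C_i\).

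The first step is to express a single bit of an integer count as a torus-friendly quantity. For \(N = \sum_i C_i(x)\), bit \(k\) of \(N\) equals
\[
\left\lfloor \frac{N}{2^{k}} \right\rfloor - 2\left\lfloor \frac{N}{2^{k+1}} \right\rfloor .
\]
To get a fractional part of \(\tfrac{1}{2}\) times this bit, one can use the identity that \(\tfrac{N}{2^{k+1}}\) has fractional part \(\tfrac{\mathrm{bit}_k(N)}{2} + \tfrac{N \bmod 2^k}{2^{k+1}}\). Hence the \emph{error} term \(\tfrac{N \bmod 2^k}{2^{k+1}}\) must be cancelled modulo \(1\).

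Here I would use the trick of BHLR. I would subtract a polynomial \(Q\) that approximates \(\tfrac{N \bmod 2^k}{2^{k+1}}\) modulo \(1\) only up to integer shifts. Equivalently, I would write
\[
\frac{N \bmod 2^k}{2^{k+1}} = \sum_{j<k} \frac{\mathrm{bit}_j(N)}{2^{k+1-j}}
\]
and recurse, or else use binomial coefficients. The key integer identity is \(\mathrm{bit}_j(N) \equiv \binom{N}{2^j} \pmod 2\) (Lucas). Consequently \(\tfrac{1}{2}\binom{N}{2^j}\) has fractional part \(\tfrac{\mathrm{bit}_j(N)}{2}\) exactly. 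But \(\binom{N}{2^j}\) has degree \(2^j \cdot t\) in \(x\), which is too large when \(j \approx \tfrac{1}{2}\log s\).

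The main obstacle is therefore the degree. The middle bit of a quasipolynomially large sum needs degree about \(\sqrt{s}\) if done exactly. To overcome this I would use the approximation side. Real polynomial approximation (\cite[Lemma 14]{bhrushundi2019torus} says torus approximations subsume these) lets one approximate the relevant threshold-like quantities of the count. Following BHLR, the plan is to approximate \(N / 2^{k+1} \bmod 1\) via a low-degree real approximation of the function \(N \mapsto (N \bmod 2^{k}) / 2^{k+1}\). That function is piecewise linear with jumps of \(\tfrac{1}{2}\). I would use Chebyshev-based approximations of the sawtooth on the range \([0,s]\), with error \(\varepsilon\) and degree \(\mathrm{polylog}(s/\varepsilon) = \log(n)^{O(1)}\), exploiting that \(N\) is concentrated on the relevant scale. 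Composing with the degree-\(t\) \(\AND\)s gives total degree \(\log(n)^{O(1)}\). The error \(\varepsilon \ge n^{-O(1)}\) only affects the polylog factor. I expect this sawtooth-approximation step, handled in BHLR through a careful use of the middle bit's special structure, to be the crux.
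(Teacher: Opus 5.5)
\textbf{Gap: the sawtooth step, which you identify as the crux, fails.} The paper does not prove this statement; it imports it from BHLR (their Corollary~20). So your proposal has to stand on its own, and it does not.

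On \(\{0,\dots,s\}\), with \(k\approx\frac12\log s\), the sawtooth \(N\mapsto (N \bmod 2^k)/2^{k+1}\) has about \(s/2^k\approx 2^k\) teeth. Take any real polynomial within \(\frac18\) of it at every integer point. On each tooth it must cross the level \(\frac14\) going up and again coming down. It therefore has at least \(2^{k+1}-1\) sign changes, so its degree is at least about \(2\sqrt{s}\). Since \(s\) is quasi-polynomial, this is quasi-polynomial in \(n\), not \(\log(n)^{O(1)}\).

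Appealing to \(N\) being ``concentrated on the relevant scale'' does not help. A torus approximation must hold at every \(x\in\{0,1\}^n\), and the \(\mdp\) theorem gives no promise on where \(N(x)\) lies.

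\textbf{A cleverer correction term cannot rescue the plan either.} Every polynomial you build has the form \(G(N(x))\) for a univariate \(G\). You also use nothing about \(f\) beyond its \(\mdp\) form. But \(\majn\) is itself such an instance. Pick \(2^k\in(n/2,n]\) and set \(c=2^k-\lceil n/2\rceil\). Then \(\majn(y)\) is bit \(k\) of \(\sum_i y_i + c\). This is a \(\mdp\) circuit with fan-in-one bottom gates plus constant gates, padded to \(4^k=O(n^2)\) inputs so that bit \(k\) is the middle bit. Your construction would then output a \emph{symmetric} polynomial \(G(\sum_i y_i + c)\) of degree \(\log(n)^{O(1)}\) that \(\frac{1}{20n}\)-approximates \(\majn\). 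That contradicts Theorem~\ref{thm:bhrushlb}.

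\textbf{What a correct argument needs.} The \(\mdp\) normal form alone cannot give the theorem; as the \(\majn\) instance shows, it would overreach. A correct argument must carry extra structure from the simulation of \(\acc\) circuits, and BHLR's proof necessarily does so. One example of such structure would be a promise that the low-order part of the count is already within \(\varepsilon 2^{k+1}\) of \(0\). Then \(N/2^{k+1}\) itself is an \(\varepsilon\)-approximation, with no sawtooth left to cancel.
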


Hence, proving that the same does not hold for the majority function would prove \(\maj\) is not contained in \(\acc\).
This is precisely the goal of our work in this paper.
While we could not take this task to completion, we do make progress towards it.
Before discussing our contributions, we discuss previous work related to \(\acc\) as well as torus polynomials.

\subsection{Previous Work}

Various works in the 90's had proved conversion results for \(\acc\), with the purpose of proving \(\acc\) lower bounds.
For example, strengthening the results from~\cite{yao1990acc,beigel1994acc}, Green, K\"{o}bler, Regan, Schwentick and Tor{\'a}n~\cite{green1995power} proved that all \(\acc\) functions have \(\mdp\) circuits computing them.
Here, \(\mdp\) is the class of depth-two circuits, with \(\AND\) gates at the bottom and a \(\mdg\) gate\footnote{A \(\mdg\) outputs the middle bit from the binary expansion of the number of \(1\)s, with a fixed tie-breaking choice.} at the top.
The authors proposed to prove lower bounds against \(\mdp\) as an approach to prove \(\acc\) lower bounds.
They argued that the simpler structure of \(\mdp\) circuits might make it easier to prove lower bounds.

Along similar lines, by combining~\cite{haastad1991power,razborov1993nomega}, one gets a communication complexity based approach.
These works together imply that lower bounds against the \emph{number-on-forehead} communication model lead to \(\acc\) lower bounds.
However, lower bounds against this communication model also imply \(\mdp\) lower bounds.
Hence, logically speaking, lower bounds against \(\mdp\) are an easier route to \(\acc\) lower bounds.

In Krishan~\cite{krishan2021upper}, the author proved that functions with low degree torus polynomial approximations belong to \(\mdp\).
Hence, one can argue that lower bounds against torus polynomials is an even more refined approach for \(\acc\) lower bounds.
Moreover, Chen, Lu, Lyu and Oliveira~\cite{chen2021majority} proved that lower bounds against torus polynomials lead to \emph{average-case} lower bounds, and \emph{pseudorandom generators}, against \(\acc\).
Both of these are major open questions, which gives us further impetus for proving lower bounds against torus polynomials.

In~\cite{bhrushundi2019torus}, the authors introduced torus polynomials with the goal of proving that \(\maj\) does not belong to \(\acc\).
Now, \(\maj\) is a symmetric function, as its value only depends on the number of \(1\)s in the input.
Hence, it is natural to study \emph{symmetric}\footnote{A polynomial is symmetric if it is invariant under permutation of its variables.} torus polynomials that approximate \(\maj\) as the first step.
The authors used a counting-based argument to prove the following lower bound.

\begin{restatable}[Corollary 23 of~\cite{bhrushundi2019torus}]{theorem}{bhrushlb}
  \label{thm:bhrushlb}
  Any symmetric torus polynomial, that \(\frac{1}{20n}\)-approximates \(\maj\), must have degree \(\Omega\left(\sqrt{\frac{n}{\log(n)}}\right)\).
\end{restatable}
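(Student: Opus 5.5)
Write a symmetric multilinear \(P\) of degree \(d\) as \(P(x)=\sum_{k\le d} c_k \binom{|x|}{k}\), so that on Boolean inputs \(P(a)=p(|a|)\) for the univariate polynomial \(p(t)=\sum_{k\le d} c_k\binom{t}{k}\). The approximation condition then becomes a statement about the values \(p(0),\dots,p(n)\): the fractional part of \(p(w)\) is within \(\varepsilon=\frac{1}{20n}\) of \(0\) for \(w<n/2\) and within \(\varepsilon\) of \(\frac12\) for \(w\ge n/2\). The plan is to use finite differences. Any polynomial of degree \(d\) has vanishing \((d+1)\)-st finite difference:
\[
\sum_{j=0}^{d+1} (-1)^j \binom{d+1}{j}\, p(w+j)=0 .
\]
Substitute \(p(w+j)=\frac{f(w+j)}{2}+Z(w+j)+e_{w+j}\) with \(|e|\le\varepsilon\).

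The key step is to pick a window \(w,\dots,w+d+1\) that straddles the threshold \(n/2\). There the sum \(\frac12\sum_j (-1)^j\binom{d+1}{j} f(w+j)\) is a half-integer or an integer depending on a parity that we can control by sliding \(w\). For a threshold function this partial alternating sum equals \(\pm\binom{d}{s}\) for the appropriate split point \(s\), so its value mod \(1\) is \(\frac12\binom{d}{s}\bmod 1\). The integer parts contribute an integer. The error term is bounded by \(\varepsilon 2^{d+1}\).

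If some window gives \(\binom{d}{s}\) odd, we obtain \(\frac12\le \varepsilon 2^{d+1}\) up to integers, which forces \(d\gtrsim\log n\). That bound is too weak, since the goal is \(\sqrt{n/\log n}\). So the raw finite difference is not enough, and the main obstacle is controlling the error amplification \(2^{d}\).

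To get around this, I would instead use a dual-witness / counting argument in the spirit of the original Corollary 23. Combine many windows by choosing an integer-valued combination \(\mu\) on \(\{0,\dots,n\}\) that annihilates degree-\(\le d\) polynomials and has two properties: \(\sum_w \mu(w) f(w)/2\) is not an integer, and \(\|\mu\|_1\) is only about \(2^{O(d^2/n)}\cdot\mathrm{poly}\). A natural choice is \(\mu(w)=(-1)^w\binom{n}{w}\) multiplied by a low-degree polynomial, restricted near the center.

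I would aim for \(\|\mu\|_1\varepsilon<\frac12\) whenever \(d\le c\sqrt{n/\log n}\). This matches the heuristic that degree-\(d\) polynomials can "see" only about \(d^2\) levels of Hamming weight. Equivalently, the polynomial \(p\) is close to integer-valued behavior except near the center, and Markov/Bernstein-type inequalities for polynomials on \(n\) points force degree \(\sqrt{n}\) up to the \(\log\) lost in the rounding. Making this quantitative, by constructing \(\mu\) with small \(\ell_1\) norm and an odd correlation with \(f\), is the step I expect to be hardest.
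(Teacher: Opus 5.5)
Your framework is the right one: an integer $\mu\in\nulls(\Ms(n,d))$ with $\langle\maj,\mu\rangle$ odd and $\|\mu\|_1<10n$ gives the theorem at once (this is Lemma~\ref{lem:symshorttolb}). The gap is that the proposal never produces such a $\mu$, and producing it is the entire content of the statement. The finite-difference part only recovers the $\Omega(\log(1/\varepsilon))$ bound, as you note yourself.

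The suggested form $(-1)^w\binom{n}{w}q(w)$, or its windowed version $(-1)^{w}\binom{k}{w-\ell}q(w-\ell)$, is no restriction at all. With $\deg q\le k-d-1$ these are \emph{all} annihilators supported on the window. So the whole difficulty is choosing $q$ so that the huge binomial weights cancel down to small integers, and you give no way to do this.

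Nor can you import well-spread real-valued dual witnesses, which is where heuristics like $2^{O(d^2/n)}$ or Markov--Bernstein come from. Integrality is exactly what lets a single $\mu$ defeat every $Z$, and clearing denominators is costly. For instance, the annihilator on the nodes $(d+1)^2-i^2$ used in Theorem~\ref{thm:inferz} has an entry $\pm 2/\binom{2d+2}{d+1}$ relative to its top entry. Any integral multiple of it therefore has $\ell^1$-norm at least $\frac12\binom{2d+2}{d+1}=2^{\Theta(d)}$, no better than the finite difference.

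Finally, oddness of $\langle\maj,\mu\rangle$ is not automatic for a short vector, and the proposal does not address it. This is precisely where the paper's own Minkowski-based argument gets stuck for $\maj$.

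Both missing pieces can be supplied.
\begin{itemize}
\item \emph{Existence.} Work on $m=\lfloor n/2\rfloor$ points. One option is Siegel's lemma (Theorem~\ref{thm:linfvec} with Lemma~\ref{lem:symdet}, as in the paper's proof of Claim~\ref{clm:shortvec}). Plain pigeonhole also works. The $2^{m+1}$ subsets $A\subseteq\{0,\dots,m\}$ yield at most $(m+2)^{(d+1)(d+2)/2}$ distinct vectors $\bigl(\sum_{a\in A}\binom{a}{j}\bigr)_{j\le d}$. So for $d\le c\sqrt{n/\log n}$, two distinct subsets $A\ne B$ collide, and $\mu=1_A-1_B$ is a nonzero $\{-1,0,1\}$-valued annihilator of all polynomials of degree at most $d$. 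The $\log n$ in the theorem is exactly the loss in this count.
\item \emph{Parity.} Use translation invariance of annihilation (the idea behind Lemma~\ref{lem:rotation}). Shift $\mu$ so that its last nonzero entry sits at weight $\lceil n/2\rceil$. Since its support has length at most $m$, the shifted vector still lives on $\{0,\dots,n\}$. Now $\langle\maj,\mu\rangle=\pm1$ and $\|\mu\|_1\le m+1<10n$, so Lemma~\ref{lem:symshorttolb} finishes.
\end{itemize}
The short support is essential: shifting a witness spread over all of $\{0,\dots,n\}$, as the paper does for $\AND$, does not control the parity for $\maj$.

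For comparison, the proof the paper actually cites from \cite{bhrushundi2019torus} builds no explicit witness. It is a counting argument over symmetric functions, combined with the reduction from $\maj$ to the functions $\dw$ that the paper reuses as Lemma~\ref{lem:majdeltareduction}.
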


Now, a priori, this does not resolve Barrington's conjecture, it needs an analogous statement for \emph{asymmetric} torus polynomials.
Indeed, in~\cite[Conjecture 5]{bhrushundi2019torus}, the authors conjectured that Theorem~\ref{thm:bhrushlb} holds for asymmetric torus polynomials.

\begin{conjecture}[\cite{bhrushundi2019torus}]
  \label{conj:asymlb}
  Any torus polynomial, that \(\frac{1}{20n}\)-approximates \(\maj\), must have degree \(\Omega\left(\sqrt{\frac{n}{\log(n)}}\right)\).
\end{conjecture}

This conjecture, if true, proves Barrington's conjecture.
Moreover, it will separate \(\cclass{P}\) from \(\acc\), improving our knowledge well beyond what is currently known.

\subsection{Our Results}

The main goal of our work is to resolve Conjecture~\ref{conj:asymlb}.
Towards this end, we outline a plan of attack in Section~\ref{sec:majplan}.
A major contribution of this paper is a reduction of Conjecture~\ref{conj:asymlb} to a statement that allows for incremental progress.

Fix \(n\) and \(d = o\left(\sqrt{\frac{n}{\log(n)}}\right)\), and suppose the goal is to prove the following: Any torus polynomial that \(\varepsilon\)-approximates \(\maj\) over \(n\) variables must have degree more than \(d\).
Informally stated, we define a vector space \(\Gamma\) of dimension \(2^n\), a vector \(v'\) of dimension \(2^n\), and conjecture that for any vector \(v \in \mathbb{Z}^{2^n}\), there is a \(\gamma \in \Gamma\) such that:
\[
  \sum_{i=1}^{2^n} (v_i + v'_i) \gamma_i > \varepsilon \sum_{i=1}^{2^n} \abs{\gamma_i}
\]
The vector \(v'\) has a very simple structure, it encodes the function for which we wish to prove the lower bound, say \(\maj\).
See Theorem~\ref{thm:asymconvert} for an exact statement.
Incremental progress would mean proving the statement for larger and larger subsets of \(\mathbb{Z}^{2^n}\).

We take the first steps in this direction by first bounding the entries in \(v\), see Theorem~\ref{thm:zbounds}.
This immediately leaves us with only a finite subset of \(\mathbb{Z}^{2^n}\) for which we need to argue further.
Here, each entry of \(v\) can independently take a value that satisfies the upper bound, allowing \(2^n\) degrees of freedom.
As the next step, in Theorem~\ref{thm:inferz}, we show that a few entries in \(v\) determine the other entries.
This reduces the degrees of freedom to a tiny fraction of \(2^n\).

We demonstrate the power of our method by proving a lower bound against torus polynomials approximating \(\AND\).
No lower bounds were known for asymmetric torus polynomials prior to our work, and the lower bound we prove is only quadratically away from the corresponding upper bound for inverse-polynomial error.
Following is the formal statement of the result.

\begin{restatable}{theorem}{andlb}
  \label{thm:andlb}
  Any torus polynomial, that \(\varepsilon\)-approximates \(\AND\), must have degree \(\Omega\left(\log\left(\frac{1}{\varepsilon}\right)\right)\).
\end{restatable}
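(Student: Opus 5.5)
The plan is a dual-witness (parity) argument in the spirit of the method of dual polynomials. The parity function is orthogonal to every low-degree polynomial, and its correlation with the integer part \(Z\) is an integer. The correlation with \(\frac{\AND}{2}\) therefore pins the total away from \(0\), and this forces the error \(\varepsilon\) to be exponentially small in the degree. Throughout, I assume the natural regime \(\varepsilon \geq 2^{-n}\) (otherwise the bound is capped by the trivial degree \(n\)), and I interpret \(\AND\) as \(\AND_n\).

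\textbf{Step 1 (restriction).} Let \(P\) be a multilinear torus polynomial of degree \(d\) that \(\varepsilon\)-approximates \(\AND_n\) via \(Z\), and suppose \(d < n\). Fix \(k = d+1 \leq n\) and set \(X_{k+1} = \cdots = X_n = 1\). The restricted polynomial \(Q(X_1,\ldots,X_k)\) is multilinear of degree at most \(d\). It \(\varepsilon\)-approximates \(\AND_k\) with integer function \(Z'\), the restriction of \(Z\), since \(\AND_n\) restricts to \(\AND_k\) under this assignment. So it suffices to rule out a degree-\(d\) torus approximation of \(\AND_{d+1}\) with small error.

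\textbf{Step 2 (dual witness).} Take \(\psi(a) = (-1)^{\abs{a}}\) on \(\{0,1\}^k\). Every monomial \(\prod_{i\in S} X_i\) with \(\abs{S} < k\) satisfies \(\sum_a \psi(a)\prod_{i\in S}a_i = 0\), because some coordinate outside \(S\) is free and cancels. Hence \(\sum_a \psi(a) Q(a) = 0\). Write \(Q(a) = \frac{\AND_k(a)}{2} + Z'(a) + e(a)\) with \(\abs{e(a)} \leq \varepsilon\). Substituting gives
\[
\frac{(-1)^k}{2} + \sum_a \psi(a) Z'(a) = -\sum_a \psi(a) e(a).
\]
The left side is an integer plus \(\pm\frac12\), so its absolute value is at least \(\frac12\). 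The right side has absolute value at most \(\varepsilon 2^k\).

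\textbf{Step 3 (conclusion).} Combining the two bounds gives \(\varepsilon 2^{d+1} \geq \frac12\), that is, \(d \geq \log_2(1/\varepsilon) - 2\). In the case \(d \geq n\), the assumption \(n \geq \log_2(1/\varepsilon)\) gives the same bound directly. Either way, \(d = \Omega(\log(1/\varepsilon))\).

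There is no serious obstacle. The only care needed is the edge regime in which \(\varepsilon\) is much smaller than \(2^{-n}\); there the statement must be read with the implicit assumption \(\varepsilon \geq 2^{-n}\), since every function has a zero-error torus representation of degree \(n\). The conceptual point to emphasize is that, unlike for \(\maj\), a single explicit dual object suffices: \(\psi\) is orthogonal to all low-degree polynomials and has \(\pm1\) values, so it kills \(Q\) and makes \(\sum_a \psi(a) Z'(a)\) an integer regardless of \(Z'\). This is exactly the kind of feasible solution to the linear programs of the reduction (Theorem~\ref{thm:asymconvert}) that the paper's framework asks for. For \(\AND\) it works uniformly over all integer vectors \(v\), because \(\abs{\gamma}\) is constant.
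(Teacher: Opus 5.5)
Your proof is correct and is essentially the paper's argument. The paper's witness, the column of $B$ from Construction~\ref{cons:extremebasis} indexed by $[n]$, is exactly your parity vector on the subcube where coordinates $d+2,\ldots,n$ are fixed to $1$ (with $\ell^1$-norm $2^{d+1}$), and Lemma~\ref{lem:shorttolb} is your ``integer plus $\frac12$'' step, giving the same threshold $\varepsilon < 2^{-(d+2)}$; your caveat that $\varepsilon \geq 2^{-n}$, equivalently $d \leq n-1$, is also implicit in the paper, whose construction requires $d \in [n-1]$.
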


The result above also holds for \(\maj\), but it does not suffice to prove \(\maj \notin \acc\), which requires a lower bound of the form \(\log^{\omega(1)}(n)\) for some inverse-polynomial error.
Further, as a special case of the result above, we study the case when \(\varepsilon\) is exponentially small.
We show that \(\AND\) requires full degree in this regime, and so does \(\maj\).
In~\cite[Problem 6]{bhrushundi2019torus}, the authors ask about error-reduction for torus polynomials.
The full-degree lower bound allows us to conditionally answer this question for a particular error-regime, which we discuss in Subsection~\ref{subsec:tinyerr}.

Our method also applies to symmetric torus polynomials, which we use to prove a much stronger lower bound against symmetric torus polynomials approximating \(\AND\).
This lower bound matches the lower bound from~\cite[Corollary 23]{bhrushundi2019torus} (refer to Theorem~\ref{thm:bhrushlb}), but for \(\AND\), rather than \(\maj\).
Note that no such lower bound was known for \(\AND\) before our work, and it matches the corresponding upper bound, barring some log factors.
We state the result below.

\begin{restatable}{theorem}{andsymlb}
  \label{thm:andsymlb}
  Any symmetric torus polynomial, that \(\frac{1}{20n}\)-approximates \(\AND\), must have degree \(\Omega\left(\sqrt{\frac{n}{\log(n)}}\right)\).
\end{restatable}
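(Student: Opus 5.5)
The plan is to reduce to a univariate statement and then exhibit an integer-valued ``dual'' witness of small \(\ell_1\) norm, in the spirit of the dual-polynomial method. A symmetric multilinear \(P\) of degree \(d\) agrees on Boolean points with \(p(\abs{a})\) for a univariate \(p \in \mathbb{R}[k]\) of degree at most \(d\). Approximating \(\AND\) then means that for every \(k \in \{0,\dots,n\}\) we can write \(p(k) = Z(k) + \frac{[k=n]}{2} + e(k)\) with \(Z(k)\in\mathbb{Z}\) and \(\abs{e(k)}\le \varepsilon\).

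Suppose we have an integer vector \(\psi\) on \(\{0,\dots,n\}\) such that:
\begin{itemize}
  \item \(\sum_k \psi(k) k^i = 0\) for all \(i \le d\) (orthogonality to degree \(\le d\));
  \item \(\psi(n)\) is odd;
  \item \(\norm{\psi}_1 < \frac{1}{2\varepsilon}\).
\end{itemize}
Then \(0=\sum_k \psi(k)p(k)\) equals an integer plus \(\frac{\psi(n)}{2}\) (a half-odd-integer) plus an error of absolute value at most \(\varepsilon\norm{\psi}_1<\frac12\). This is impossible. So it suffices to construct such \(\psi\) whenever \(d \le c\sqrt{n/\log n}\), with \(\norm{\psi}_1 \le 10n - 1\) when \(\varepsilon = \frac{1}{20n}\).

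To build \(\psi\) I will use a Siegel-lemma style pigeonhole argument. Consider all vectors \(u\in\{0,1,\dots,B\}^{S}\) for a set \(S\) of \(N\approx n/2\) points, with \(B\) an absolute constant. Map each \(u\) to its moment vector \((\sum_k u(k)k^i)_{i\le d}\). This image lies in a box with at most \((BNn^d+1)^{d+1}= 2^{O(d^2\log n)}\) points, while there are \((B+1)^N=2^{\Omega(n)}\) vectors \(u\). When \(d^2\log n \ll n\) there must be many collisions. Each difference of two colliding vectors is a nonzero integer annihilator with entries in \([-B,B]\), hence \(\ell_1\) norm at most \(BN\le 10n-1\) after choosing constants. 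This is exactly where the threshold \(\sqrt{n/\log n}\) appears, matching Theorem~\ref{thm:bhrushlb}.

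The main obstacle is the parity requirement: we need \(\psi(n)\) odd, not merely \(\psi \ne 0\). I would first divide a short annihilator by the largest power of \(2\) dividing all its entries, so that some coordinate is odd, and then try to move that odd coordinate to \(n\). Translation \(k\mapsto k+t\) preserves orthogonality to degree-\(\le d\) polynomials, so I would take \(S=\{0,\dots,\lfloor n/2\rfloor\}\), let \(k_0\) be the largest index with \(\psi(k_0)\) odd, and shift by \(n-k_0\).

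The difficulty is that entries beyond \(k_0\) (all even) then fall outside \(\{0,\dots,n\}\). To repair this I would refine the pigeonhole so that it works modulo \(2\) simultaneously. One option is to pigeonhole among vectors whose top coordinate is fixed to \(0\) versus \(1\), using a counting comparison of the two images. Another is to restrict to \(u\) supported on an interval ending at \(n\) and observe that the lattice of integer annihilators restricted to \([n-N,n]\) contains an element with odd \(n\)-coordinate, namely the \((d+1)\)-st finite difference. Then the collision set can be chosen so that the short vector lies in the odd coset, for example by pigeonholing \(u\) versus \(u+\delta\) where \(\delta\) is a fixed vector with odd last coordinate and small moment shift. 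I expect this coset-control step to be the delicate part of the proof, with the counting step being routine.
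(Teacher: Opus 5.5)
\textbf{There is a genuine gap, at the step you flag yourself.} Your reduction to a univariate \(p\) and the half-integer contradiction are correct; this is Lemma~\ref{lem:symshorttolb}. Your pigeonhole count is also sound. But as written, the proposal never produces a short integer annihilator with odd \(n\)-th coordinate.

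Dividing out the largest power of \(2\) and shifting the last \emph{odd} entry to index \(n\) fails, as you note: the later entries, which are even but nonzero, are pushed past \(n\). The repairs you sketch do not follow from counting. Pigeonholing \(u\) with \(u(n)=1\) against \(v\) with \(v(n)=0\), or \(u\) against \(u+\delta\), requires the image of one family of moment vectors to meet a translate of another. The pigeonhole principle only tells you that a single map is non-injective. With \(\delta=e_n\), the required intersection is essentially a restatement of the existence you are trying to prove. So the argument stops short of the statement.

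\textbf{The missing observation is that no control modulo \(2\) is needed.} It suffices that the \emph{last nonzero} entry of \(\psi\) be odd, and this is automatic when \(\psi\in\{-1,0,1\}^{n+1}\). Your own argument gives this with \(B=1\):
\begin{itemize}
\item The \(2^{n+1}\) vectors \(u\in\{0,1\}^{n+1}\) map to at most \(2^{O(d^2\log n)}\) moment vectors.
\item So for \(d\le c\sqrt{n/\log n}\) two of them collide. Their difference is a nonzero annihilator with entries in \(\{-1,0,1\}\) and \(\ell^1\)-norm at most \(n+1\).
\item Its last nonzero entry is \(\pm1\). Translating that entry to index \(n\) moves no nonzero entry out of \(\{0,\dots,n\}\).
\item This gives \(\psi(n)=\pm1\) and \(\norm{\psi}_1\le n+1<10n\).
\end{itemize}

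This is how the paper finishes (Claim~\ref{clm:shortvec} and Lemma~\ref{lem:rotation}). The paper gets \(\norm{\overline{\psi}}_\infty<2\), hence \(=1\), from the Bombieri--Vaaler form of Siegel's lemma (Theorem~\ref{thm:linfvec}) together with the determinant estimate of Lemma~\ref{lem:symdet}. The \(B=1\) pigeonhole is a more elementary, self-contained replacement for both of those ingredients.
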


The lower bound for symmetric torus polynomials is higher than the upper bound for asymmetric torus polynomials approximating \(\AND\).
Hence, as a corollary, we prove that symmetric torus polynomials are weaker than their asymmetric counterparts, see Corollary~\ref{corr:symsep}.
Moreover, this shows that a symmetrization based approach is unlikely to work for Conjecture~\ref{conj:asymlb}.

We also strengthen~\cite[Corollary 23]{bhrushundi2019torus}, and prove stronger degree lower bounds for smaller error.
If one follows the proof of~\cite[Corollary 23]{bhrushundi2019torus}, for symmetric torus polynomials that \(\frac{1}{n^2}\)-approximate \(\maj\), the lower bound remains the same, i.e. \(\Omega\left(\sqrt{\frac{n}{\log(n)}}\right)\).
We are able to prove \(\Omega\left(\sqrt{n}\right)\) as the lower bound in this case, strictly improving the degree, albeit by a log factor.
In fact, we are able to prove an error-degree trade-off, see Theorem~\ref{thm:symtradeoff} for the exact statement.

\subsubsection*{Our Method}

We use a linear programming based approach to prove our lower bound results.
This approach, based on duality in linear programs, allows us to find a witness that certifies the non-existence of a torus polynomial approximation.
We describe a broad outline of the method below.

Consider a torus polynomial \(P\), of degree at most \(d\), that \(\varepsilon\)-approximates \(f\).
Then, there exists some integer function \(Z : \{0, 1\}^n \to \mathbb{Z}\), such that \(P(a)\) is at most \(\varepsilon\) away from \(Z(a) + \frac{f(a)}{2}\) for any \(a \in \{0, 1\}^n\).
For each \(a\), we can write \(P(a)\) as a particular linear combination of its coefficients, hence, the condition above is a linear constraint.
Here, we make a crucial choice, with respect to how we treat \(Z\).
If we treat each \(Z(a)\) as an integer-valued variable, we get a system of linear Diophantine inequalities, that are much harder to handle.
Instead, we treat each \(Z(a)\) as an indeterminate, and write one linear program for each possible function \(Z\).

Thereby, we get an infinite family of linear programs, such that \(P\) exists if and only if some program from the family is feasible.
Hence, to prove that \(P\) does not exist, we need to prove that each program is infeasible, for which we look at the dual of these programs.
Using strong duality in linear programs, \(P\) does not exist if and only if each of the dual programs is feasible.

Now, given a function \(Z\), the dual program we obtain is as follows:
For each \(n\), and each degree \(d\), we have a matrix \(M(n, d)\), comprising the evaluations of all monomials of degree at most \(d\) on each Boolean point.
The set of solutions for the dual program consists of the nullspace of \(M(n, d)\), i.e., vectors \(\gamma\) such that \(M(n, d) \gamma = 0\).
Here, \(\gamma\) is a vector with \(2^n\) many real entries, with each entry \(\gamma_a\) indexed by a Boolean point \(a\).
A vector \(\gamma\) is a feasible solution for the dual, if it satisfies \(\abs*{\sum_{a \in \{0, 1\}^n} \gamma_a \left(Z(a) + \frac{f(a)}{2}\right)} > \varepsilon \sum_{a \in \{0, 1\}^n} \abs{\gamma_a}\).
For a detailed explanation on how we obtain the dual, please refer to Theorem~\ref{thm:asymconvert}.

With this method, our plan to prove Conjecture~\ref{conj:asymlb} is as follows.
Fix \(n\), and any \(d = o\left(\sqrt{\frac{n}{\log(n)}}\right)\), with \(f = \maj\) and \(\varepsilon = \frac{1}{20n}\).
For any function \(Z : \{0, 1\}^n \to \mathbb{Z}\), we plan to find a feasible solution \(\gamma\) for the dual corresponding to \(Z\).
To start, we find such a feasible solution if \(\abs{Z(a)}\) exceeds an upper bound for any \(a \in \{0, 1\}^n\), where the upper bound depends on \(a\).
Then, we show how to infer the values of \(Z(a)\) for each \(a\) with Hamming weight \(\abs{a} \geq (d+1)^2\), when all \(Z(a)\) for \(\abs{a} < (d+1)^2\) are fixed.
In other words, if \(Z(a)\) does not take the inferred value for some \(a\), we find a feasible solution for the dual corresponding to \(Z\).
We propose to continue this plan further, by finding more feasible solutions to incrementally rule out all possible \(Z\)s.

Our method shares some similarities to how dual polynomials are used to prove lower bounds for real polynomial approximations.
Each vector \(\gamma\) in the nullspace of \(M(n, d)\) is in fact a dual polynomial, with \emph{pure high-degree} more than \(d\).
We use a geometric perspective, as we find it more useful, especially for our lower bound results.

\subsubsection*{Organization}
We discuss some preliminaries, and a general method for proving torus polynomial lower bounds, in Section~\ref{sec:prelims}.
Based on this method, we propose a plan to prove Conjecture~\ref{conj:asymlb} in Section~\ref{sec:majplan}.
Our lower bound results for asymmetric torus polynomials appear in Section~\ref{sec:lb}.
Finally, we prove lower bounds for symmetric torus polynomials in Section~\ref{sec:symlb}.

\section{Preliminaries}
\label{sec:prelims}
We consider natural numbers without including \(0\), and denote it by \(\mathbb{N} = \{1, 2, \ldots\}\).
For \(n \in \mathbb{N}\) and \(d \in \mathbb{N}\), we use the following notation for brevity:
\begin{align*}
  \binom{n}{\leq d} &= \sum_{i \leq d} \binom{n}{i} &
  \binom{n}{> d} &= \sum_{i > d} \binom{n}{i} \\
  [n] &= \{1, \ldots, n\} &
  [n]^* &= \{0, \ldots, n\} \\
  \binom{[n]}{\leq d} &= \{S : S \subseteq [n], \abs{S} \leq d\} &
  \binom{[n]}{> d} &= \{S : S \subseteq [n], \abs{S} > d\}
\end{align*}

\subsection{Sets and Boolean Points}

We identify \(2^{[n]}\) with \(\{0, 1\}^n\) in the natural way, as follows.
For \(S \subseteq [n]\), the corresponding Boolean point has a \(1\) at position \(i\) if and only if \(i\) is present in \(S\).
This defines a bijection between \(2^{[n]}\) and \(\{0, 1\}^n\).
Using this bijection, we will often interpret a set \(S \subseteq [n]\) as a Boolean point, and a Boolean point \(a \in \{0, 1\}^n\) as a set, making the interpretation explicit wherever it is not clear from the context.
\(\abs{a}\) denotes the Hamming weight of \(a \in \{0, 1\}^n\), which also equals its size when considered as a set.

\subsection{Linear Algebra}

Over the reals \(\mathbb{R}\), we denote the set of matrices of size \(m \times n\) by \(\mathcal{M}_{m \times n}(\mathbb{R})\).
For a matrix \(M \in \mathcal{M}_{m \times n}(\mathbb{R})\), we denote its nullspace by \(\nulls(M) = \{\gamma \in \mathbb{R}^n : M \gamma = 0\}\).
For a vector \(\gamma \in \mathbb{R}^n\), we denote its \(\ell^p\)-norm by \(\norm{\gamma}_p = (\sum_{i=1}^n \abs{\gamma_i}^p)^{\frac{1}{p}}\).
Of particular interest to us are the \(\ell^1\)-norm and the \(\ell^2\)-norm, defined for \(p = 1, 2\) respectively.
We will also consider the \(\ell^\infty\)-norm, defined as \(\norm{\gamma}_\infty = \max_{i=1}^n \abs{\gamma_i}\).

\subsection{Our Method for Torus Polynomial Lower Bounds}

Now, we describe a general method for proving lower bounds on torus polynomial approximations.
Fix \(n \in \mathbb{N}\) and \(d \in \mathbb{N}\), such that \(d < n\).
Also, fix a Boolean function \(f : \{0, 1\}^n \to \{0, 1\}\), and an error of approximation \(\varepsilon < \frac{1}{4}\).
We start by defining a family of set-inclusion matrices that will be relevant for the method.
For each monomial of degree at most \(d\) over \(n\) variables, the matrix contains one row, and the row encodes the evaluation of the monomial over all Boolean points.

\begin{construction}
  \label{cons:matm}
  Define the matrix \(M(n, d)\) of size \(\binom{n}{\leq d} \times 2^n\) as follows.
  Its rows are indexed by elements of \(\binom{[n]}{\leq d}\), and columns by elements of \(2^{[n]}\).
  The entries for \(1 \leq i \leq \binom{n}{\leq d}, 1 \leq j \leq 2^n\) are:
  \[
    M_{i, j} = 1_{S_i \subseteq S_j}
  \]
  Here, \(1_{S_i \subseteq S_j}\) is an indicator function, evaluating to \(1\) if \(S_i \subseteq S_j\), and zero otherwise.
\end{construction}

In the following statement, we convert the question of torus polynomial lower bounds to an existence based question.
The proof of the statement is very similar to that for the method of dual polynomials (see~\cite{bun2022approximate}), we present it here for the sake of completeness.

\begin{theorem}
  \label{thm:asymconvert}
  The following are equivalent for any \(n \in \mathbb{N}, d \in \mathbb{N}\) such that \(d < n\), \(\varepsilon < \frac{1}{4}\) and \(f : \{0, 1\}^n \to \{0, 1\}\):

  \begin{itemize}
    \item Any torus polynomial that \(\varepsilon\)-approximates \(f\) has degree more than \(d\).
    \item For any \(Z : \{0, 1\}^n \to \mathbb{Z}\), there exists a vector \(\gamma \in \nulls(M(n, d))\), such that:
      \[\abs*{\ip*{Z + \frac{f}{2}, \gamma}} > \varepsilon \norm{\gamma}_1\]
  \end{itemize}
\end{theorem}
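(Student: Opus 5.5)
The plan is to fix an arbitrary \(Z : \{0, 1\}^n \to \mathbb{Z}\) and write down the linear program asking for a degree-\(d\) polynomial \(P\) with \(\abs{P(a) - Z(a) - \frac{f(a)}{2}} \leq \varepsilon\) for all \(a\). The key observation is that, since \(P\) can be taken multilinear, the vector of evaluations of \(P\) on \(\{0, 1\}^n\) is exactly \(M(n, d)^T c\), where \(c \in \mathbb{R}^{\binom{n}{\leq d}}\) is the coefficient vector of \(P\). This holds because the monomial \(X_S\) evaluates to \(1_{S \subseteq a}\). Hence a torus polynomial of degree at most \(d\) that \(\varepsilon\)-approximates \(f\) exists if and only if, for some \(Z\), the set
\[
  \left\{ c : \norm*{M(n, d)^T c - \left(Z + \frac{f}{2}\right)}_\infty \leq \varepsilon \right\}
\]
is nonempty. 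Writing \(w = Z + \frac{f}{2}\) and \(U = \bspan(M(n, d))\) (the column space of \(M^T\)), feasibility is equivalent to \(\operatorname{dist}_\infty(w, U) \leq \varepsilon\). The two bullets are therefore equivalent once we show, for each fixed \(w\), that \(\operatorname{dist}_\infty(w, U) > \varepsilon\) holds exactly when some \(\gamma \in \nulls(M(n, d))\) has \(\abs{\ip{w, \gamma}} > \varepsilon \norm{\gamma}_1\).

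For the easy direction, suppose \(\gamma \in \nulls(M)\) and \(u = M^T c \in U\). Then \(\ip{u, \gamma} = \ip{c, M\gamma} = 0\), so by Hölder
\[
  \abs{\ip{w, \gamma}} = \abs{\ip{w - u, \gamma}} \leq \norm{w - u}_\infty \norm{\gamma}_1 .
\]
If \(\abs{\ip{w, \gamma}} > \varepsilon \norm{\gamma}_1\), then every \(u \in U\) satisfies \(\norm{w - u}_\infty > \varepsilon\), so no feasible \(P\) exists for this \(Z\). Ranging over all \(Z\), this shows that the second bullet implies the first.

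For the converse, we use LP duality. The cleanest route is the dual-norm characterization of distance to a subspace:
\[
  \operatorname{dist}_\infty(w, U) = \max\left\{ \ip{w, \gamma} : \gamma \in U^\perp, \norm{\gamma}_1 \leq 1 \right\}.
\]
Here \(U^\perp = \nulls(M(n, d))\). This is a finite-dimensional statement and can be proved in either of two ways. One is via Hahn–Banach, or equivalently separating the compact convex set \(w + \varepsilon B_\infty\) from the subspace \(U\). The other is to write the primal LP
\[
  \min t \quad \text{s.t.} \quad -t \leq (M^T c - w)_a \leq t \ \text{ for all } a,
\]
whose dual, after introducing multipliers \(\alpha_a, \beta_a \geq 0\) and setting \(\gamma = \beta - \alpha\), becomes exactly the maximization above. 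Strong duality applies because the primal is feasible and bounded below by \(0\), so the minimum is attained. Thus if \(\operatorname{dist}_\infty(w, U) > \varepsilon\), a maximizing \(\gamma\) satisfies \(\ip{w, \gamma} > \varepsilon \geq \varepsilon \norm{\gamma}_1\), which gives the second bullet for this \(Z\).

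I expect the only real care to be in this duality step. Specifically, the strict inequality has to be derived from attainment of the primal minimum, which follows from closedness of \(U\) (or from the LP optimum being attained). We also need \(\gamma \neq 0\), and this is automatic since \(\ip{w, \gamma} > 0\). The constraint \(\varepsilon < \frac{1}{4}\) and the condition \(d < n\) play no role in the argument beyond making the statement meaningful.
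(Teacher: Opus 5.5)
Your proposal is correct and follows the same route as the paper: for each fixed \(Z\) you write the \(\ell^\infty\)-approximation LP over the coefficient vector, take its dual (whose feasible set is \(\nulls(M(n,d))\) with an \(\ell^1\) normalization), and apply strong LP duality. Your version is more careful than the paper's sketch. You prove the easy direction explicitly via H\"older, you use the normalization \(\norm{\gamma}_1 \leq 1\) rather than \(=1\), and you invoke attainment of the primal optimum to get the strict inequality. The underlying argument is the same as the paper's.
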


\begin{proof}
  To start the proof, assume that there exists a torus polynomial \(P\) of degree at most \(d\), that \(\varepsilon\)-approximates \(f\).
  As per Definition~\ref{def:torus}, there exists \(Z : \{0, 1\}^n \to \mathbb{Z}\), such that \(P(a)\) is within \(\varepsilon\) distance from \(Z(a) + \frac{f(a)}{2}\) for each Boolean point \(a\).
  We plan to use linear programming to capture the conditions that \(P\) must satisfy.
  Denote the coefficients of \(P\) as a vector \(\alpha \in \mathbb{R}^{\binom{n}{\leq d}}\), where each entry of \(\alpha\), indexed by a set \(S\) of size at most \(d\), acts as a variable in the program.
  For \(a \in \{0, 1\}^n\), the value of \(P(a) = \sum_{S \in \binom{[n]}{\leq d}} \alpha_S \cdot 1_{S \subseteq a}\).
  Note that we can rewrite this expression as the inner-product of \(\alpha\) with the column of \(M(n, d)\) indexed by \(a\).
  We treat \(Z\) as an indeterminate, and collect the constraints over all \(a \in \{0, 1\}^n\) as the following linear program:

  \begin{center}
    \begin{tabular}{c c c}
      \(\min_{\alpha}\) & \(\epsilon\) \\
      such that & \(\abs*{\alpha^T M(n, d) - Z - \frac{f}{2}} \leq \epsilon\) \\
                & \(\alpha \in \mathbb{R}^{\binom{n}{\leq d}}\)
    \end{tabular}
  \end{center}

  Now, \(P\) is a torus polynomial that \(\varepsilon\)-approximates \(f\) if and only if the program above achieves an objective of \(\varepsilon\) for some \(Z\).
  Hence, our goal is to prove that all the programs above achieve an optimal objective value more than \(\varepsilon\).
  We use standard tools from the theory of linear programming, and write the dual of these programs as follows.

  \begin{center}
    \begin{tabular}{c c c}
      \(\max_{\gamma}\) & \(\ip*{\gamma, Z + \frac{f}{2}}\) & \\
      such that & \(\gamma \in \nulls(M(n, d))\) & \\
                & \(\norm{\gamma}_1 = 1\) & 
    \end{tabular}
  \end{center}

  Using strong duality, the non-existence of \(P\) is equivalent to the statement that each dual above achieves an objective more than \(\varepsilon\).
  To finish the proof, note that the conditions \(\ip*{\gamma, Z + \frac{f}{2}} > \varepsilon\) and \(\norm{\gamma}_1 = 1\) together are equivalent to the condition \(\ip*{\gamma, Z + \frac{f}{2}} > \varepsilon \norm{\gamma}_1\).
  Finally, if \(\abs*{\ip*{\gamma, Z + \frac{f}{2}}} > \varepsilon \norm{\gamma}_1\), then either \(\ip*{\gamma, Z + \frac{f}{2}} > \varepsilon \norm{\gamma}_1\) or \(\ip*{-\gamma, Z + \frac{f}{2}} > \varepsilon \norm{-\gamma}_1\) holds, which suffices for our purpose.
  This completes the proof.
\end{proof}

\section{Plan for Majority}
\label{sec:majplan}
In this section, we set up a program towards proving Conjecture~\ref{conj:asymlb}.
Fix the function for this section \(f = \maj\), some large enough \(n \in \mathbb{N}\), any \(d \leq O\left(\sqrt{\frac{n}{\log(n)}}\right)\), and \(\varepsilon = \frac{1}{20n}\).
Call a function \(Z : \{0, 1\}^n \to \mathbb{Z}\) \emph{good}, if there is a witness \(\gamma \in \nulls(M(n, d))\) such that:
\[
  \abs*{\ip*{Z + \frac{\maj}{2}, \gamma}} > \varepsilon \norm*{\gamma}_1
\]
In other words, \(Z\) is good, if we can prove that the dual corresponding to \(Z\) is feasible.
Conjecture~\ref{conj:asymlb} posits that all \(Z\)s are good.
In this language, the lower bound result by Bhrushundi, Hosseini, Lovett and Rao~\cite[Corollary 3.3]{bhrushundi2019torus} states that all \emph{symmetric} \(Z\)s are good.

Now, the challenge is that we need to argue about infinitely many \(Z\)s, and find a feasible solution \(\gamma\) from the vector space \(\nulls(M(n, d))\), which is also infinite.
The latter is easy to fix using the theory of linear programming, as we can choose \(\gamma\) from the set of \emph{basic solutions}, which is a finite set.
Therefore, we focus on the set of good \(Z\)s, and show that it is cofinite.
In other words, we show that all \(Z\)s, except for a finite set, are good.

Towards this goal, we first make an observation that allows us to consider \(Z\)s as functions over a smaller domain.
Consider two functions \(Z\) and \(Z'\), such that \(Z - Z' \in \bspan(M)\).
Then, \(\ip{Z, \gamma}\) equals \(\ip{Z', \gamma}\) for any \(\gamma \in \nulls(M)\).
Hence, if \(Z\) is good, then any \(Z'\) such that \(Z - Z' \in \bspan(M)\) is also good.
Therefore, define the relation \(Z \sim Z'\) if \(Z - Z' \in \bspan(M)\), which is an equivalence relation.
We can pick a representative \(Z\) from each equivalence class of the relation, and it suffices to prove that the representative \(Z\) is good.
The following statement formalizes this intuition.

\begin{lemma}
  \label{lem:zerozs}
  Consider a polynomial \(P \in \mathbb{R}[X_1, \ldots, X_n]\) of degree at most \(d\).
  Then, there exists a polynomial \(P' \in \mathbb{R}[X_1, \ldots, X_n]\) of degree at most \(d\), satisfying the following conditions:
  \begin{itemize}
    \item If \(P\) \(\varepsilon\)-approximates \(f\), then \(P'\) also \(\varepsilon\)-approximates \(f\).

    \item If \(Z'\) denotes the integer part of \(P'\), then \(Z'(a) = 0\) for any \(a\) with \(\abs{a} \leq d\).
  \end{itemize}
\end{lemma}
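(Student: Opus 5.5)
Given $P$ of degree at most $d$ with integer part $Z$, we will subtract from $P$ an integer-valued polynomial of degree at most $d$ that agrees with $Z$ on all points of weight at most $d$. Let $Q$ be the unique multilinear polynomial of degree at most $d$ with $Q(a) = Z(a)$ for every $a$ with $\abs{a} \leq d$, and set $P' = P - Q$. Then $P'$ has degree at most $d$ and, provided $Q$ is integer-valued on all of $\{0,1\}^n$, the integer part of $P'$ is $Z' = Z - Q$, which vanishes on weights at most $d$.

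Concretely, write $Q = \sum_{\abs{S} \leq d} \beta_S \prod_{i \in S} X_i$, so that $Q(a) = \sum_{S \subseteq a, \abs{S}\leq d} \beta_S$. For $\abs{a} \leq d$ this is a system indexed by $\binom{[n]}{\leq d}$ whose matrix is the square restriction of $M(n,d)$ to the columns of weight at most $d$. By Möbius inversion on the subset lattice, it has the explicit integer solution
\[
  \beta_S = \sum_{T \subseteq S} (-1)^{\abs{S} - \abs{T}} Z(T),
\]
and the solution is unique because the matrix is unitriangular with respect to inclusion. Each $\beta_S$ is an integer, since $Z$ is integer-valued. Consequently $Q(a)$ is an integer for every $a \in \{0,1\}^n$, being a sum of integers $\beta_S$, and $Q(a) = Z(a)$ for $\abs{a} \leq d$.

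For the first condition, fix any $a$. Then
\[
  \abs*{P'(a) - \tfrac{f(a)}{2} - (Z(a) - Q(a))} = \abs*{P(a) - \tfrac{f(a)}{2} - Z(a)} \leq \varepsilon,
\]
so $P'$ $\varepsilon$-approximates $f$, with integer function $Z' = Z - Q \in \mathbb{Z}$. Since $\varepsilon < \frac14$, the integer closest to $P'(a) - \frac{f(a)}{2}$ is unique, so $Z'$ is well defined as the integer part of $P'$. It vanishes on all $a$ with $\abs{a} \leq d$ by the choice of $Q$.

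The only step needing care is the integrality of $Q$ on points of weight greater than $d$, since only then is $Z - Q$ a legitimate integer function. This is exactly what integrality of the inversion coefficients $\beta_S$ guarantees.
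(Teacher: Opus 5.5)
Your proof is correct and is essentially the paper's argument. The paper works layer by layer for $\abs{a} = 0, 1, \ldots, d$, subtracting the integer $\lfloor P_i(a) - \frac{f(a)}{2} \rceil$ times the monomial $\prod_{j \in a} X_j$; unwinding that induction gives exactly your integer coefficients $\beta_S = \sum_{T \subseteq S} (-1)^{\abs{S}-\abs{T}} Z(T)$, so your closed-form M\"{o}bius inversion replaces the inductive rounding, and integrality of the $\beta_S$ plays the role of the paper's invariant that $P - P_i$ is integer-valued on Boolean points.
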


\begin{proof}
  The proof follows from an inductive procedure, where we start with \(\abs{a} = 0\), and end after considering \(\abs{a} = d\).
  The initial polynomial \(P\) is of the form
  \[P = \sum_{S \in \binom{[n]}{\leq d}} c_S X^S\]
  As we perform our inductive procedure, after considering the case of \(\abs{a} = i\), we will obtain a polynomial \(P_i\), which will satisfy the following conditions:
  \begin{itemize}
    \item \(P_i\) has degree at most \(d\).
    \item For any \(a \in \{0, 1\}^n\), \(P(a) - P_i(a)\) is an integer.
    \item If \(P\) is a torus polynomial that \(\varepsilon\)-approximates \(f\), then, for any \(a \in \{0, 1\}^n\) with \(\abs{a} \leq i\),
      \[P_i(a) \in \left[-\varepsilon + \frac{f(a)}{2}, \varepsilon + \frac{f(a)}{2}\right].\]
  \end{itemize}

  For a point \(a \in \{0, 1\}^n\), \(P(a) = \sum_{S \subseteq a} c_S\).
  The crucial observation for this inductive procedure is that, if \(\abs{a} \leq i\), then, the expression for \(P(a)\) only depends on \(S\) such that \(\abs{S} \leq i\).
  Therefore, any change to a coefficient \(c_S\) with \(\abs{S} = i\) does not affect \(P(a)\) with \(\abs{a} < i\).

  For the base case, consider the all zeros point, \(a = 0\).
  Modify \(P\) by subtracting \(m_0 = \left\lfloor P(0) - \frac{f(0)}{2} \right\rceil\) from it, to get \(P_0\).
  Note that \(P_0\) is as desired.

  For the inductive step, say \(P_i\) satisfies all the three conditions listed above.
  For each point \(a\) with \(\abs{a} = i+1\), subtract \(m_a = \left\lfloor P_i(a) - \frac{f(a)}{2} \right\rceil \prod_{i \in a} X_i\) from \(P_i\).
  After subtracting each \(m_a\), denote the obtained polynomial as \(P_{i+1}\).
  One can verify easily that \(P_{i+1}\) satisfies the required conditions for the inductive procedure.

  The final polynomial \(P' = P_d\) has the desired properties to complete the proof.
\end{proof}

Henceforth, we will always assume that \(Z(a) = 0\) for any \(a \in \{0, 1\}^n\) with \(\abs{a} \leq d\).
Now, we begin the task of proving that certain \(Z\)s are good.
We will produce the vectors we need for this task using the following construction, which is only a minor generalization of the construction from~\cite{bun2015dual}.

\begin{construction}
  \label{cons:genray}
  Construct a vector \(\gamma\) as follows.

  \begin{itemize}
    \item[] \textbf{Input}: 
      
      \begin{itemize}
        \item two natural numbers \(n \in \mathbb{N}\) and \(d \in [n-1]^*\),
        \item two subsets \(S_1, S_2 \subseteq [n]\), such that \(S_1 \subseteq S_2\), and \(\abs{S_2 \setminus S_1} \geq d+1\),
        \item a set \(I \subseteq [\abs{S_2 \setminus S_1}]^*\) of size \(\abs{I} = d+2\).
      \end{itemize}
    \item[] \textbf{Output}: A vector \(\gamma \in \mathbb{R}^{2^n}\).
    \item[] \textbf{Construction}:
      First, define a univariate polynomial \(q_{I}(t) = \prod_{i \in [k]^* \setminus I} (t-i)\), where \(k = \abs{S_2 \setminus S_1}\).
      For any set \(T\), if \(S_1 \subseteq T \subseteq S_2\), keep \(\gamma_T = (-1)^{\abs{T}} q_{I}(\abs{T \setminus S_1})\).
      Otherwise, keep \(\gamma_T = 0\).

      Output \(\gamma\).
  \end{itemize}
\end{construction}

We claim that the construction produces a vector in \(\nulls(M(n, d))\).
The proof is similar to the proof of~\cite[Lemma 31]{bun2022approximate}, we present it here for the sake of completeness.

\begin{lemma}
  \label{lem:genray}
  Consider any \(n \in \mathbb{N}\), \(d \in [n-1]^*\), \(S_1, S_2 \subseteq [n]\) such that \(S_1 \subseteq S_2\) and \(\abs{S_2 \setminus S_1} \geq d+1\), and \(I \subseteq [\abs{S_2 \setminus S_1}]^*\) of size \(\abs{I} = d+2\).
  Then, Construction~\ref{cons:genray}, on input \(n, d, S_1, S_2\) and \(I\), outputs a vector \(\gamma \in \nulls(M(n, d))\).
\end{lemma}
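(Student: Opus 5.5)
We must show that for every set \(S\) with \(\abs{S} \leq d\), the row of \(M(n, d)\) indexed by \(S\) is orthogonal to \(\gamma\), that is, \(\sum_{T \supseteq S} \gamma_T = 0\). Since \(\gamma\) is supported on the interval \(\{T : S_1 \subseteq T \subseteq S_2\}\), this sum only involves sets \(T\) with \(S_1 \cup S \subseteq T \subseteq S_2\). If \(S \not\subseteq S_2\), the sum is empty and equals zero. So we may assume \(S \subseteq S_2\), and set \(A = S \setminus S_1 \subseteq S_2 \setminus S_1\), with \(j = \abs{A} \leq \abs{S} \leq d\).

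Write \(k = \abs{S_2 \setminus S_1}\) and parametrize \(T = S_1 \cup U\) with \(A \subseteq U \subseteq S_2 \setminus S_1\). Then \(\abs{T \setminus S_1} = \abs{U}\), and the sum becomes
\[
  (-1)^{\abs{S_1}} \sum_{A \subseteq U \subseteq S_2 \setminus S_1} (-1)^{\abs{U}} q_I(\abs{U}) = (-1)^{\abs{S_1}} \sum_{u=j}^{k} \binom{k-j}{u-j} (-1)^{u} q_I(u).
\]
Substituting \(u = j + r\), this equals, up to sign, \(\sum_{r=0}^{k-j} \binom{k-j}{r} (-1)^r q_I(j+r)\). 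This is the \((k-j)\)-th finite difference of the polynomial \(r \mapsto q_I(j + r)\), evaluated at \(0\).

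The key fact is that the \(m\)-th finite difference of a polynomial of degree less than \(m\) vanishes identically. The polynomial \(q_I\) has degree \(\abs{[k]^* \setminus I} = (k+1) - (d+2) = k - d - 1\), using \(\abs{I} = d+2\) and \(I \subseteq [k]^*\). Since \(j \leq d\), we get \(k - j \geq k - d > k - d - 1 = \deg q_I\). Hence the finite difference is zero, every row of \(M(n, d)\) is orthogonal to \(\gamma\), and \(\gamma \in \nulls(M(n, d))\).

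The main step needing care is the degree count together with the edge cases. The hypothesis \(k \geq d+1\) ensures \(\abs{I} = d+2 \leq k+1\), so that \(I \subseteq [k]^*\) is possible and the degree \(k-d-1\) is nonnegative; an empty product gives \(q_I \equiv 1\), which is consistent. I would prove the finite-difference fact as a standard lemma, for instance by checking it on the basis \(\binom{r}{i}\) for \(i < m\), or by induction using the identity \(\sum_r \binom{m}{r}(-1)^r r^i = 0\) for \(i < m\). One could also remark that \(\gamma \neq 0\), since \(q_I\) is nonzero at points of \(I\), although the statement does not require this.
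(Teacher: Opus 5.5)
Your proof is correct and follows the paper's argument: the paper also reduces each row's inner product to an alternating sum \(\sum_t (-1)^t \binom{n-\deg m}{t-\deg m} q_I(t)\) and makes it vanish using a folklore fact about polynomials of degree at most \(n-d-1\), which is exactly your finite-difference lemma. The only difference is that you carry out the general case \(S_1 \subseteq S_2\) explicitly, via \(A = S \setminus S_1\), the case \(S \not\subseteq S_2\), and the shift \(u = j + r\); the paper treats only \(S_1 = \emptyset\), \(S_2 = [n]\) and leaves the general case to the reader.
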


\begin{proof}[Proof of Lemma~\ref{lem:genray}]
  We follow along the proof of~\cite[Lemma 31]{bun2022approximate}.
  For simplicity, consider the case when \(S_1 = \emptyset, S_2 = [n]\).
  Choose any set \(I \subseteq [n]^*\) of size \(\abs{I} = d+2\).
  Construct \(\gamma\) as described in the statement, using Construction~\ref{cons:genray}.
  We need to prove that \(M \gamma = 0\).

  Consider any row of \(M\) as a vector \(\textbf{m}\).
  We need to prove that \(\ip{\textbf{m}, \gamma} = 0\).
  Recall the construction for \(M\), from Construction~\ref{cons:matm}.
  Say \(\textbf{m}\) is the row of \(M\) indexed by \(S \subseteq [n]\).
  Then, \(\textbf{m}\) corresponds to the evaluations of the monomial \(\prod_{i \in S} x_i\) over all Boolean points.
  Note that \(\abs{S} \leq d\), hence any monomial we consider here has degree at most \(d\).
  We abuse notation, and use \(\textbf{m}\) to denote this monomial as well.

  To prove \(\ip{\textbf{m}, \gamma} = 0\), we expand this expression.
  \[
    \ip{\textbf{m}, \gamma} = \sum_{S \subseteq [n]} \textbf{m}(S) \gamma(S)
  \]
  Now, notice that \(\gamma\) is a symmetric vector, i.e. \(\gamma_{S_1} = \gamma_{S_2}\) if \(\abs{S_1} = \abs{S_2}\).
  Hence, we can rewrite the expression above as:
  \[
    \ip{\textbf{m}, \gamma} = \sum_{t \in [n]^*} \gamma_{[t]} \sum_{S \in \binom{[n]}{t}} m(S)
  \]
  Next, a straightforward calculation yields \(\sum_{S \in \binom{[n]}{t}} m(S) = \binom{n - \deg(m)}{t - \deg(m)}\).
  Moreover, \(\gamma_{[t]} = (-1)^t q_{I}(t)\) as per the construction of \(\gamma\).
  Therefore, we get the following expression:
  \[
    \ip{\textbf{m}, \gamma} = \sum_{t \in [n]^*} (-1)^t \binom{n-\deg(m)}{t-\deg(m)} q_{I}(t)
  \]
  The following commonly-known fact (cf.~\cite[Fact 30]{bun2022approximate}) allows us to complete the proof.
  \begin{fact}[Folklore]
    \label{fact:degzerosum}
    For any univariate polynomial \(p \in \mathbb{R}[t]\) of degree at most \(n-d-1\), and any \(d' \leq d\), the following holds:
    \[
      \sum_{t=0}^n (-1)^t \binom{n-d'}{t-d'} p(t) = 0
    \]
  \end{fact}
  To see why this fact is useful, just note that \(\deg(m) \leq d\), and \(q_I\) has degree \(n-d-1\).

  The general case of \(S_1, S_2, I\) follows similarly.
  We leave the proof to the reader.
\end{proof}

Using these vectors constructed above, we prove an upper bound on the value of each \(Z(a)\).
In other words, for each \(a\) we describe an upper bound, such that if \(Z(a)\) violates this bound, then \(Z\) is good.
The reader may think of this as an upper bound on an appropriately defined weighted \(\ell^\infty\)-norm of \(Z\).

\begin{theorem}
  \label{thm:zbounds}
  Choose a large enough \(n \in \mathbb{N}\).
  Consider a torus polynomial \(P\), of degree at most \(d < \frac{n}{2}\), that approximates \(f = \maj\) within an error of \(\varepsilon\).
  Then, the following holds for any \(a \in \{0, 1\}^n\) and the function \(Z\) corresponding to \(P\):
  \[\abs{Z(a)} \leq \varepsilon 2^{d+1} \binom{\abs{a}}{d+1}\]
  In other words, if \(\abs{Z(a)} > \varepsilon 2^{d+1} \binom{\abs{a}}{d+1}\) for some \(a \in \{0, 1\}^n\), then \(Z\) is \emph{good}.

  The statement above holds for any \(f\) such that \(f(a) = 0\) for each \(a \in \binom{[n]}{\leq d}\), e.g. \(f = \AND\)\footnote{As noted by an anonymous ITCS 2026 reviewer.}.
\end{theorem}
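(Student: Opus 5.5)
The plan is to exhibit, for each fixed \(a\) with \(\abs{a} = k\), one explicit witness \(\gamma \in \nulls(M(n, d))\) from Construction~\ref{cons:genray}. If \(k \leq d\), Lemma~\ref{lem:zerozs} already gives \(Z(a) = 0\) and there is nothing to prove, so assume \(k \geq d+1\). I take \(S_1 = \emptyset\), \(S_2 = a\) and \(I = \{0, 1, \ldots, d\} \cup \{k\}\), which has size \(d+2\). Then
\[
  q_I(t) = \prod_{i=d+1}^{k-1} (t-i),
\]
and \(\gamma_T\) is nonzero only for \(T \subseteq a\) with either \(\abs{T} \leq d\) or \(T = a\). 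Lemma~\ref{lem:genray} guarantees \(\gamma \in \nulls(M(n, d))\).

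The point of this choice is that \(\gamma\) sees only one unknown value of \(Z\). On every \(T\) with \(\abs{T} \leq d\) we have \(Z(T) = 0\) by the normalization of Lemma~\ref{lem:zerozs}, and \(f(T) = 0\) by hypothesis (true for \(\maj\) since \(d < n/2\), and for \(\AND\)). Hence
\[
  \ip*{Z + \frac{f}{2}, \gamma} = \gamma_a \left(Z(a) + \frac{f(a)}{2}\right),
\]
and \(Z\) is good as soon as \(\abs{Z(a) + f(a)/2} > \varepsilon \norm{\gamma}_1 / \abs{\gamma_a}\).

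Next I compute this ratio. First, \(\abs{\gamma_a} = \abs{q_I(k)} = (k-d-1)!\). For \(j \leq d\), \(\abs{q_I(j)} = (k-1-j)!/(d-j)!\), and there are \(\binom{k}{j}\) sets \(T\) of size \(j\). A short manipulation gives
\[
  \frac{\binom{k}{j} \abs{q_I(j)}}{(k-d-1)!} = \binom{k}{j}\binom{k-1-j}{d-j} = \binom{k}{d+1}\binom{d}{j}\frac{d+1}{k-j} \leq \binom{k}{d+1}\binom{d}{j}.
\]
Summing over \(j\) yields
\[
  \frac{\norm{\gamma}_1}{\abs{\gamma_a}} \leq 1 + 2^d\binom{k}{d+1}.
\]
It then remains to show that \(\abs{Z(a)} > \varepsilon 2^{d+1}\binom{k}{d+1}\) forces \(\abs{Z(a) + f(a)/2} > \varepsilon\left(1 + 2^d\binom{k}{d+1}\right)\).

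The main obstacle I expect is this last constant-level comparison. I would use \(\abs{Z(a) + f(a)/2} \geq \abs{Z(a)} - \frac{1}{2} \geq \frac{\abs{Z(a)}}{2}\), which holds because \(Z(a)\) is a nonzero integer. This gives \(\abs{Z(a) + f(a)/2} > \varepsilon 2^d \binom{k}{d+1}\), but it does not by itself absorb the additive \(\varepsilon\) coming from \(\abs{\gamma_a}\). I would recover that slack in two ways. For \(k \geq d+2\), use the discarded factors \(\frac{d+1}{k-j} < 1\); already the \(j = 0\) term saves \(\binom{d}{0}\binom{k}{d+1}\left(1 - \frac{d+1}{k}\right) = \binom{k-1}{d+1} \geq 1\), which pays for the extra \(1\). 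For \(k = d+1\), I would use that \(\abs{Z(a)} - \frac{1}{2} \geq \frac{\abs{Z(a)}}{2}\) is strict unless \(\abs{Z(a)} = 1\), and handle \(\abs{Z(a)} = 1\) directly using \(\varepsilon < \frac{1}{4}\). If this small case does not close, I would adjust \(I\), for instance by dropping one low level, to sharpen the constant.
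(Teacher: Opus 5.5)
Your witness is exactly the paper's ($S_1=\emptyset$, $S_2=a$, $I=[d]^*\cup\{k\}$), and your reduction to comparing $\abs{Z(a)+f(a)/2}$ with $\varepsilon\norm{\gamma}_1/\abs{\gamma_a}$ is correct. The gap is the per-term estimate $\binom{k}{d+1}\binom{d}{j}\frac{d+1}{k-j}\leq\binom{k}{d+1}\binom{d}{j}$. It needs $k-j\geq d+1$, so it fails for every $j>k-d-1$, i.e.\ whenever $k\leq 2d$. At $k=d+1$ every $T\subseteq a$ has $\abs{\gamma_T}=1$, so $\norm{\gamma}_1/\abs{\gamma_a}=2^{d+1}$, not at most $1+2^d$. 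At $d=2$, $k=4$ the ratio is $18>17=1+2^d\binom{k}{d+1}$, so your claimed bound also fails for $k\geq d+2$.

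The valid estimate, which is the one the paper uses, is $\frac{d+1}{k-j}\leq\frac{d+1}{d+1-j}$. It turns the $j$-th term into $\binom{k}{d+1}\binom{d+1}{j}$ and gives $\norm{\gamma}_1/\abs{\gamma_a}\leq 1+(2^{d+1}-1)\binom{k}{d+1}\leq 2^{d+1}\binom{k}{d+1}$. With it, the witness shows $Z$ is good as soon as $\abs{Z(a)+f(a)/2}>\varepsilon 2^{d+1}\binom{k}{d+1}$. That is exactly what the paper's computation establishes, and the paper then reads it as the bound on $\abs{Z(a)}$. The two thresholds coincide whenever $f(a)=0$ or $Z(a)\geq 0$.

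Because the true ratio can reach $2^{d+1}\binom{k}{d+1}$, your halving step $\abs{Z(a)+f(a)/2}\geq\abs{Z(a)}/2$ falls short by a factor $2$ when $f(a)=1$, $Z(a)<0$ and $k$ is small. Neither of your fallbacks closes this:
\begin{itemize}
\item The hypothesis $\varepsilon<\frac14$ gives $\frac12>\varepsilon 2^{d+1}$ only when $d=0$.
\item At $k=d+1$ the set $I$ must be all of $[d+1]^*$, since it has $d+2$ elements inside $[k]^*$, so there is nothing to adjust.
\item For larger $k$, any change to $I$ either drops $k$ (making $\gamma_a=0$) or puts weight on a level in $(d,k)$, where $Z$ is not normalised to $0$, so $Z(a)$ is no longer isolated.
\end{itemize}
Concretely, take $k=d+1$, $f(a)=1$, $Z(a)=-1$ and $2^{-d-2}\leq\varepsilon<2^{-d-1}$. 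Then this witness gives $\abs{\ip{Z+f/2,\gamma}}=\frac12\leq\varepsilon\norm{\gamma}_1$, even though $\abs{Z(a)}>\varepsilon 2^{d+1}$.

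So with this witness, what you can prove in general is the bound on $\abs{Z(a)+f(a)/2}$, hence $\abs{Z(a)}\leq\varepsilon 2^{d+1}\binom{k}{d+1}+\frac12$. Your sharper per-term bound, together with the halving step, does recover the bound on $\abs{Z(a)}$ itself when $f(a)=1$ forces $k\geq 2d+1$, e.g.\ for $\maj$ with $d<n/4$. In that case the paper's estimate handles the points with $f(a)=0$.
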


\begin{proof}
  Consider a set \(S\) of size \(\abs{S} = k \geq d+1\).
  Construct a ray \(\overline{\gamma} \in \mathop{null}(M)\) using Construction~\ref{cons:genray} with \(S_1 = \emptyset, S_2 = S\) and \(I = [d]^* \cup \{k\}\).
  The values we need to compare are \(\ip*{Z + \frac{f}{2}, \overline{\gamma}}\) and \(\varepsilon \norm{\overline{\gamma}}_1\).
  Note that \(Z(a) = f(a) = 0\) for any \(a\) with \(\abs{a} \leq d\).
  Hence, \(\ip*{\overline{\gamma}, Z+\frac{f}{2}} = \overline{\gamma}_S \left(Z(S) + \frac{f(S)}{2}\right)\).
  Therefore, if \(\abs*{Z(S) + \frac{f(S)}{2}} > \frac{\varepsilon \norm{\overline{\gamma}}_1}{\abs{\overline{\gamma}_S}}\), then \(Z\) is good.

  For a set \(T \subseteq S\) of size \(\abs{T} \in \{d+1, \ldots, k-1\}\), \(\overline{\gamma}_T = 0\) by construction.
  Then, \(\abs{\overline{\gamma}_S} = \prod_{i=1}^{k-(d+1)} i = (k-d-1)!\).
  Finally, for a set \(T \subseteq S\) of size \(\abs{T} = t \leq d\), \(\abs{\overline{\gamma}_T} = \prod_{i=d+1}^{k-1} (i-t) = \frac{(k-1-t)!}{(d-t)!}\).
  Note that there are \(\binom{k}{t}\) such sets \(T\).

  Hence, \(\norm{\overline{\gamma}}_1 = (k-d-1)! + \sum_{t=0}^d \binom{k}{t} \frac{(k-1-t)!}{(d-t)!}\).
  Dividing by \(\abs{\overline{\gamma}_S}\), we get:
  \begin{align*}
    \frac{\norm{\overline{\gamma}}_1}{\abs{\overline{\gamma}_S}} &= 1 + \sum_{t=0}^d \binom{k}{t} \frac{(k-1-t)!}{(d-t)!(k-d-1)!} &
                                                                 &= 1 + \sum_{t=0}^d \frac{k!}{t!(d-t)!(k-d-1)!(k-t)} & \\
                                           &= 1 + \binom{k}{d+1} \sum_{t=0}^d \binom{d}{t} \frac{d+1}{k-t} &
                                           &\leq 1 + \binom{k}{d+1} \sum_{t=0}^{d} \binom{d}{t} \frac{d+1}{d+1-t} & \\
                                           &= 1 + \binom{k}{d+1} \sum_{t=0}^d \binom{d+1}{t} &
                                           &= 1 + \binom{k}{d+1} \left(2^{d+1} - 1\right) & \\
                                           &\leq 2^{d+1} \binom{k}{d+1}
  \end{align*}
  This proves the claim as desired.
\end{proof}

With the previous result, we have only a finite set of \(Z\)s remaining, that we need to prove are good.
Now, we argue that we only need to look at a few values of \(Z(a)\) to figure out the rest of them.
This reduces the degrees of freedom for the family of linear programs we study, from \(2^n\) to a small fraction of \(2^n\).
To that effect, we introduce the following definition.

\begin{definition}[\(\varepsilon\)-isolator]
  A vector \(\gamma \in \nulls(M(n, d))\) is defined to be an \(\varepsilon\)-\emph{isolator} for a set \(a \subseteq [n]\) if the following holds:
  Given values of \(Z(b) \in \mathbb{Z}\) for every \(b \subsetneq a\), there is at most one value \(z_a \in \mathbb{Z}\) for \(Z(a)\) such that \(\abs*{\ip*{Z + \frac{f}{2}, \gamma}} \leq \varepsilon \norm{\gamma}_1\).

  In other words, if \(Z(a) \neq z_a\), then \(Z\) is good with \(\gamma\) as the witness.
\end{definition}

We show that for large enough \(n\) and any \(d < \sqrt{n}-1\), any point \(a\) with \(\abs{a} \geq (d+1)^2\) has an \(\varepsilon\)-isolator for \(\varepsilon = \frac{1}{20 n}\).
Note that if \(d = O(\sqrt{n})\) is small enough, then a large fraction of points \(a\) have an \(\varepsilon\)-isolator, hence, reducing the degrees of freedom.
The formal statement is as follows.

\begin{theorem}
  For any large enough \(n \in \mathbb{N}\), any \(d < \sqrt{n}-1\), and \(\varepsilon = \frac{1}{20 n}\), any \(a \in \{0, 1\}^n\) with \(\abs{a} \geq (d+1)^2\) has an \(\varepsilon\)-isolator.
\end{theorem}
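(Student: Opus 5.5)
The plan is to use Construction~\ref{cons:genray} with \(S_1 = \emptyset\) and \(S_2 = a\). Every entry of the resulting \(\gamma\) is then supported on subsets of \(a\), so \(\ip*{Z + \frac{f}{2}, \gamma}\) involves only \(Z(a)\) and the values \(Z(b)\) for \(b \subsetneq a\). The key observation is that an isolator only needs the coefficient of \(Z(a)\) to be large compared with \(\varepsilon\norm{\gamma}_1\). Suppose that
\[
  \abs{\gamma_a} > 2\varepsilon\norm{\gamma}_1 .
\]
Fix all \(Z(b)\) with \(b \subsetneq a\). The quantity \(\ip*{Z+\frac f2,\gamma}\) is then an affine function \(c + \gamma_a Z(a)\) of the integer \(Z(a)\). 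Its values at distinct integers are spaced \(\abs{\gamma_a} > 2\varepsilon\norm{\gamma}_1\) apart. Hence at most one integer \(z_a\) keeps it inside the interval \([-\varepsilon\norm{\gamma}_1, \varepsilon\norm{\gamma}_1]\), whose length is \(2\varepsilon\norm{\gamma}_1\). So the whole task reduces to constructing \(\gamma \in \nulls(M(n,d))\), supported on subsets of \(a\), with \(\norm{\gamma}_1 < \frac{\abs{\gamma_a}}{2\varepsilon} = 10n\abs{\gamma_a}\).

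Let \(k = \abs{a} \geq (d+1)^2\). We choose the node set \(I \subseteq [k]^*\) with \(\abs{I} = d+2\), containing \(k\), and spread out quadratically rather than clustered at \(\{0,\dots,d\}\). Concretely, take \(I = \{k - \lfloor k j^2/(d+1)^2\rfloor : j = 0,\dots,d+1\}\), which includes both \(k\) and \(0\). This mimics Chebyshev nodes, as in the dual-polynomial constructions of~\cite{bun2015dual} for the approximate degree of \(\AND\). The hypothesis \(k \geq (d+1)^2\) is exactly what makes these nodes distinct integers. Indeed, consecutive gaps are at least \(k(2j+1)/(d+1)^2 - 1 \geq 2j \geq 1\) for \(j \ge 1\), and the first gap is at least \(1\).

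The entries of \(\gamma\) vanish on layers outside \(I\). On layer \(t = i \in I\), the total \(\ell^1\)-mass is
\[
  \binom{k}{i}\,\abs{q_I(i)} = \binom{k}{i}\frac{k!}{i!(k-i)!}\prod_{j\in I\setminus\{i\}}\frac{1}{\abs{i-j}}
  = \frac{k!}{\prod_{j \in I\setminus\{i\}}\abs{i-j}}\cdot\frac{\binom{k}{i}}{\binom{k}{i}}\,
\]
up to the obvious rewriting. The clean way to see it is the standard identity that layer \(i\) carries mass proportional to \(\binom{k}{i}\big/\prod_{j\ne i}\abs{i-j}\). So
\[
  \frac{\norm{\gamma}_1}{\abs{\gamma_a}} = \sum_{i\in I}\frac{\binom{k}{i}}{\binom{k}{k}}\cdot\frac{\prod_{j\in I\setminus\{k\}}(k-j)}{\prod_{j\in I\setminus\{i\}}\abs{i-j}} .
\]

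The main step, and the expected obstacle, is bounding this sum by \(10n\). The sum is exactly the ratio of the \(\ell^1\) mass of a one-sided dual for \(\AND_k\) to its correlation at the all-ones point. The binomial weight \(\binom{k}{i}\) is large for \(i\) far from \(k\). With the quadratic node spacing, the product ratio decays roughly like \(\binom{k}{i}^{-1}\) times a factor \(1/j^2\)-ish, just as in the analysis of Špalek/Bun–Thaler duals for \(\AND\). This gives a total sum of \(O(1)\) times a polynomial factor in \(d\), hence at most \(O(d^2) \leq O(n)\).

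If the constant is not below \(10\) directly, I would try the following adjustment first: scale the node placement, replacing \((d+1)^2\) by a slightly smaller denominator, which the slack in \(k \geq (d+1)^2\) and "large enough \(n\)" permits. Alternatively, I would evaluate the term \(i=k\) (which equals \(1\)) and the dominant terms \(j=1,2\) exactly, and bound the tail by a geometric series. The remaining verification is routine estimation of products of the form \(\prod (k-j)/\abs{i-j}\) against \(\binom{k}{i}\), using \(k - i_j \approx k j^2/(d+1)^2\).
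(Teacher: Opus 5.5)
Your reduction is sound and uses the same mechanism as the paper. You take a vector from Construction~\ref{cons:genray} supported below \(a\), with quadratically spaced nodes, and observe that \(\abs{\gamma_a} > 2\varepsilon\norm{\gamma}_1\) leaves at most one admissible integer for \(Z(a)\).

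\textbf{The gap.} It is the step you flag as the main one. The bound on \(\norm{\gamma}_1/\abs{\gamma_a}\) is never established, and the expression you set up for it is wrong. For \(t \in I\) one has \(\abs{q_I(t)} = t!\,(k-t)!/\prod_{j \in I\setminus\{t\}}\abs{t-j}\). Hence \(\binom{k}{t}\abs{q_I(t)} = k!/\prod_{j\in I\setminus\{t\}}\abs{t-j}\), and
\[
  \frac{\norm{\gamma}_1}{\abs{\gamma_a}} = \sum_{t \in I} \frac{\prod_{j \in I\setminus\{k\}} (k-j)}{\prod_{j \in I \setminus\{t\}} \abs{t-j}} .
\]
There is no binomial weight here, because the layer size \(\binom{k}{t}\) is already cancelled inside \(\binom{k}{t}\abs{q_I(t)}\).

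\begin{itemize}
\item Your displayed sum carries an extra factor \(\binom{k}{t}\) in every term, so taken literally it is not \(O(d^2)\) at all. For \(d \geq 1\) some node lies at distance between roughly \(k/4\) and \(3k/4\) from the top. Its term picks up \(\binom{k}{t} = 2^{\Omega(k)}\) with \(k \geq (d+1)^2\), which nothing cancels.
\item The heuristic that the product ratio ``decays like \(\binom{k}{i}^{-1}\)'' is false. That ratio is homogeneous of degree zero in the offsets \(k-j\), so rescaling the nodes by \(k/(d+1)^2\) leaves it unchanged up to the floors.
\item Even an honest \(O(d^2)\) bound would not finish the proof, since you would still need its constant to be below \(10\).
\end{itemize}

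\textbf{The fix.} Once the formula is corrected, the estimate is a one-line computation, and it is exactly the paper's proof. To avoid floors, do as the paper does: pick \(a' \subseteq a\) with \(\abs{a \setminus a'} = (d+1)^2\), and take \(S_1 = a'\), \(S_2 = a\), \(I = \{(d+1)^2 - i^2 : i \in [d+1]^*\}\). The term for the node at distance \(i^2\) below \(a\) is \(1\) for \(i = 0\). For \(i \geq 1\) it equals
\[
  \prod_{l \in [d+1]\setminus\{i\}} \frac{l^2}{\abs{l^2 - i^2}} = \frac{2\,((d+1)!)^2}{(d+1-i)!\,(d+1+i)!} = 2\,\frac{\binom{d+1}{i}}{\binom{d+i+1}{i}} \leq 2 .
\]
So \(\norm{\gamma}_1/\abs{\gamma_a} \leq 2d+3 < 2\sqrt{n}+1\), far below the \(10n\) you need, and no tuning of constants is required.

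Your rescaled nodes on the whole cube below \(a\) would also work. The floors change each factor \(l^2/\abs{l^2-i^2}\) by \(1 + O(1/\abs{l^2-i^2})\), and hence each term by at most an absolute constant. But that has to be checked, and it buys nothing over the exact \((d+1)^2\)-dimensional subcube.
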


The above statement follows from the statement below.

\begin{theorem}
  \label{thm:inferz}
  For any large enough \(n \in \mathbb{N}\), any \(d < \sqrt{n}-1\), and \(\varepsilon = \frac{1}{20 n}\), the following holds:
  Fix values of \(Z(b)\) for each point \(b \in \{0, 1\}^n \) with \(\abs{b} < (d+1)^2\).
  Then, consider a point \(a \in \{0, 1\}^n\) with \(\abs{a} \geq (d+1)^2\).
  For each \(a' \subseteq a\) with \(\abs{a'} = \abs{a} - (d+1)^2\), define the following rational number:
  \[
    R_{a, a'} = 2 \sum_{i=1}^{d+1} (-1)^i \frac{\binom{d+1}{i}}{\binom{d+i+1}{i}}
    \left(\frac{\sum_{a' \subseteq b \subseteq a: \abs{b} = \abs{a} - i^2} Z(b) + \frac{f(b)}{2}}{\binom{(d+1)^2}{i^2}}\right)
  \]
  If the following holds for any \(a'\):
  \[
    \abs*{Z(a) + R_{a, a'} + \frac{f(a)}{2}} > \frac{1}{\sqrt{n}}
  \]
  Then, \(Z\) is good.
  Note that all choices of \(Z(a) \in \mathbb{Z}\), except for possibly a single choice, lead to a good \(Z\).
  In other words, \(Z(a)\) is uniquely determined.
\end{theorem}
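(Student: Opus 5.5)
The plan is to build a single witness $\gamma$ for each pair $(a,a')$ using Construction~\ref{cons:genray}. Take $S_1 = a'$, $S_2 = a$, so $k = \abs{a\setminus a'} = (d+1)^2 \geq d+1$. Take
\[
I = \{k - i^2 : i \in [d+1]^*\},
\]
which has exactly $d+2$ elements inside $[k]^*$. By Lemma~\ref{lem:genray}, $\gamma \in \nulls(M(n,d))$. By construction, $\gamma_T \neq 0$ only for $a' \subseteq T \subseteq a$ with $\abs{a \setminus T} = i^2$ for some $i \in [d+1]^*$. The case $i=0$ is $T=a$. The case $i=d+1$ is $T=a'$, and its weight has $\abs{T} = \abs{a}-(d+1)^2$, matching the $i=d+1$ term of $R_{a,a'}$. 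Every such $T$ other than $a$ has $\abs{T} < \abs{a}$, so $Z(T)$ is either already fixed or part of the data entering $R_{a,a'}$. The intended identity is
\[
\frac{\ip*{Z+\frac f2,\gamma}}{\gamma_a} = Z(a)+\frac{f(a)}{2}+R_{a,a'}.
\]

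The key step is computing the ratios $\gamma_T/\gamma_a$ exactly. Write $q_I(t) = \prod_{j=0}^{k}(t-j) \big/ \prod_{j\in I}(t-j)$. For $s = k-i^2 \in I$ this gives
\[
\abs{q_I(s)} = \frac{s!\,(k-s)!}{\prod_{l\neq i}\abs{l^2-i^2}},
\]
since $\abs{s - (k-l^2)} = \abs{l^2-i^2}$. Dividing by the $i=0$ case yields
\[
\frac{\abs{\gamma_T}}{\abs{\gamma_a}} = \frac{1}{\binom{k}{i^2}}\cdot\frac{\prod_{l=1}^{d+1} l^2}{\prod_{l\neq i,\,0\le l\le d+1}\abs{l-i}(l+i)}.
\]
The second factor is the standard Razborov/Bun--Thaler quantity $2\binom{d+1}{i}/\binom{d+1+i}{i}$ for $i\ge 1$; I will verify it by splitting the products $\prod\abs{l-i}$ and $\prod(l+i)$ into factorials.

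For the signs: $(-1)^{\abs T}$ combined with the sign of $q_I$ at consecutive points of $I$ alternates in $i$, relative to $\gamma_a$. I expect this to reproduce exactly the factor $(-1)^i$ in $R_{a,a'}$. Checking this sign bookkeeping carefully, including the parity of $i^2$ versus $i$, is the fiddly part. Summing over the $\binom{k}{i^2}$ sets $T$ of each size then gives the identity above.

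Next I bound the norm. We have
\[
\frac{\norm{\gamma}_1}{\abs{\gamma_a}} = 1 + 2\sum_{i=1}^{d+1}\frac{\binom{d+1}{i}}{\binom{d+1+i}{i}},
\]
and
\[
\frac{\binom{d+1}{i}}{\binom{d+1+i}{i}} = \prod_{j=1}^{i}\frac{d+2-j}{d+1+j} \le \exp\bigl(-c\,i^2/d\bigr).
\]
So the ratio is $O(\sqrt d)$, and in any case $O(d)$. Hence, with $d<\sqrt n$ and $\varepsilon = \frac{1}{20n}$, we get $\varepsilon\norm{\gamma}_1/\abs{\gamma_a} \le O(d)/(20n) \le 1/\sqrt n$ for large $n$.

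To finish, suppose $\abs{Z(a)+R_{a,a'}+f(a)/2} > 1/\sqrt n$. Then
\[
\abs{\ip*{Z+\tfrac f2,\gamma}} > \abs{\gamma_a}/\sqrt n \ge \varepsilon\norm{\gamma}_1,
\]
so $Z$ is good. Uniqueness follows because the admissible window for $Z(a)$ has width $2/\sqrt n < 1$, so it contains at most one integer. The main obstacle is the exact evaluation, with correct signs, of the Lagrange-type ratio $\prod l^2 / \prod_{l\neq i}\abs{l^2-i^2}$ to match the stated coefficients of $R_{a,a'}$. The norm estimate is routine.
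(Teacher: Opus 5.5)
Your proposal is correct and follows the paper's proof essentially step for step. It uses the same witness from Construction~\ref{cons:genray} with $S_1=a'$, $S_2=a$, $I=\{(d+1)^2-i^2 : i\in[d+1]^*\}$, the same evaluation of each layer's total weight relative to $\abs{\gamma_a}$ as $2\binom{d+1}{i}/\binom{d+1+i}{i}$, and the same linear-in-$d$ bound (each term is at most $1$) on $\norm{\gamma}_1/\abs{\gamma_a}$. The sign check you deferred does go through, and your way of organizing it is actually cleaner than the paper's displayed computation: $q_I$ is positive at every point of $I$, since exactly $i^2-i$ points of $[k]^*\setminus I$ lie above $s=k-i^2$, so the factor $(-1)^i$ comes entirely from $(-1)^{|T|}=(-1)^{|a|}(-1)^{i^2}$. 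The paper instead drops the $(-1)^{i^2}$ and attributes the $(-1)^i$ to the $q$-ratio; the two slips cancel.
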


Before we begin the proof, we would like to explain the expression in simple words as it may look complicated at first glance.
First, the expression only looks at \(Z(b) + \frac{f(b)}{2}\) where \(b\) belongs to the Hamming sub-cube between \(a'\) and \(a\).
The expression \(\frac{\sum_{a' \subseteq b \subseteq a: \abs{b} = \abs{a} - i^2} Z(b) + \frac{f(b)}{2}}{\binom{(d+1)^2}{i^2}}\) is simply the average over the Hamming layer at distance \(i^2\) below \(a\).
This average is multiplied with the coefficient \(2 \cdot (-1)^i \frac{\binom{d+1}{i}}{\binom{d+i+1}{i}}\) to obtain the expression.
Now, we begin the proof.

\begin{proof}
  Fix a point \(a \in \{0, 1\}^n\) with \(\abs{a} \geq (d+1)^2\), and choose a subset \(a' \subseteq a\) with \(\abs{a'} = \abs{a} - (d+1)^2\).
  Use Construction~\ref{lem:genray} with \(n, d, S_1 = a', S_2 = a, I = \left\{(d+1)^2 - i^2 : i \in [d+1]^*\right\}\), to obtain \(\overline{\gamma} \in \nulls(M)\).
  We denote the polynomial \(q_{I}\) from the construction by \(q\) for brevity.

  Now, assume that the condition from the statement holds, which is as follows:
  \begin{equation}
    \abs*{Z(a) + R_{a, a'} + \frac{f(a)}{2}} > \frac{1}{\sqrt{n}}
    \label{ineq:zplusmore}
  \end{equation}

  Then, we claim that \(Z\) is good, with \(\overline{\gamma}\) as the witness, i.e.,

  \begin{equation}
    \abs*{\ip*{Z + \frac{f}{2}, \overline{\gamma}}} > \varepsilon \norm{\overline{\gamma}}_1
    \label{ineq:zipmore}
  \end{equation}

  To see why~\ref{ineq:zplusmore} implies~\ref{ineq:zipmore}, consider the following equalities:
  \begin{align}
    \ip*{Z + \frac{f}{2}, \overline{\gamma}} &= \sum_{b \in \{0, 1\}^n} \left(Z(b) + \frac{f(b)}{2}\right) \overline{\gamma}_b \label{eqn:expandip} \\
                                             &= \sum_{a' \subseteq b \subseteq a} \left(Z(b) + \frac{f(b)}{2}\right) \overline{\gamma}_b \label{eqn:supportgamma} \\
                                             &= \sum_{i=0}^{d+1} (-1)^{i^2} q((d+1)^2 - i^2) \sum_{a' \subseteq b \subseteq a : \abs{b} = \abs{a} - i^2} \left(Z(b) + \frac{f(b)}{2}\right) \label{eqn:gammasub}
  \end{align}
  Equality~\ref{eqn:expandip} follows by expanding the expression.
  Note that \(\overline{\gamma}_b = 0\) for any \(b\) with either \(a' \not\subseteq b\) or \(b \not\subseteq a\), as per the construction of \(\overline{\gamma}\).
  Hence, equality~\ref{eqn:supportgamma} follows.
  Finally, we substitute the values of \(\overline{\gamma}_b\) to obtain equality~\ref{eqn:gammasub}.

  Now, to calculate the RHS of inequality~\ref{ineq:zipmore}, expand the \(\ell^1\)-norm of \(\overline{\gamma}\) as \(\norm{\overline{\gamma}}_1 = \sum_{i=0}^{d+1} \binom{(d+1)^2}{i^2} \abs*{q((d+1)^2 - i^2)}\). 
  We divide the expression of the inner-product, as well as the \(\ell^1\)-norm, to modify inequality~\ref{ineq:zipmore} as:
  \begin{equation}
    \abs*{\sum_{i=0}^{d+1} \frac{q((d+1)^2 - i^2)}{q((d+1)^2)} \left(\sum_{\substack{a' \subseteq b \subseteq a \\ \abs{a} - \abs{b} = i^2}} Z(b) + \frac{f(b)}{2}\right)} > \varepsilon \sum_{i=0}^{d+1} \binom{(d+1)^2}{i^2} \frac{\abs*{q((d+1)^2 - i^2)}}{q((d+1)^2)} \label{ineq:expand}
  \end{equation}
  Note that \(q((d+1)^2) > 0\), hence, dividing by \(q((d+1)^2)\) does not change the direction of the inequality.
  Next, we calculate \(\binom{(d+1)^2}{i^2} \frac{q(i^2)}{q((d+1)^2)}\) as follows.
  Recall the expression for \(q(t) = q_{I}(t) = \prod_{i \in [(d+1)^2]^* \setminus I} (t-i)\).
  Hence, we get the following expression for \(\binom{(d+1)^2}{i^2} \frac{q((d+1)^2 - i^2)}{q((d+1)^2)}\):
  \begin{align*}
    \binom{(d+1)^2}{i^2} \frac{q((d+1)^2 - i^2)}{q((d+1)^2)} &=
    \binom{(d+1)^2}{i^2} \frac{\prod_{j \in [(d+1)^2]^* \setminus I} ((d+1)^2 - i^2 - j)}{\prod_{j \in [(d+1)^2]^* \setminus I} ((d+1)^2 - j)} \\
                                                &= \binom{(d+1)^2}{i^2} \frac{\prod_{j \in [(d+1)^2-2, (d+1)^2-3, \ldots, 1]} ((d+1)^2 - i^2 - j)}
                                                {\prod_{j \in [(d+1)^2-2, (d+1)^2-3, \ldots, 1]} ((d+1)^2 - j)}
                                                \\
                                                &= \binom{(d+1)^2}{i^2} \frac{\prod_{j \in [2, 3, 5, \ldots, (d+1)^2-1]} (j - i^2)}
                                                {\prod_{j \in [2, 3, 5, \ldots, (d+1)^2-1]} (j)}
                                                \\
                                                &= \frac{\prod_{j \in [d+1]} j^2}{\prod_{j \in [d+1]^* \setminus \{i\}} (j-i)(j+i)}
                                                = (-1)^i \cdot 2 \cdot \frac{\binom{d+1}{i}}{\binom{d+i+1}{i}}
  \end{align*}
  The calculation here is similar to the calculation performed in Section 6.1 of~\cite{bun2022approximate}, following Fact 32 in the survey.
  Now, this calculation allows us to infer the following upper bound on the absolute value of the final ratio, for any \(i \in [d+1]\):
  \[
    \abs*{\binom{(d+1)^2}{i^2} \frac{q((d+1)^2 - i^2)}{q((d+1)^2)}} \leq 2
  \]
  Hence, we get that the RHS of inequality~\ref{ineq:expand} is at most \(\sum_{i=0}^{d+1} 2 = 2(d+2) \leq 4 \sqrt{n}\) for large enough \(n\).
  Also, with a little rearrangement, the LHS of inequality~\ref{ineq:expand} equals \(\abs*{Z(a) + R_{a, a'} + \frac{f(a)}{2}}\).
  Therefore, if \(\abs*{\ip*{Z+\frac{f}{2},\overline{\gamma}}} = \abs*{Z(a) + R_{a, a'} + \frac{f(a)}{2}} > \frac{1}{\sqrt{n}} \geq \frac{1}{20n} 4 \sqrt{n} \geq \varepsilon \norm{\gamma}_1\) holds, then \(Z\) is good, with \(\overline{\gamma}\) as the witness.
  This completes the proof.
\end{proof}

\subsection{Proposed Directions}

Finally, we pose the question to extend the set of good \(Z\)s.

\begin{problem}
  Find witnesses to extend the set of good \(Z\)s.
\end{problem}

For example, can one find feasible solutions to prove that any \(Z\) with \(\{-1, 0, 1\}\) as its range is good?

\begin{problem}
  Prove that any \(Z : \{0, 1\}^n \to \{-1, 0, 1\}\) is good.
\end{problem}

Such statements will complement Theorem~\ref{thm:zbounds}, as they will prove lower bounds on the values of \(Z\).
We believe that finding more structure in \(\nulls(M(n, d))\) will help toward these problems.
To that effect, we present a simple construction for vectors in \(\nulls(M(n, d))\), which we could not find in current literature.

\begin{construction}
  \label{cons:orthvec}
  Fix some \(n \in \mathbb{N}\), \(d \in \mathbb{N}\) such that \(d < \floor*{\frac{n}{2}}\), and \(k \in \mathbb{N}\) such that \(d < k < n-d\).
  Construct a vector \(\gamma \in \mathbb{R}^{2^n}\), indexed by \(2^{[n]}\), as follows:

  If the indexing set \(T\) has size \(\abs{T} \neq k\), then set \(\gamma_T = 0\).
  Otherwise, for each \(i \in [d+1]\), check whether \(\abs{T \cap \{i, d+1+i\}} = 1\).
  If the check above fails for some \(i \in [d+1]\), set \(\gamma_T = 0\).

  Otherwise, \(T\) intersects exactly once with each pair \(\{i, d+1+i\}\).
  Then, set \(\gamma_T = (-1)^{\abs{T \cap [d+1]}}\).

  Output \(\gamma\).
\end{construction}

We claim that the construction above produces a vector \(\gamma \in \nulls(M(n, d))\).
The proof follows by a simple argument, which we present below.

\begin{lemma}
  \label{lem:orthvec}
  For any \(n \in \mathbb{N}, d \in \mathbb{N}\) and \(k \in \mathbb{N}\), such that \(d < k < n-d\), Construction~\ref{cons:orthvec} produces a vector \(\gamma \in \nulls(M(n, d))\).
\end{lemma}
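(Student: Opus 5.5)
We must show that for every row \(S\) with \(\abs{S} \leq d\) we have \(\sum_{T \supseteq S} \gamma_T = 0\). The plan is to exhibit a sign-reversing involution on the support of \(\gamma\) restricted to supersets of \(S\).

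Write the pairs as \(\{i, d+1+i\}\) for \(i \in [d+1]\). These are \(d+1\) disjoint pairs; they fit inside \([n]\) because \(2(d+1) \leq n\), which follows from \(d < n-d-1\), itself a consequence of \(d < k < n-d\). Every \(T\) in the support has \(\abs{T} = k\) and meets each pair in exactly one element. Such a \(T\) is determined by two pieces of data: a choice vector \(c \in \{0,1\}^{d+1}\), recording which element of each pair lies in \(T\), and a set \(U = T \setminus [2d+2]\) of size \(k-(d+1)\). Here \(k - (d+1) \geq 0\) since \(k > d\). Moreover \(\gamma_T = (-1)^{\abs{T \cap [d+1]}} = (-1)^{\#\{i : c_i = 0\}}\) if we let \(c_i = 0\) mean that \(i \in T\).

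Now fix \(S\) with \(\abs{S} \leq d\). Since there are \(d+1\) pairs, pigeonhole gives some pair \(\{i_0, d+1+i_0\}\) disjoint from \(S\); choose the smallest such \(i_0\). Define \(\phi(T) = T \,\triangle\, \{i_0, d+1+i_0\}\), i.e.\ swap which element of pair \(i_0\) belongs to \(T\). This map has the following properties. It preserves \(\abs{T} = k\) and the one-per-pair condition, so it maps the support to itself. It preserves \(T \supseteq S\), because \(S\) avoids pair \(i_0\), and since \(i_0\) depends only on \(S\), \(\phi\) is an involution. It changes \(\abs{T \cap [d+1]}\) by exactly one, so \(\gamma_{\phi(T)} = -\gamma_T\). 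Hence the supersets of \(S\) in the support split into pairs \(\{T, \phi(T)\}\) whose contributions cancel, giving \(\ip{\mathbf{m}_S, \gamma} = 0\). Supersets of \(S\) outside the support contribute nothing.

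No step here is a real obstacle. The only points needing care are that the pigeonhole step uses \(\abs{S} \leq d < d+1\), and that the hypotheses guarantee the construction is well defined (\(2d+2 \leq n\) and \(k \geq d+1\)). If the support happens to be empty, the claim holds trivially.
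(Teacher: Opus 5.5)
Your proof is correct. Its cancellation mechanism is the same as in the paper's case \(\abs{S} = d\): pair each \(T\) in the support with the set obtained by swapping its element in a pair that \(S\) avoids. What differs is the decomposition.

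The paper first treats \(\abs{S} = d\). There it disposes separately of \(S\) meeting some pair twice, and otherwise swaps on a pair that \(S\) misses. It then handles \(\abs{S} = s < d\) by a double-counting identity relating \(\sum_{S \subseteq T,\, \abs{T} = k} \gamma_T\) to \(\sum_{S \subseteq S',\, \abs{S'} = d} \sum_{S' \subseteq T,\, \abs{T} = k} \gamma_T\). You observe instead that pigeonhole gives an untouched pair for \emph{every} \(S\) with \(\abs{S} \leq d\). So a single fixed-point-free, sign-reversing involution handles all rows at once, with no case split and no reduction between layers.

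This is shorter and also sturdier. The paper's case \(\abs{S} = d\) asserts that \(S\) misses exactly one pair and reads off the sign of \(\gamma_T\) from \(S\) alone; this is valid only when \(S \subseteq [2d+2]\). Its double-counting step counts the sets \(S'\) between \(S\) and \(T\) as \(k-s\), as if \(\abs{S'} = k-1\), whereas the correct count is \(\binom{k-s}{d-s}\). Both slips are harmless, since the swap works on any untouched pair and any positive multiplicity suffices, but your argument never runs into them.

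Your argument also never uses that \(\gamma\) sits on a single layer. It therefore applies verbatim to any vector of the form \((-1)^{\abs{T \cap [d+1]}} h(T \setminus [2d+2])\) on sets \(T\) meeting each pair exactly once. In exchange, the paper's route offers a reusable principle: for a vector supported on a single layer \(k > d\), orthogonality to the monomials of degree exactly \(d\) already forces orthogonality to all lower-degree monomials.
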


\begin{proof}
  To show that \(\gamma \in \nulls(M(n, d))\), we need to prove that each of its rows is orthogonal to \(\gamma\).
  Hence, for each row indexed by a set \(S \in \binom{[n]}{\leq d}\), we need to show that \(\sum_{T \in 2^{[n]}} 1_{S \subseteq T} \cdot \gamma_T = \sum_{S \subseteq T} \gamma_T = 0\).
  Note that for sets \(T\) with size \(\abs{T} \neq k\), we have \(\gamma_T = 0\) as per the construction.
  Hence, we can simplify the expression further as \(\sum_{S \subseteq T : \abs{T} = k} \gamma_T = 0\).

  We break the analysis in two cases.

  \begin{enumerate}
    \item \(\abs{S} = d\).
      In this case, we need to consider two possibilities, based on the intersection of \(S\) with \(\{i, i+d+1\}\) pair for each \(i \in [d+1]\).
      \begin{itemize}
        \item Consider the case when \(S\) intersects with some \(\{i_0, i_0+d+1\}\) pair twice.
          Then, any set \(T\) containing \(S\) will also intersect \(\{i_0, i_0+d+1\}\) twice.
          Hence, \(\gamma_T = 0\) for any \(T\) containing such a set \(S\).
          Therefore, \(\sum_{S \subseteq T} \gamma_T = 0\).

        \item Consider the case when \(S\) intersects at most once with each \(\{i, i+d+1\}\) pair for \(i \in [d+1]\).
          Then, as \(\abs{S} = d = k-1\), there is exactly one pair \(\{i_0, i_0+d+1\}\) with which \(S\) does not intersect.

          Consider any set \(T\) containing \(S\), with size \(\abs{T} = k\).
          If \(T\) does not intersect \(\{i_0, i_0+d+1\}\), then \(\gamma_T = 0\) by construction.
          Otherwise, either \(i_0 \in T\), \(i_0+d+1 \in T\), or both.
          If both \(i_0 \in T\) and \(i_0+d+1 \in T\), then \(\gamma_T = 0\) by construction.
          Hence, we only consider the case when either \(i_0 \in T\) or \(i_0+d+1 \in T\), but not both.

          If \(i_0 \in T\), then \(\gamma_T = (-1)^{\abs{T} \cap [d+1]} = (-1)^{\abs{S \cap [d+1]} + 1}\).
          Otherwise, if \(i_0+d+1 \in T\), then \(\gamma_T = (-1)^{\abs{S \cap [d+1]}}\).
          Moreover, both the cases occur for exactly the same number of sets \(T\).
          Hence, \(\sum_{S \subseteq T} \gamma_T = 0\).
      \end{itemize}

    \item
      \label{itm:smallsize}
      \(\abs{S} = s\), where \(s < d\).
      Our goal in this case is to prove \(\sum_{S \subseteq T : \abs{T} = k} \gamma_T = 0\) indirectly using the previous case.
      To that effect, we compare this expression with \(\sum_{S \subseteq S' : \abs{S'} = d} \sum_{S' \subseteq T : \abs{T} = k} \gamma_T\).
      We claim that the latter expression is a multiple of the former.

      To see why, consider a fixed \(T\).
      We can remove any element of \(T \setminus S\) from \(T\) to obtain \(S'\) in \(k - s\) ways.
      Then, the way to obtain \(S\) from \(S'\) is unique, which is to remove all elements of \(S' \setminus S\).
      Hence, we get that
      \[
        (k-s) \cdot \sum_{S \subseteq T : \abs{T} = k} \gamma_T =
        \sum_{S \subseteq S' : \abs{S'} = d} \sum_{S' \subseteq T : \abs{T} = k} \gamma_T
      \]
      Now, from the previous case, we have that \(\sum_{S' \subseteq T : \abs{T} = k} \gamma_T = 0\) for any \(S'\) with \(\abs{S'} = d\).
      Hence, \(\sum_{S \subseteq T: \abs{T} = k} \gamma_T = 0\) as well.
  \end{enumerate}

  The proof is complete.
\end{proof}

Intuitively speaking, the construction above produces a vector with balanced \(-1\) and \(1\) entries on the given Hamming layer.
Hence, it may help with \(Z\)s that are highly unbalanced along the entries of this vector.
We leave it open to characterize the set of \(Z\)s that can be ruled out based on this vector.

\begin{problem}
  Characterize the set of good \(Z\)s witnessed by the vector from Construction~\ref{cons:orthvec}.
\end{problem}

\section{Lower Bounds for \textsf{AND}}
\label{sec:lb}
In this section, we prove a lower bound for torus polynomials approximating \(\AND\).
We restate the result below.

\andlb*

We plan to use Theorem~\ref{thm:asymconvert} to prove this result.
In this task, the main challenge is that there are infinitely many choices for \(Z\), and we need to find a feasible solution \(\gamma\) for each choice.
To make this task easier, we find a vector \(\gamma \in \nulls(M(n, d))\) with integer entries and small \(\ell^1\)-norm.
This vector will serve as a feasible solution for any \(Z\), if it satisfies the conditions that we detail in the next statement.

\begin{lemma}
  \label{lem:shorttolb}
  Fix a Boolean function \(f : \{0, 1\}^n \to \{0, 1\}\) and some \(d \in [n-1]\).
  Consider a vector \(\gamma \in \nulls(M(n, d))\) with integer entries, i.e. \(\gamma \in \mathbb{Z}^{2^n}\), such that \(\ip{f, \gamma}\) is an \emph{odd} integer.
  Then, for any \(Z : \{0, 1\}^n \to \mathbb{Z}\) and \(\varepsilon < \frac{1}{2\norm{\gamma}_1}\), the following holds:
  \[\abs*{\ip*{Z + \frac{f}{2}, \gamma}} > \varepsilon \norm{\gamma}_1\]
\end{lemma}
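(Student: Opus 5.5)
The argument is a parity computation on the inner product, followed by a direct comparison with \(\varepsilon \norm{\gamma}_1\).

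First expand the inner product by bilinearity:
\[
  \ip*{Z + \frac{f}{2}, \gamma} = \ip{Z, \gamma} + \frac{1}{2}\ip{f, \gamma}.
\]
Since \(Z\) takes integer values and \(\gamma \in \mathbb{Z}^{2^n}\), the term \(\ip{Z, \gamma}\) is an integer; call it \(N\). By hypothesis \(\ip{f, \gamma}\) is an odd integer, say \(2m+1\) with \(m \in \mathbb{Z}\). Hence
\[
  2\ip*{Z + \frac{f}{2}, \gamma} = 2N + 2m + 1
\]
is an odd integer. In particular it is nonzero, so its absolute value is at least \(1\), and
\[
  \abs*{\ip*{Z + \frac{f}{2}, \gamma}} \geq \frac{1}{2}.
\]

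Next compare this with the right-hand side. The assumption \(\varepsilon < \frac{1}{2\norm{\gamma}_1}\) implicitly requires \(\norm{\gamma}_1 > 0\). This is in any case forced, since \(\ip{f,\gamma}\) odd implies \(\gamma \neq 0\). Multiplying through gives \(\varepsilon \norm{\gamma}_1 < \frac{1}{2}\). Combining the two bounds,
\[
  \abs*{\ip*{Z + \frac{f}{2}, \gamma}} \geq \frac{1}{2} > \varepsilon \norm{\gamma}_1,
\]
which is the claim.

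There is no real obstacle. The one point that needs care is that the bound holds uniformly over all \(Z : \{0,1\}^n \to \mathbb{Z}\). This works because the only property of \(Z\) used is integrality, which kills any dependence on the particular values of \(Z\).

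The lemma also makes clear where the difficulty of the later lower bound lies. One must construct a \(\gamma \in \nulls(M(n,d))\) with integer entries, \(\ip{\AND, \gamma} = \gamma_{[n]}\) odd, and small \(\ell^1\)-norm. Construction~\ref{cons:genray} is the natural source of such vectors, provided one tracks the parity of the top entry and keeps the norm at \(2^{O(d)}\). Combined with Theorem~\ref{thm:asymconvert}, this lemma then yields Theorem~\ref{thm:andlb}.
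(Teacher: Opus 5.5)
Your proof is correct and is essentially the paper's argument. The paper also writes \(\ip*{Z + \frac{f}{2}, \gamma} = z' + z + \frac{1}{2}\) with integers \(z, z'\) and concludes \(\abs*{\ip*{Z + \frac{f}{2}, \gamma}} \geq \frac{1}{2} > \varepsilon \norm{\gamma}_1\). The only difference is that it obtains the \(\frac{1}{2}\) bound by a case split on the sign of \(z + z'\), where you observe that twice the inner product is an odd, hence nonzero, integer.
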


\begin{proof}
  Let \(\ip{f, \gamma} = 2z + 1\) for some integer \(z \in \mathbb{Z}\).
  Then, \(\ip*{\frac{f}{2}, \gamma} = z + \frac{1}{2}\).
  Fix a function \(Z : \{0, 1\}^n \to \mathbb{Z}\).
  Then, \(\ip*{Z, \gamma} = z'\) for some integer \(z' \in \mathbb{Z}\).
  Hence, \(\ip*{Z + \frac{f}{2}, \gamma} = z' + z + \frac{1}{2}\).
  Consider the following two cases:

  \begin{itemize}
    \item \(z + z' \geq 0\). In this case, \(\ip*{Z + \frac{f}{2}, \gamma} \geq \frac{1}{2}\).
    \item \(z + z' \leq -1\). In this case, \(\ip*{Z + \frac{f}{2}, \gamma} \leq -\frac{1}{2}\).
  \end{itemize}

  In both the cases, we observe that \(\abs*{\ip*{Z + \frac{f}{2}, \gamma}} \geq \frac{1}{2}\).
  Note that by the choice of \(\varepsilon < \frac{1}{2\norm{\gamma}_1}\), we have \(\varepsilon \norm{\gamma}_1 < \frac{1}{2}\).
  This completes the proof.
\end{proof}

To use the preceding statement for \(f = \AND_n\), we proceed as follows.
For a vector \(\gamma \in \nulls(M)\), its inner product with \(\AND_n\) is \(\ip{\gamma, \AND_n} = \gamma_{[n]}\).
Hence, we need a vector \(\gamma \in \nulls(M) \cap \mathbb{Z}^{2^n}\), such that \(\gamma_{[n]}\) is odd and \(\norm{\gamma}_1\) is not too large.
We find this vector in a basis for \(\nulls(M)\), we will find the basis useful later, such that each vector in that basis is integral and has a small enough \(\ell^1\)-norm.
One such vector \(\overline{\gamma}\) in this basis will have \(\abs*{\overline{\gamma}_{[n]}} = 1\).
We present the construction below.

\begin{construction}
  \label{cons:extremebasis}
  Construct a matrix \(B\) as follows.

  \begin{itemize}
    \item[] \textbf{Input}: \(n, d \in [n-1]\).
    \item[] \textbf{Output}: A matrix \(B \in \mathcal{M}_{2^n \times \binom{n}{> d}}(\mathbb{R})\).
    \item[] \textbf{Construction}:
      For any set \(S \subseteq [n]\) with \(\abs{S} \geq d+1\), consider its elements \((s_0, s_1, \ldots, s_{\abs{S}-1})\) in the increasing order \(s_0 < s_1 < \ldots < s_{\abs{S}-1}\).
      Denote \(S[> d] = \{s_{d+1}, \ldots, s_{\abs{S}-1}\}\) as the set remaining after ignoring the first \(d\) elements.

      Now, create a matrix \(B\), of size \(2^n \times \binom{n}{> d}\), and index its rows by elements of \(2^{[n]}\) and columns by elements of \(\binom{[n]}{> d}\).
      For \(1 \leq i \leq 2^n\) and \(1 \leq j \leq \binom{n}{> d}\), set the corresponding entry of \(B\) as \(B_{i, j} = (-1)^{\abs{S_i} + \abs{S_j}} \cdot 1_{S_i \subseteq S_j} \cdot 1_{S_j[> d] \subseteq S_i}\).
      Here, \(S_i\) and \(S_j\) denote the sets indexing the \(i\)\textsuperscript{th} row and the \(j\)\textsuperscript{th} column of \(B\), respectively.

      Output \(B\).
  \end{itemize}
\end{construction}

We claim that, when given \(n\) and \(d\) as input, the construction above produces a basis \(B(n, d)\) for \(\nulls(M(n, d))\).
To prove the claim, we plan to use an extension trick.
Consider the matrix \(D(n)\) of size \(2^n \times 2^n\) where \(D_{i, j} = 1_{S_i \subseteq S_j}\).
This extends the matrix \(M(n, d)\) to a square matrix.
Note that \(D(n)\) is an upper triangular matrix with \(1\)s on the diagonal.
Hence, it has full rank.
Therefore, the following statement follows easily, which we state without proof.

\begin{lemma}
  \label{lem:nullbasis}
  A basis \(B'(n, d)\) for \(\nulls(M(n, d))\) consists of the last \(\binom{n}{> d}\) columns of \(D(n)^{-1}\).
\end{lemma}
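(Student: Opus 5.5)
We need to show that the last \(\binom{n}{> d}\) columns of \(D(n)^{-1}\) form a basis of \(\nulls(M(n, d))\). Here "last" refers to an ordering of \(2^{[n]}\) compatible with size, so that the sets in \(\binom{[n]}{\leq d}\) come first and those in \(\binom{[n]}{> d}\) come last. The plan is a short linear-algebra argument built on the identity \(D(n) D(n)^{-1} = I\), together with the observation that \(M(n, d)\) is exactly the submatrix of \(D(n)\) formed by the rows indexed by \(\binom{[n]}{\leq d}\).

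First, I fix this ordering and note that \(D(n)\) is upper triangular with unit diagonal: \(1_{S_i \subseteq S_j}\) with \(i > j\) forces \(\abs{S_i} \geq \abs{S_j}\), and when the sizes are equal it forces \(S_i = S_j\). Hence \(D(n)\) is invertible. Write \(E = D(n)^{-1}\) and let \(e_T\) denote its column indexed by \(T\). The identity \(D(n) E = I\) says that \(\ip{\text{row}_S(D), e_T} = 1_{S = T}\). For \(\abs{T} > d\) and any \(S\) with \(\abs{S} \leq d\) we have \(S \neq T\), so every row of \(M(n, d)\) is orthogonal to \(e_T\). Therefore each \(e_T\) with \(\abs{T} > d\) lies in \(\nulls(M(n, d))\).

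Second, these \(\binom{n}{> d}\) vectors are linearly independent, since they are columns of the invertible matrix \(E\). Third, I compute the dimension of the nullspace. The rows of \(M(n, d)\) are rows of the invertible \(D(n)\), so they are independent. Hence \(\operatorname{rank} M(n, d) = \binom{n}{\leq d}\), and by rank–nullity \(\dim \nulls(M(n, d)) = 2^n - \binom{n}{\leq d} = \binom{n}{> d}\). Combining the three steps, the chosen columns form a basis.

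I do not expect any real obstacle; the only point needing care is fixing the ordering so that "last columns" means exactly the columns indexed by \(\binom{[n]}{> d}\) and \(M(n, d)\) consists exactly of the first \(\binom{n}{\leq d}\) rows of \(D(n)\). If desired, one can also note via Möbius inversion that \(E_{S, T} = (-1)^{\abs{T} - \abs{S}} 1_{S \subseteq T}\). This explicit form is what would later be needed to relate \(B'(n, d)\) to Construction~\ref{cons:extremebasis}, but it is not required for the lemma itself.
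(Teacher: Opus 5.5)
Your proof is correct and follows the paper's own reasoning: the paper observes that \(D(n)\) extends \(M(n,d)\) to a square upper triangular matrix with unit diagonal, hence invertible, and then states the lemma without proof. You supply the omitted details, namely the size-compatible ordering, the identity \(D(n)D(n)^{-1}=I\) showing that the relevant columns are orthogonal to every row of \(M(n,d)\), their linear independence, and the rank--nullity dimension count.
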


Now, we describe \(D(n)^{-1}\) explicitly.
This is simply the M\"{o}bius function over the Boolean hypercube.
We prove its correctness for the sake of completeness.

\begin{lemma}
  \label{lem:dinv}
  For \(1 \leq i, j \leq 2^n\), \(D(n)^{-1}_{i, j} = (-1)^{\abs{S_i} + \abs{S_j}} \cdot 1_{S_i \subseteq S_j}\).
\end{lemma}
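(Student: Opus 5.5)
To prove Lemma~\ref{lem:dinv}, I will define the candidate matrix \(E\) with entries \(E_{i,j} = (-1)^{\abs{S_i} + \abs{S_j}} \cdot 1_{S_i \subseteq S_j}\). I will then verify directly that \(D(n) E = I\). Since \(D(n)\) is square and of full rank, as noted before Lemma~\ref{lem:nullbasis}, a one-sided inverse is automatically the two-sided inverse, so this single check suffices.

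The computation goes as follows. Fix row index \(i\) and column index \(j\), with sets \(S_i\) and \(S_j\). Then
\[
(D(n)E)_{i,j} = \sum_{T \subseteq [n]} 1_{S_i \subseteq T} \cdot (-1)^{\abs{T} + \abs{S_j}} \cdot 1_{T \subseteq S_j} = \sum_{S_i \subseteq T \subseteq S_j} (-1)^{\abs{T} + \abs{S_j}}.
\]
There are three cases.
\begin{itemize}
  \item If \(S_i \not\subseteq S_j\), the sum is empty and equals \(0\).
  \item If \(S_i = S_j\), the sum has the single term \((-1)^{2\abs{S_j}} = 1\).
  \item If \(S_i \subsetneq S_j\), write \(T = S_i \cup U\) with \(U \subseteq S_j \setminus S_i\), and let \(m = \abs{S_j \setminus S_i} \geq 1\). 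The sum becomes \((-1)^{\abs{S_i} + \abs{S_j}} \sum_{u=0}^{m} \binom{m}{u} (-1)^u = (-1)^{\abs{S_i} + \abs{S_j}} (1-1)^m = 0\), by the binomial theorem.
\end{itemize}
Hence \(D(n)E = I\), and so \(E = D(n)^{-1}\).

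No step here is a real obstacle; the argument is the standard M\"{o}bius inversion on the Boolean lattice. The only point needing care is consistency with the index convention of Construction~\ref{cons:matm}, where \(D_{i,j} = 1_{S_i \subseteq S_j}\) has rows indexed by the smaller set. This ensures that the product \(D(n)E\) sums over the interval \([S_i, S_j]\) in the correct orientation. If one prefers to avoid the full-rank remark, computing \(E D(n)\) gives the same interval sum, \(\sum_{S_i \subseteq T \subseteq S_j} (-1)^{\abs{S_i} + \abs{T}}\), which vanishes by the same binomial argument.
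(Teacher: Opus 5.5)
Your proof is correct and takes essentially the same approach as the paper: multiply \(D(n)\) by the candidate matrix and split into the cases \(S_i \not\subseteq S_j\), \(S_i = S_j\), and \(S_i \subsetneq S_j\), where the last case vanishes because \(\sum_{u=0}^{m}(-1)^u\binom{m}{u} = 0\). Your split by the inclusion relation, rather than by index order (\(k > i\), \(k < i\)) as in the paper, is slightly cleaner, since it does not rely on the ordering of sets being compatible with inclusion.
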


\begin{proof}
  \renewcommand{\qedsymbol}{\ensuremath{\blacksquare}}
  Towards a proof, consider \((DD^{-1})_{k, i} = \sum_{j = 1}^{2^n} 1_{S_k \subseteq S_j} (-1)^{\abs{S_i} + \abs{S_j}} 1_{S_j \subseteq S_i}\).
  If \(k = i\), the only non-zero entry on the RHS is when \(S_j = S_i = S_k\), which is \(1\), therefore \((DD^{-1})_{i, i} = 1\).
  When \(k > i\), there is no subset \(S_j\) such that \(S_j \subseteq S_i\) and \(S_k \subseteq S_j\), therefore \((DD^{-1})_{k ,i} = 0\).

  For \(k < i\), consider two cases.
  \begin{itemize}
    \item Let \(S_k \subseteq S_i\) with \(\abs{S_i} = \abs{S_k} + m\) where \(m \geq 1\).
      Then, for any \(0 \leq m' \leq m\), there are \(\binom{m}{m'}\) many sets such that \(\abs{S_j} = \abs{S_k} + m'\) and \(S_k \subseteq S_j \subseteq S_i\).
      Hence, \(DD^{-1}_{k, i} = \sum_{m' = 0}^{m} (-1)^{m'} \binom{m}{m'} = 0\).

    \item Let \(S_k \not\subseteq S_i\).
      Consider any set \(S_j\) such that \(S_k \subseteq S_j\).
      Then, there is \(s \in S_j\) such that \(s \notin S_i\).
      Hence, \(S_j \not\subseteq S_i\).
      Similarly, if \(S_j \subseteq S_i\), then \(S_k \not\subseteq S_j\).
      Therefore, \(DD^{-1}_{k, i} = 0\).
  \end{itemize}

  This completes the proof.
\end{proof}

By combining Lemma~\ref{lem:nullbasis} and Lemma~\ref{lem:dinv}, we get a basis \(B'(n, d)\) for \(\nulls(M(n, d))\).
Now, we prove that \(B(n, d) = B'(n, d)R(n, d)\) for some square matrix \(R\) with full rank.
This implies that \(B(n, d)\) and \(B'(n, d)\) have the same rank.
The proof of this statement is a straightforward calculation, which we omit.

\begin{lemma}
  For any set \(S \subseteq [n]\), define \(S[\leq d] = S \setminus S[> d]\).
  Consider the following matrix \(R(n, d)\) of \(\binom{n}{> d} \times \binom{n}{> d}\) dimensions.
  For any \(\binom{n}{\leq d} + 1 \leq i, j \leq 2^n\), the corresponding entry is \(R(n, d)_{i-\binom{n}{\leq d}, j-\binom{n}{\leq d}} = 1_{S_i \subseteq S_j} \cdot 1_{(S_j[\leq d]) \subseteq S_i}\).
  Then, \(B(n, d) = B'(n, d)R(n, d)\).
\end{lemma}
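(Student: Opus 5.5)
The plan is to compute the product \(B'(n, d)R(n, d)\) entrywise and reduce each entry to an alternating sum over an interval of the Boolean lattice, exactly as in the proof of Lemma~\ref{lem:dinv}. By Lemma~\ref{lem:nullbasis} and Lemma~\ref{lem:dinv}, the column of \(B'(n, d)\) indexed by a set \(S\) with \(\abs{S} \geq d+1\) has entry \((-1)^{\abs{T} + \abs{S}} 1_{T \subseteq S}\) in the row indexed by \(T \subseteq [n]\). Fix a row \(T \subseteq [n]\) and a column \(U \in \binom{[n]}{> d}\). Then
\[
  (B'R)_{T, U} = \sum_{S \in \binom{[n]}{> d}} (-1)^{\abs{T} + \abs{S}} 1_{T \subseteq S} \cdot 1_{S \subseteq U} \cdot 1_{U[\leq d] \subseteq S}.
\]

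The first step is to drop the size restriction \(\abs{S} > d\). By Construction~\ref{cons:extremebasis}, \(U[> d]\) omits the first elements \(s_0, \ldots, s_d\) of \(U\), so \(U[\leq d] = \{s_0, \ldots, s_d\}\) has exactly \(d+1\) elements. Hence any \(S \supseteq U[\leq d]\) automatically has \(\abs{S} \geq d+1\). I will double-check this off-by-one against the text, since it says ``ignoring the first \(d\) elements'' while listing \(s_{d+1}\) onward. If instead \(\abs{U[\leq d]} = d\), the restriction would have to be carried along separately. With the \(d+1\) reading, the sum runs over all \(S\) with \(T \cup U[\leq d] \subseteq S \subseteq U\), which is empty unless \(T \subseteq U\).

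The second step is the binomial identity already used in Lemma~\ref{lem:dinv}. For \(A \subseteq U\), we have \(\sum_{A \subseteq S \subseteq U} (-1)^{\abs{S}} = (-1)^{\abs{U}}\) if \(A = U\), and \(0\) otherwise. Taking \(A = T \cup U[\leq d]\), the entry is nonzero only when \(T \subseteq U\) and \(T \cup U[\leq d] = U\), that is, when \(U[> d] \subseteq T\). In that case the entry equals \((-1)^{\abs{T} + \abs{U}}\). This is exactly \(B_{T, U} = (-1)^{\abs{T} + \abs{U}} 1_{T \subseteq U} 1_{U[> d] \subseteq T}\), as required.

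Finally, to conclude the intended consequence that \(B(n, d)\) is a basis, I will note that \(R(n, d)\) has full rank. Order \(\binom{[n]}{> d}\) by size, consistently with \(D(n)\). Then \(R_{S, U} \neq 0\) forces \(S \subseteq U\), and \(R_{U, U} = 1\), so \(R\) is unitriangular. The only real obstacle is the bookkeeping of \(U[\leq d]\) versus \(U[> d]\) in the first step; the rest is routine.
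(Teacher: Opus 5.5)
Your proposal is correct, and it is exactly the entrywise M\"obius-inversion calculation the paper omits as ``straightforward'': once \(\abs{U[\leq d]} = d+1\), the size restriction on \(S\) disappears, and the alternating sum over the interval from \(T \cup U[\leq d]\) to \(U\) gives precisely \(B_{T,U}\). You can drop the hedge about the off-by-one: the indexing \(\{s_{d+1}, \ldots\}\) is the intended one (it is what puts the columns of \(B\) in \(\nulls(M(n,d))\) and yields \(\norm{\overline{\gamma}}_1 = 2^{d+1}\) in the proof of Theorem~\ref{thm:andlb}), whereas under the other reading the identity would actually fail for \(T \subseteq U[\leq d]\).
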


The statement above allows us to prove the claim that \(B(n, d)\) is a basis for \(\nulls(M(n, d))\).

\begin{claim}
  \label{clm:extremebasis}
  Construction~\ref{cons:extremebasis}, on input \(n, d \in [n-1]\), outputs a basis for \(\nulls(M(n, d))\).
\end{claim}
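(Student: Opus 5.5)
The plan is to deduce the claim from the factorization \(B(n, d) = B'(n, d)R(n, d)\) in the preceding lemma, together with the fact that \(B'(n, d)\) is a basis. Two things are needed: every column of \(B(n, d)\) lies in \(\nulls(M(n, d))\), and the \(\binom{n}{> d}\) columns of \(B(n, d)\) are linearly independent. Since \(\dim \nulls(M(n, d)) = 2^n - \binom{n}{\leq d} = \binom{n}{> d}\), these two facts give the claim.

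\emph{Columns lie in the nullspace.} By Lemma~\ref{lem:nullbasis} and Lemma~\ref{lem:dinv}, the columns of \(B'(n, d)\) are the columns of \(D(n)^{-1}\) indexed by the sets of size greater than \(d\). Because \(D(n)\) restricted to the rows of size at most \(d\) equals \(M(n, d)\), we have \(M(n, d)B'(n, d) = 0\), and these columns form a basis of \(\nulls(M(n, d))\). Each column of \(B(n, d) = B'(n, d)R(n, d)\) is a linear combination of columns of \(B'(n, d)\), so it also lies in the nullspace.

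\emph{Independence.} It suffices to show that \(R(n, d)\) is invertible, since then \(\operatorname{rank} B(n, d) = \operatorname{rank} B'(n, d) = \binom{n}{> d}\). Index the rows and columns of \(R\) by sets \(S, T\) with \(\abs{S}, \abs{T} > d\), so that \(R_{S, T} = 1_{S \subseteq T} \cdot 1_{T[\leq d] \subseteq S}\).
\begin{itemize}
  \item On the diagonal, \(R_{T, T} = 1\), because \(T \subseteq T\) and \(T[\leq d] \subseteq T\).
  \item Off the diagonal, \(R_{S, T} \neq 0\) forces \(S \subsetneq T\), and hence \(\abs{S} < \abs{T}\).
\end{itemize}
Order the sets by nondecreasing size, breaking ties arbitrarily. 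In this order \(R\) is upper triangular with ones on the diagonal, so \(\det R = 1\) and \(R\) is invertible.

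\emph{Checking the factorization.} Since the paper omits the proof of \(B = B'R\), I would verify it briefly. We have
\[
  (B'R)_{U, T} = \sum_{S} (-1)^{\abs{U} + \abs{S}} \, 1_{U \subseteq S} \, 1_{S \subseteq T} \, 1_{T[\leq d] \subseteq S},
\]
where \(S\) ranges over sets with \(\abs{S} > d\). The constraint \(\abs{S} > d\) is automatic, because \(S \supseteq T[\leq d]\) and \(T[\leq d]\) has \(d+1\) elements. So \(S\) ranges over the whole interval \([U \cup T[\leq d], T]\).
\begin{itemize}
  \item If \(U \not\subseteq T\), the interval is empty and the entry is \(0\).
  \item If \(U \subseteq T\) and \(T[> d] \subseteq U\), then \(U \cup T[\leq d] = T\). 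The only term is \(S = T\), giving \((-1)^{\abs{U} + \abs{T}}\), which matches \(B_{U, T}\).
  \item If \(U \subseteq T\) but \(T[> d] \not\subseteq U\), the interval is nontrivial. The alternating sum over it vanishes by the same binomial identity used in Lemma~\ref{lem:dinv}, again matching \(B_{U, T} = 0\).
\end{itemize}

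The main obstacle is this bookkeeping with the definitions of \(S[> d]\) and \(S[\leq d]\). The construction drops the first \(d+1\) elements \(s_0, \ldots, s_d\), not \(d\) as the text says. This makes \(\abs{T[\leq d]} = d+1\), which is exactly what makes the constraint \(\abs{S} > d\) harmless in the computation above.
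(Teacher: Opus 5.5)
Your proposal is correct and follows the paper's own route. You factor \(B(n,d) = B'(n,d)R(n,d)\), where \(B'\) consists of the columns of the M\"obius inverse \(D(n)^{-1}\) indexed by sets of size \(>d\) and \(R\) is invertible, so the columns of \(B\) lie in \(\nulls(M(n,d))\) and are linearly independent there. The paper omits both the factorization check and the proof that \(R\) has full rank; your unitriangularity argument for \(R\) and your three-case verification of \(B = B'R\), which correctly reads \(S[>d]\) as dropping the first \(d+1\) elements, fill those gaps soundly.
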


We use this basis to prove Theorem~\ref{thm:andlb} as follows.

\begin{proof}[Proof of Theorem~\ref{thm:andlb}]
  First, construct \(B\) using Construction~\ref{cons:extremebasis}, with \(n\) and \(d\) as the inputs to the construction.
  Now, denote by \(\overline{\gamma}\) the column of \(B\) indexed by \([n]\).
  Clearly, \(\abs*{\overline{\gamma}_{[n]}} = 1\), hence, \(\ip{\AND, \overline{\gamma}}\) is an odd integer.
  Moreover, \(\norm{\overline{\gamma}}_1 = 2^{d+1}\).
  Therefore, if we apply Lemma~\ref{lem:shorttolb}, we get the following lower bound for \(\varepsilon < \frac{1}{2 \cdot 2^{d+1}}\):
  any torus polynomial that \(\varepsilon\)-approximates the \(\AND\) function must have degree more than \(d\).
  This completes the proof.
\end{proof}

\begin{remark}
Consider the probabilistic polynomial that approximates \(\AND\) over \(\mathbb{F}_2\) with probability \(\varepsilon\), from~\cite{razborov1987lower}, and combine it with~\cite[Lemma 14]{bhrushundi2019torus}.
Then, one gets the following upper bound on \(\tdeg_\varepsilon(\AND)\).

\begin{lemma}[\cite{bhrushundi2019torus, razborov1987lower}]
  \label{lem:andub}
  For any \(\varepsilon > 0\), \(\tdeg_\varepsilon(\AND) \leq \log^2\left(\frac{n}{\varepsilon}\right)\).
\end{lemma}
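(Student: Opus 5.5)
The plan is to chain two known facts, as the remark suggests. The first is the Razborov construction of a probabilistic polynomial over \(\mathbb{F}_2\) for \(\AND_n\), with error \(\varepsilon\) and degree \(O(\log(1/\varepsilon))\). The second is \cite[Lemma 14]{bhrushundi2019torus}, which turns a probabilistic polynomial over \(\mathbb{F}_2\) into a torus polynomial.

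\textbf{Step 1: the probabilistic polynomial.} Work with \(\mathrm{OR}\) of the negated inputs, since \(\AND(x) = 1 - \mathrm{OR}(1-x_1,\ldots,1-x_n)\).
\begin{itemize}
\item Pick \(t = \lceil \log_2(1/\varepsilon)\rceil\) independent uniformly random subsets \(R_1,\ldots,R_t \subseteq [n]\).
\item Set \(L_j(y) = \sum_{i \in R_j} y_i \bmod 2\) and \(Q(y) = 1 - \prod_{j=1}^t (1 - L_j(y))\) over \(\mathbb{F}_2\).
\item If \(y = 0\), then \(Q(y) = 0\) with certainty.
\item If \(y \neq 0\), each \(L_j(y)\) is a uniform bit, so \(\Pr[Q(y) = 0] = 2^{-t} \leq \varepsilon\).
\item Substituting \(y_i = 1 - x_i\) and complementing gives a distribution over \(\mathbb{F}_2\)-polynomials of degree \(t\) that agrees with \(\AND\) at every point with probability at least \(1-\varepsilon\).
\end{itemize}
This alone gives degree \(\log(1/\varepsilon)\). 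The weaker bound \(\log^2(n/\varepsilon)\) leaves slack for whatever loss the conversion incurs.

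\textbf{Step 2: the conversion.} I would invoke \cite[Lemma 14]{bhrushundi2019torus}. As I recall it, it says roughly the following: if \(f\) has an \(\mathbb{F}_2\) probabilistic polynomial of degree \(d\) with error \(\delta\), then \(\tdeg_{\varepsilon'}(f) \leq O(d \cdot \log(1/\varepsilon'))\) or similar, for \(\varepsilon'\) polynomially related to \(\delta\). The idea behind the lemma is as follows.
\begin{itemize}
\item Sample polynomials \(Q_1,\ldots,Q_m\) from the distribution.
\item Lift each \(Q_k\) to an integer-valued real polynomial \(\tilde Q_k\) of degree \(O(d)\) that computes \(Q_k \bmod 2\) on Boolean points. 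For a product of \(\mathbb{F}_2\)-linear forms, take the real product of the real lifts of each parity, and lift each parity through a degree-bounded modular representation.
\item Take \(P = \frac{1}{2m}\sum_k \tilde Q_k\).
\item On each input \(a\), the fraction of indices \(k\) with \(\tilde Q_k(a) \not\equiv f(a) \pmod 2\) is at most about \(\delta\), by a Chernoff bound and a union bound over the \(2^n\) points. Taking \(m = O(n/\delta^2)\) suffices.
\item Each incorrect term shifts \(P(a)\) by a half-integer over \(m\), so the distance of \(P(a)\) from \(\frac{f(a)}{2} + \mathbb{Z}\) is at most about \(\delta\).
\end{itemize}
The parity lifting is where the union over \(n\) and the logarithms enter. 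For instance, a real polynomial equal to \(\frac{1}{2}\mathrm{parity}\) modulo \(\mathbb{Z}\) exists at degree 1: \(\frac{1}{2}\sum_i y_i\). This suggests that the \(\mathbb{F}_2 \to\) torus direction is cheap. I would follow the argument of Lemma 14 rather than re-derive it.

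\textbf{Step 3: choose parameters.} Take \(\delta = \varepsilon\), possibly rescaled by a constant or by \(1/n\) to absorb losses, and read off degree \(O(\log(n/\varepsilon)) \cdot O(\log(n/\varepsilon))\), which is at most \(\log^2(n/\varepsilon)\) for the stated form.

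\textbf{Main obstacle.} The hard part is getting the exact parameters of \cite[Lemma 14]{bhrushundi2019torus} right: how the error and degree transform, and whether the conversion needs error \(1/\mathrm{poly}(n)\) in the probabilistic polynomial, which costs an extra \(\log n\) factor. The \(\log^2(n/\varepsilon)\) in the statement is generous enough that a crude accounting should fit, so I would track constants only up to that.
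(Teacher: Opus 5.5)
Your top-level plan is the paper's own: the paper justifies the lemma only by composing Razborov's $\mathbb{F}_2$ probabilistic polynomial for $\AND$ with \cite[Lemma 14]{bhrushundi2019torus}, and your Step~1 correctly supplies the first ingredient. The gap is in Step~2: the conversion you sketch does not work. If $\tilde Q_k$ is correct only modulo $2$, write $\tilde Q_k(a) = Q_k(a) + 2z_k(a)$ with $z_k(a)\in\mathbb{Z}$. Then
\[
  P(a) = \frac{1}{2m}\sum_{k=1}^{m} \tilde Q_k(a) = \frac{\#\{k : Q_k(a)=1\}}{2m} + \frac{1}{m}\sum_{k=1}^{m} z_k(a).
\]
The last term is an arbitrary multiple of $1/m$, not an integer, so the fractional part of $P(a)$ is uncontrolled. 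For example, $m=2$, $f(a)=0$, $\tilde Q_1(a)=0$, $\tilde Q_2(a)=2$ gives $P(a)=\tfrac12$. Your accounting tracks the $\tfrac{1}{2m}$ contributed by each wrong sample, but forgets that the integer parts of the \emph{correct} samples are also divided by $m$. The degree-$1$ lift $\tfrac12\sum_i y_i$ shows that a single $\mathbb{F}_2$ polynomial converts for free; an average of several does not.

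The missing idea is that each lift must be correct modulo the denominator $2m$, not modulo $2$. Your phrase about a modular representation points this way, but then the degree is no longer $O(d)$. Take $m$ a power of two, $2m = 2^\ell$, and integer polynomials $\hat Q_k$ with $\hat Q_k(a) \equiv Q_k(a) \pmod{2^\ell}$, reading $Q_k(a)\in\{0,1\}$. You can get these by composing the mod-$2$ lift with a Beigel--Tarui modulus-amplifying polynomial of degree $2\ell-1$. Alternatively, in Razborov's construction, replace each parity $L_j$ by $\hat L_j=\sum_{S\subseteq R_j,\ 1\le |S|\le \ell} (-2)^{|S|-1}\prod_{i\in S} y_i$, which agrees with the $0/1$ parity modulo $2^\ell$, and use $1-\prod_j(1-\hat L_j)$. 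Then $P = 2^{-\ell}\sum_k \hat Q_k$ has fractional part exactly $\#\{k : Q_k(a)=1\}/(2m)$. This is within $\varepsilon$ of $\AND(a)/2$ once at most $2\varepsilon m$ samples err at every $a$. With per-sample error $\varepsilon$, Chernoff plus a union bound over the $2^n$ points gives this for $m = O(n/\varepsilon)$. The degree is then $t\ell = O(\log(1/\varepsilon)\log(n/\varepsilon))$. One logarithm is the degree of the probabilistic polynomial; the other is the modulus $2^\ell = \Theta(n/\varepsilon)$. That is where $\log^2(n/\varepsilon)$ (up to constants) comes from. Neither your recalled form $O(d\log(1/\varepsilon'))$ nor your $O(d)$-degree lifts account for it, so your Step~3 parameter count does not follow from what you wrote.
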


Hence, the lower bound we have proved in Theorem~\ref{thm:andlb} is only quadratically away from the upper bound for any \(\varepsilon = \frac{1}{n^{\Omega(1)}}\).
\end{remark}

Although we had infinitely many linear programs to work with, we have only used one vector for proving their feasibility.
One could use multiple vectors in \(\nulls(M(n, d))\), such that for each dual corresponding to \(Z \in \mathbb{Z}^{2^n}\), one of these vectors is a feasible solution.
By using multiple vectors, we believe that one can get stronger degree lower bounds, bringing it closer to the upper bound.
We leave this task as an open problem.

\begin{problem}
  Bridge the gap between the lower and upper bound for the \(\AND\) function.
\end{problem}

\subsection{The Very Small Error Case}
\label{subsec:tinyerr}

The literature on polynomial approximations usually focuses on inverse-polynomial error regime.
We study the case where the error is very small, as it allows us to partially answer a question posed in~\cite[Problem 6]{bhrushundi2019torus}.
In~\cite[Problem 6]{bhrushundi2019torus}, the authors ask about the relationship between \(\tdeg_\frac{1}{3}(f)\) and \(\tdeg_\varepsilon(f)\).
In general, one can ask if there is an error-reduction procedure for torus polynomials.
This procedure should take as input \(\varepsilon, \varepsilon' < \varepsilon, \tdeg_\varepsilon(f)\), and output an estimate for \(\tdeg_{\varepsilon'}(f)\).
Ideally, the output should not depend on \(f\), and the estimate should be reasonably close to optimal.

In what follows, we prove that torus polynomials require the same degree to approximate \(\AND\) and \(\maj\) when the error is very small.
Now, if we assume Conjecture~\ref{conj:asymlb}, then the degree required to \(\frac{1}{20n}\)-approximate \(\AND\) is much smaller than the degree required for \(\maj\).
Hence, any error-reduction procedure as described above will produce a suboptimal output if it does not depend on the function being approximated, conditionally answering~\cite[Problem 6]{bhrushundi2019torus} for the error-regime \(\varepsilon < \frac{1}{2^{n+1}}\).
Formally, we prove the following result.

\begin{theorem}
  \label{thm:tightlb}
  Depending on the value of \(\varepsilon\), the following cases hold.

  \begin{itemize}
    \item If \(\varepsilon < \frac{1}{2^{n+1}}\), then the following holds for \(f = \maj\) as well as \(f = \AND\), and infinitely many \(n\).
      Any torus polynomial that \(\varepsilon\)-approximates \(f\) has degree \(n\).

    \item If \(\varepsilon \geq \frac{1}{2^{n+1}}\), then the following holds for any Boolean function \(f : \{0, 1\}^n \to \{0, 1\}\).
      There exists a torus polynomial \(P\) of degree at most \(n-1\) approximating \(f\) within an error of \(\varepsilon\).
      Moreover, if \(f\) is symmetric, then we can take \(P\) to be symmetric as well.
  \end{itemize}
\end{theorem}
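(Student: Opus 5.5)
The plan rests on a single observation: for \(d = n-1\) the nullspace \(\nulls(M(n, n-1))\) is one-dimensional. By Lemma~\ref{lem:nullbasis} and Lemma~\ref{lem:dinv}, it is spanned by the last column of \(D(n)^{-1}\), namely \(\gamma^\ast_T = (-1)^{n+\abs{T}}\). This vector is integral with \(\norm{\gamma^\ast}_1 = 2^n\). Everything then reduces to the parity of the integer
\[
  \sigma(f) = \sum_{T \subseteq [n]} (-1)^{\abs{T}} f(T),
\]
because \(\ip{f, \gamma^\ast} = \pm\sigma(f)\).

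\textbf{First case (\(\varepsilon < \frac{1}{2^{n+1}}\)).} Whenever \(\sigma(f)\) is odd, Lemma~\ref{lem:shorttolb} applied to \(\gamma^\ast\) with \(d = n-1\) gives \(\abs{\ip{Z + \frac{f}{2}, \gamma^\ast}} \geq \frac{1}{2} > \varepsilon \norm{\gamma^\ast}_1\) for every \(Z\). Theorem~\ref{thm:asymconvert} then says every approximating torus polynomial has degree more than \(n-1\), i.e.\ exactly \(n\).

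For \(\AND\), \(\sigma(\AND) = (-1)^n\), which is odd for all \(n\). For \(\maj\) with \(n\) even, \(\maj(T) = 1\) iff \(\abs{T} \geq m := \frac{n}{2}\), and the standard alternating partial sum identity gives
\[
  \sum_{k \geq m} (-1)^k \binom{n}{k} = (-1)^m \binom{n-1}{m-1}.
\]
Taking \(n = 2^r\), we have \(n-1 = 2^r - 1\), whose binary digits are all ones. By Lucas' theorem \(\binom{2^r-1}{m-1}\) is then odd. This gives infinitely many \(n\) for \(\maj\), and this parity check is the only step I expect to need real care.

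\textbf{Second case (\(\varepsilon \geq \frac{1}{2^{n+1}}\)).} Here I would give an explicit construction rather than appeal to duality, since the construction also handles symmetry. For any function \(h : \{0,1\}^n \to \mathbb{R}\), its multilinear interpolant has top coefficient
\[
  c(h) = \sum_T (-1)^{n-\abs{T}} h(T) = \ip{h, \gamma^\ast}.
\]
First choose \(Z\) to be zero except possibly at the all-ones point, where \(Z([n]) = z\); this \(Z\) is symmetric. Choose \(z\) so that \(c\!\left(Z + \frac{f}{2}\right) \in \{0, \pm\frac{1}{2}\}\). This is possible because \(c\!\left(Z + \frac{f}{2}\right) = z \pm \frac{\sigma(f)}{2}\), so a suitable integer \(z\) cancels the integer part. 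Next set \(\eta = c\!\left(Z + \frac{f}{2}\right)/2^n\), so that \(\abs{\eta} \leq \frac{1}{2^{n+1}} \leq \varepsilon\), and define
\[
  h(a) = Z(a) + \frac{f(a)}{2} - \eta (-1)^{n - \abs{a}}.
\]
Since \(c\) is linear and \(c\big((-1)^{n-\abs{\cdot}}\big) = 2^n\), we get \(c(h) = 0\). Hence the multilinear interpolant \(P\) of \(h\) has degree at most \(n-1\), and it satisfies \(\abs{P(a) - \frac{f(a)}{2} - Z(a)} = \abs{\eta} \leq \varepsilon\) at every Boolean point. Finally, if \(f\) is symmetric then \(h\) depends only on \(\abs{a}\). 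The unique multilinear interpolant of a symmetric function is symmetric, since permuting variables preserves both interpolation and uniqueness, so \(P\) is symmetric.
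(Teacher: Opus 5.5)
Your proof is correct. The lower-bound half is the paper's argument: the same one-dimensional nullspace spanned by the parity vector, Lemma~\ref{lem:shorttolb}, and a parity check on \(\ip{\maj,\gamma^\ast}\). The only difference there is minor. The paper gets oddness from the fact that \(\binom{2^t}{i}\) is even for \(0<i<2^t\). You use the closed form \((-1)^m\binom{n-1}{m-1}\) together with Lucas, which is slightly longer but tells you exactly which even \(n\) work. Your version also uses the threshold \(\geq n/2\) from the paper's definition of \(\maj\), whereas the paper's proof sums over \(i>n/2\); for \(n=2^t\) both sums are odd. The upper-bound half takes a genuinely different route. The paper chooses the same \(Z\) (nonzero only at \([n]\)) and then appeals to the equivalence in Theorem~\ref{thm:asymconvert}, i.e.\ strong LP duality, to conclude that some polynomial exists, and it leaves the symmetric case to the reader. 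You work on the primal side instead. Subtracting \(\eta\prod_i(2X_i-1)\), the interpolant of \((-1)^{n-\abs{a}}\), kills the top M\"{o}bius coefficient while shifting every point by exactly \(\abs{\eta}\leq 2^{-(n+1)}\). Concretely, your polynomial is \(P=\frac{1}{2}\tilde f+zX_1\cdots X_n-\eta\prod_i(2X_i-1)\), where \(\tilde f\) is the multilinear extension of \(f\). This avoids duality altogether, makes symmetry immediate, and gives a closed-form polynomial. In fact it supplies the explicit construction that the paper poses as an open problem right after this theorem. The paper's route fits its uniform duality framework, which is what it needs for smaller \(d\). Yours exploits something special to \(d=n-1\): the degree-\(\leq n-1\) evaluations are exactly the orthogonal complement of a single \(\pm 1\) vector, so projecting along that vector already attains the optimal error \(\abs{\ip{Z+\frac{f}{2},\gamma^\ast}}/\norm{\gamma^\ast}_1\).
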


\begin{proof}
  To prove the lower bound, first construct the matrix \(B\) using Construction~\ref{cons:extremebasis}, with \(n\) and \(d=n-1\) as the inputs to the construction.
  In this case, there is a single vector \(\overline{\gamma}\) in the basis, with \(\overline{\gamma}_S = (-1)^{\abs{S}}\).

  Now, for \(\maj\), we have \(\ip{\maj, \overline{\gamma}} = \sum_{i > \frac{n}{2}} (-1)^{i} \binom{n}{i}\).
  Consider \(n = 2^t\) for some natural number \(t \geq 2\).
  Then, \(\binom{n}{i}\) is even for all \(1 \leq i \leq n-1\), while \(\binom{n}{n} = 1\) is odd.
  Hence, \(\ip{\maj, \overline{\gamma}}\) is odd.
  Therefore, using Lemme~\ref{lem:shorttolb}, the lower bound holds for \(\varepsilon < \frac{1}{2\norm{\overline{\gamma}}_1} = \frac{1}{2^{n+1}}\).

  For \(\AND\), simply note that \(\abs{\overline{\gamma}_{[n]}} = 1\).
  Hence, we get the lower bound for \(\AND\) as well.

  To prove the upper bound, we continue with the vector \(\overline{\gamma}\) from above.
  Now, consider any function \(f : \{0, 1\}^n \to \{0, 1\}\), and proceed as follows based on the value of \(\ip{f, \overline{\gamma}}\).
  \begin{itemize}
    \item \textbf{Case 1:} \(\ip{f, \overline{\gamma}}\) is even.

      In this case, define the function \(Z\), with \(Z([n]) = \frac{-\ip{f, \overline{\gamma}}}{2}\), and \(Z(a) = 0\) otherwise.
      Then, \(\ip{Z + \frac{f}{2}, \overline{\gamma}} = 0\).

    \item \textbf{Case 2:} \(\ip{f, \overline{\gamma}}\) is odd.
      
      In this case, define the function \(Z\), with \(Z([n]) = -\floor*{\frac{\ip{f, \overline{\gamma}}}{2}}\), and \(Z(a) = 0\) otherwise.
      Then, \(\abs*{\ip{Z + \frac{f}{2}, \overline{\gamma}}} = \frac{1}{2}\).
  \end{itemize}

  In both the case, we have found a \(Z\), such that \(\abs*{\ip{Z + \frac{f}{2}, \overline{\gamma}}} \leq \frac{1}{2} = \frac{1}{2^{n+1}} \norm{\overline{\gamma}}_1\).
  Now, note that \(\overline{\gamma}\) along generates \(\nulls(M(n, n-1))\), hence, all vectors in \(\nulls(M(n, n-1))\) are multiples of \(\overline{\gamma}\).
  Therefore, for any vector \(\gamma \in \nulls(M(n, n-1))\), we have \(\abs*{\ip{Z + \frac{f}{2}, \gamma}} \leq \varepsilon \norm{\gamma}_1\).
  To complete the proof, note that Theorem~\ref{thm:asymconvert} is an equivalence statement.
  This proves the upper bound.

  Finally, if \(f\) is symmetric, then simply note that the \(Z\) we have constructed is also symmetric.
  This suffices to construct a symmetric torus polynomial, we leave the small details to the reader.
\end{proof}

The upper bound above only states the existence of torus polynomials approximating Boolean functions for a tiny error.
We leave it as an open problem to explicitly construct them.

\begin{problem}
  For \(\varepsilon = \frac{1}{2^{n+1}}\) and a given function \(f : \{0, 1\}^n \to \{0, 1\}\), construct a torus polynomial of degree at most \(n-1\) that \(\varepsilon\)-approximates \(f\).
\end{problem}

\section{Lower Bounds for the Symmetric Case}
\label{sec:symlb}
In this section, we prove lower bounds against symmetric torus polynomials approximating symmetric functions.
To describe the general method, we proceed along the lines of the proof for Theorem~\ref{thm:asymconvert}, with two crucial modifications.
The polynomial is symmetric if each monomial of the same degree has the same coefficient, which reduces the number of variables to \(d+1\).
Similarly, as the function is symmetric, the number of constraints reduces to \(n+1\).
To describe the much shorter linear program we obtain, we first construct the following matrix.

\begin{construction}
  \label{cons:symmatm}
  Define the matrix \(\Ms(n, d)\) of size \((d+1) \times (n+1)\).
  For \(0 \leq j \leq d, 0 \leq i \leq n\), the corresponding entry is \(\Ms_{j, i} = \binom{i}{j}\).
\end{construction}

Now, the analogue of Theorem~\ref{thm:asymconvert} for symmetric torus polynomials is as follows.
Note that we define symmetric functions with \([n]^*\) as the domain, such that \(f(i)\) is the output when the input has Hamming weight \(i\).

\begin{theorem}
  \label{thm:symconvert}
  The following statements are equivalent for any \(n, d\), \(\varepsilon < \frac{1}{10}\), and any symmetric function \(f: [n]^* \to \{0, 1\}\).

  \begin{itemize}
    \item Any symmetric torus polynomial that \(\varepsilon\)-approximates \(f\) has degree more than \(d\).
    \item For any function \(Z : [n]^* \to \mathbb{Z}\), there exists a vector \(\psi \in \nulls(\Ms(n, d))\) such that:
      \[\abs*{\ip*{Z + \frac{f}{2}, \psi}} > \varepsilon \norm{\psi}_1\]
  \end{itemize}
\end{theorem}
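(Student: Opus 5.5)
Our plan is to mirror the proof of Theorem~\ref{thm:asymconvert}, working in the symmetric basis. First we check that a symmetric multilinear polynomial \(P\) of degree at most \(d\) can be written as \(P = \sum_{j=0}^{d} \beta_j \sum_{\abs{S} = j} X^S\). On a Boolean point of Hamming weight \(i\) it then evaluates to \(\sum_{j=0}^d \beta_j \binom{i}{j} = (\beta^T \Ms(n, d))_i\). So evaluations depend only on \(i\) and are given by the columns of \(\Ms(n,d)\).

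The one genuinely new point is the integer part. We must show that if a symmetric \(P\) \(\varepsilon\)-approximates a symmetric \(f\), then the associated \(Z\) from Definition~\ref{def:torus} can be taken to be a function of Hamming weight only, \(Z : [n]^* \to \mathbb{Z}\). This holds because \(P(a)\) and \(f(a)\) depend only on \(\abs{a}\), and \(\varepsilon < \frac{1}{10} < \frac14\). The integer within \(\varepsilon\) of \(P(a) - \frac{f(a)}{2}\) is therefore unique, and so it is constant on each layer. Conversely, any \(Z : [n]^* \to \mathbb{Z}\) with \(\abs{\beta^T \Ms - Z - \frac f2} \le \varepsilon\) pointwise yields a symmetric torus polynomial of degree at most \(d\).

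Hence the first statement is equivalent to the following: for every \(Z : [n]^* \to \mathbb{Z}\), the linear program \(\min_{\beta, \epsilon} \epsilon\) subject to \(\abs{\beta^T \Ms(n,d) - Z - \frac f2} \le \epsilon\) coordinatewise has optimum greater than \(\varepsilon\). This is an \(\ell^\infty\) approximation problem in \(\mathbb{R}^{n+1}\). Its LP dual, exactly as before, is \(\max \ip{\psi, Z + \frac f2}\) subject to \(\psi \in \nulls(\Ms(n,d))\) and \(\norm{\psi}_1 = 1\). This follows by standard duality, or equivalently from the fact that the dual norm of \(\ell^\infty\) is \(\ell^1\): the distance from \(Z + \frac f2\) to the row space of \(\Ms\) equals \(\sup\) of \(\ip{\psi, Z+\frac f2}\) over \(\psi \perp \) row space with \(\norm{\psi}_1 \le 1\). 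The primal is always feasible (take \(\epsilon\) large), so strong duality holds with equal optima.

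The optimum exceeds \(\varepsilon\) if and only if some \(\psi\) in the nullspace satisfies \(\ip{\psi, Z+\frac f2} > \varepsilon\norm{\psi}_1\). Since the supremum is attained on the compact set, the strict inequality transfers, and we can replace the absolute value by a sign flip \(\psi \mapsto -\psi\) as in Theorem~\ref{thm:asymconvert}. The only step needing care is the reduction to symmetric \(Z\). As an alternative route there, we could average \(P\) over permutations, but the uniqueness-of-nearest-integer argument above is cleaner, so we will use it.
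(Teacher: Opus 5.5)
Your proposal is correct and follows the route the paper indicates: the paper omits this proof, saying it mirrors Theorem~\ref{thm:asymconvert} with \(d+1\) symmetric coefficients and \(n+1\) Hamming-weight constraints given by \(\Ms(n,d)\), and you likewise write the \(\ell^\infty\)-approximation LP for each \(Z\) and dualize it. Your nearest-integer uniqueness argument, using \(\varepsilon < \frac14\), shows that \(Z\) may be taken to depend only on Hamming weight; the paper leaves this step implicit, and your argument supplies exactly what is needed.
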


The proof of this statement is very similar to the proof of Theorem~\ref{thm:asymconvert}, which we skip.
The challenge here is also similar, having to deal with infinitely many linear programs, and proving that they are all feasible.
We proceed similar to the previous section, by choosing a short enough \(\psi\) with integer entries, which gives us the following analogue of Lemma~\ref{lem:shorttolb}.

\begin{lemma}
  \label{lem:symshorttolb}
  Fix a symmetric Boolean function \(f : [n]^* \to \{0, 1\}\) and some \(d \in [n]^*\).
  Consider a vector \(\psi \in \nulls(\Ms(n, d))\) with integer entries, i.e. \(\psi \in \mathbb{Z}^{n+1}\), such that \(\ip{f, \psi}\) is an \emph{odd} integer.
  Then, for any \(Z : [n]^* \to \mathbb{Z}\) and \(\varepsilon < \frac{1}{2\norm{\psi}_1}\), the following holds:
  \[\abs*{\ip*{Z + \frac{f}{2}, \psi}} > \varepsilon \norm{\psi}_1\]
\end{lemma}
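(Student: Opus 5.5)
The plan is to repeat the parity argument from the proof of Lemma~\ref{lem:shorttolb} verbatim in the symmetric setting. The key point is that the conclusion does not depend on \(\psi\) lying in \(\nulls(\Ms(n, d))\). That hypothesis is needed only later, when the lemma is combined with Theorem~\ref{thm:symconvert}. The inequality itself is purely arithmetic: integrality of \(\psi\) and \(Z\), together with oddness of \(\ip{f, \psi}\), force the inner product away from zero.

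Concretely, the steps are as follows.

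First, since \(\ip{f, \psi}\) is odd, write \(\ip{f, \psi} = 2z + 1\) for some \(z \in \mathbb{Z}\). Then \(\ip*{\frac{f}{2}, \psi} = z + \frac{1}{2}\).

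Second, fix an arbitrary \(Z : [n]^* \to \mathbb{Z}\). Because both \(Z\) and \(\psi\) are integer vectors in \(\mathbb{Z}^{n+1}\), the quantity \(z' = \ip{Z, \psi}\) is an integer. By linearity,
\[
  \ip*{Z + \frac{f}{2}, \psi} = z + z' + \frac{1}{2}.
\]

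Third, split into two cases.
\begin{itemize}
  \item If \(z + z' \geq 0\), then the inner product is at least \(\frac{1}{2}\).
  \item If \(z + z' \leq -1\), then the inner product is at most \(-\frac{1}{2}\).
\end{itemize}
In either case \(\abs*{\ip*{Z + \frac{f}{2}, \psi}} \geq \frac{1}{2}\).

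Fourth, the hypothesis \(\varepsilon < \frac{1}{2\norm{\psi}_1}\) gives \(\varepsilon \norm{\psi}_1 < \frac{1}{2}\). Note that \(\norm{\psi}_1 \geq 1\), since \(\psi\) is a nonzero integer vector: it cannot be zero because \(\ip{f,\psi}\) is odd. Combining the last two steps yields the strict inequality \(\abs*{\ip*{Z + \frac{f}{2}, \psi}} > \varepsilon \norm{\psi}_1\).

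There is no real obstacle here. The only point needing care is checking that the inner products are taken over the domain \([n]^*\) with the symmetric encodings of \(f\) and \(Z\) as length-\((n+1)\) vectors. Under that encoding, integrality of \(\ip{Z,\psi}\) is immediate.
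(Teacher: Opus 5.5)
Your proposal is correct and matches the paper's argument. The paper gives no separate proof for this lemma and simply calls it the analogue of Lemma~\ref{lem:shorttolb}, whose proof is exactly your parity argument: write \(\ip{f,\psi} = 2z+1\), note \(\ip{Z,\psi} \in \mathbb{Z}\), and conclude \(\abs{\ip{Z+\frac{f}{2},\psi}} \geq \frac{1}{2} > \varepsilon\norm{\psi}_1\), without ever using the nullspace hypothesis, as you observed.
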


Now, we recall the statement of our main lower bound against symmetric torus polynomial approximations.

\andsymlb*

\begin{proof}
  To prove this statement, we claim that for any \(d \leq O\left(\sqrt{\frac{n}{\log(n)}}\right)\), \(\nulls(\Ms(n, d))\) contains a vector \(\overline{\psi}\) with \(\ell^\infty\)-norm \(1\).
  Moreover, the last entry of \(\overline{\psi}\) is \(\overline{\psi}_n = 1\).
  Hence, its inner product with the \(\AND\) function is \(\ip{\AND, \overline{\psi}} = 1\), which is an odd integer.
  We state the claim formally below.

  \begin{claim}
    \label{clm:shortvec}
    For some universal constant \(c\), consider any large enough \(n \in \mathbb{N}\) and \(d \leq \sqrt{\frac{n}{c \log(n)}}\).
    Then, \(\nulls(M(n, d))\) contains a vector \(\overline{\psi}\) with \(\norm{\overline{\psi}}_\infty = 1\) and \(\overline{\psi}_n = 1\).
  \end{claim}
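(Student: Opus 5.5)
The statement as worded concerns \(\nulls(M(n, d))\), and I read ``\(\overline{\psi}_n\)'' as the entry indexed by the top point \([n]\), which is the entry \(\ip{\AND, \cdot}\) picks out. Read this way, the claim follows from the basis of Construction~\ref{cons:extremebasis}, for every \(d \in [n-1]\), and the restriction \(d \leq \sqrt{n/(c\log n)}\) is not used. I will take \(\overline{\psi}\) to be the column of \(B(n, d)\) indexed by \([n]\), the same vector \(\overline{\gamma}\) used in the proof of Theorem~\ref{thm:andlb}.

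First I compute this column explicitly. Write the elements of \([n]\) in increasing order. Then \([n][> d] = U\) is the set of the largest \(n-d-1\) elements, and its complement \(W = [n] \setminus U\) has exactly \(d+1\) elements. By Construction~\ref{cons:extremebasis}, for \(T \subseteq [n]\) we get
\[
  \overline{\psi}_T = (-1)^{\abs{T} + n} \cdot 1_{U \subseteq T}.
\]
So every entry lies in \(\{-1, 0, 1\}\), and \(\overline{\psi}_{[n]} = (-1)^{2n} = 1\). Together these give \(\norm{\overline{\psi}}_\infty = 1\) and \(\overline{\psi}_{[n]} = 1\). The vector is also integral, with \(\norm{\overline{\psi}}_1 = 2^{d+1}\).

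Next I show \(\overline{\psi} \in \nulls(M(n, d))\). One route is to invoke Claim~\ref{clm:extremebasis}. Since that claim's proof relies on the omitted computation \(B = B'R\), I would instead verify membership directly. For a row indexed by \(S\) with \(\abs{S} \leq d\), the inner product is
\[
  \sum_{T \supseteq S \cup U} (-1)^{\abs{T} + n}.
\]
This is an alternating sum over the subcube of sets \(T\) with \(S \cup U \subseteq T \subseteq [n]\). That subcube has \(\abs{W \setminus S}\) free coordinates. Because \(\abs{W} = d+1 > \abs{S}\), we have \(W \setminus S \neq \emptyset\). Hence the sum is \(\pm \sum_{m} (-1)^m \binom{\abs{W \setminus S}}{m} = 0\), by the same computation as in Lemma~\ref{lem:dinv}.

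The only step needing care is this nullspace verification, specifically that the free set \(W \setminus S\) is nonempty. That holds exactly because \([n][> d]\) omits \(d+1\) elements, one more than the degree \(d\).

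If the intended object is instead the symmetric nullspace \(\nulls(\Ms(n, d))\), this vector does not transfer directly. The natural symmetrisation \(\psi_i = \sum_{\abs{T} = i} \overline{\psi}_T\) does lie in \(\nulls(\Ms(n, d))\), but its entries are \((-1)^{i+n}\binom{d+1}{n-i}\), so its \(\ell^\infty\)-norm is about \(2^{d}/\sqrt{d}\) rather than \(1\). Reaching \(\ell^\infty\)-norm \(1\) there would need a different, Chebyshev-type construction, and that is where the hypothesis \(d = O(\sqrt{n/\log n})\) would enter.
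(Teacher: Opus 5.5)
\paragraph{Verdict.} There is a genuine gap: you prove a literal reading that rests on a typo, and the claim the paper actually needs is left unproved.

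\paragraph{What you proved, and why it is not the claim.} Your computation is correct for the literal wording. The column of \(B(n,d)\) indexed by \([n]\) is \(T \mapsto (-1)^{\abs{T}+n} 1_{U \subseteq T}\). The cancellation over the nonempty free set \(W \setminus S\) puts it in \(\nulls(M(n,d))\). However, the \(M\) in the statement is a typo for \(\Ms\), for three reasons:
\begin{itemize}
\item The index \(\overline{\psi}_n\) refers to coordinates in \([n]^*\).
\item The hypothesis \(d \leq \sqrt{n/(c\log n)}\) would otherwise be vacuous, as you noticed.
\item The claim is used in the proof of Theorem~\ref{thm:andsymlb} via Lemma~\ref{lem:symshorttolb}. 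That lemma requires \(\overline{\psi} \in \nulls(\Ms(n,d)) \cap \mathbb{Z}^{n+1}\), so that \(\norm{\overline{\psi}}_1 \leq n+1\).
\end{itemize}
Your vector has \(\norm{\cdot}_1 = 2^{d+1}\). Through Lemma~\ref{lem:shorttolb} it therefore only reproves the \(\Omega(\log(1/\varepsilon))\) bound of Theorem~\ref{thm:andlb}. As you observe, its symmetrisation has \(\ell^\infty\)-norm of order \(2^d/\sqrt{d}\). So the substantive statement is unproved: a nonzero vector in \(\nulls(\Ms(n,d))\) with entries in \(\{-1,0,1\}\) and last entry \(1\). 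The ``Chebyshev-type'' route you mention does not obviously supply it. Integrality is what matters here, since \(\ip{\AND,\overline{\psi}} = \overline{\psi}_n\) must be odd, and real-valued dual polynomials give no control over that.

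\paragraph{The missing idea: counting plus translation invariance.}
\begin{itemize}
\item \emph{The paper's route.} It uses the integer basis \(\Bs(n,d)\) of \(\nulls(\Ms(n,d))\), whose columns are shifted coefficient vectors of \((1-x)^{d+1}\). It bounds \(\det(\Bs^T\Bs) \leq 2^{cd^2\log n}\) (Lemma~\ref{lem:symdet}) and notes a unit maximal minor, so \(D = 1\). Siegel's lemma (Theorem~\ref{thm:linfvec}) then gives a nonzero integer vector with \(\norm{\cdot}_\infty \leq 2^{cd^2\log(n)/(2(n-d))} < 2\) once \(d^2 \leq n/(c\log n)\). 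Integrality then forces \(\norm{\cdot}_\infty = 1\).
\item \emph{An elementary substitute: pigeonhole.} For \(x \in \{0,1\}^{n+1}\), the \(j\)-th entry of \(\Ms(n,d)x\) lies in \(\{0,\ldots,\binom{n+1}{j+1}\}\). Hence \(x \mapsto \Ms(n,d)x\) takes at most \((n+2)^{(d+1)(d+2)/2} = 2^{O(d^2\log n)} < 2^{n+1}\) values. The difference of two colliding points is a nonzero \(\{-1,0,1\}\)-vector in the nullspace. This is exactly where the hypothesis on \(d\) enters.
\item \emph{Moving the entry to index \(n\).} A vector \(\psi\) lies in \(\nulls(\Ms(n,d))\) iff \(\sum_i \psi_i p(i) = 0\) for every polynomial \(p\) of degree at most \(d\). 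This condition is preserved by translating the support (Lemma~\ref{lem:rotation}). Shift so that the last nonzero entry, which is \(\pm 1\), sits at index \(n\), and negate if needed; this gives \(\overline{\psi}_n = 1\).
\end{itemize}
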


  Assume, for now, that the claim is true.
  Then, we have \(\ip{\AND, \overline{\psi}} = 1\).
  Further, we get \(\norm{\overline{\psi}}_1 \leq (n+1) \norm{\overline{\psi}}_\infty \leq n+1\).
  Finally, we use Lemma~\ref{lem:symshorttolb}, using \(\overline{\psi}\), which we can apply for any \(\varepsilon \leq \frac{1}{2\norm{\overline{\psi}}_1} \leq \frac{1}{2(n+1)}\).
  Note that our choice of \(\varepsilon = \frac{1}{20n}\) satisfies this inequality for large enough \(n\).
  This completes the proof.
\end{proof}

\begin{remark}
  Buhrman, Cleve, de Wolf and Zalka~\cite{buhrman1999bounds} proved an upper bound of \(O\left(\sqrt{n \log\left(\frac{1}{\varepsilon}\right)}\right)\) on the degree of a real polynomial approximating the \(\AND\) function within an error of \(\varepsilon\).
  Note that we can consider this real polynomial, after symmetrizing, as a symmetric torus polynomial approximating the \(\AND\) function within an error of \(\varepsilon\).
  For \(\varepsilon = \frac{1}{O(n)}\), this gives an upper bound of \(O(\sqrt{n \log(n)})\) on the degree.
  Hence, the lower bound of \(\Omega\left(\sqrt{\frac{n}{\log(n)}}\right)\) we have proved above is tight within logarithmic factors in \(n\).
\end{remark}

We also get a separation between symmetric torus polynomials and asymmetric torus polynomials as a corollary of Theorem~\ref{thm:andsymlb}.

\begin{corollary}
  \label{corr:symsep}
  Symmetric torus polynomials are weaker than their asymmetric counterparts.
\end{corollary}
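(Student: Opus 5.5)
The corollary asks for a concrete separation: a function and an error level at which asymmetric torus polynomials achieve a degree strictly below what any symmetric torus polynomial can. I would take \(f = \AND\) and \(\varepsilon = \frac{1}{20n}\), then compare the upper bound of Lemma~\ref{lem:andub} with the lower bound of Theorem~\ref{thm:andsymlb}.

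\textbf{Step 1: asymmetric upper bound.} By Lemma~\ref{lem:andub}, with \(\varepsilon = \frac{1}{20n}\),
\[
\tdeg_{\varepsilon}(\AND) \leq \log^2(20n^2) = O(\log^2 n).
\]
This bound comes from Razborov's probabilistic polynomial combined with~\cite[Lemma 14]{bhrushundi2019torus}. The resulting torus polynomial is in general not symmetric, but that is allowed on this side of the comparison.

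\textbf{Step 2: symmetric lower bound.} Theorem~\ref{thm:andsymlb} states that every symmetric torus polynomial that \(\frac{1}{20n}\)-approximates \(\AND\) has degree \(\Omega\left(\sqrt{\frac{n}{\log n}}\right)\).

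\textbf{Step 3: compare the two.} For large enough \(n\), \(\log^2 n = o\left(\sqrt{n/\log n}\right)\). So there is an asymmetric torus polynomial of degree \(O(\log^2 n)\) that \(\frac{1}{20n}\)-approximates the symmetric function \(\AND\), while no symmetric torus polynomial of that degree does. This is the separation, and it holds even though the function itself is symmetric.

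The only point that needs care is the precise meaning of ``weaker.'' I would state it explicitly as follows: there exist a symmetric function and an inverse-polynomial error at which the symmetric torus approximation degree exceeds the unrestricted one by a super-polylogarithmic factor.

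This formulation also exposes the caveat: the whole argument rests on Claim~\ref{clm:shortvec}, which Theorem~\ref{thm:andsymlb} invokes. If that claim is taken as given, the corollary is immediate. Otherwise the real work is constructing the short integer vector in \(\nulls(\Ms(n,d))\) with last entry \(1\), for \(d \leq \sqrt{n/(c\log n)}\). I would attempt that construction with a dual polynomial in the style of Construction~\ref{cons:genray}, using Chebyshev-like square node sets as in Theorem~\ref{thm:inferz} and then rounding, and that is where I expect the main obstacle to lie.
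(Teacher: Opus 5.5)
Your proposal is correct and follows the paper's proof: fix $f = \AND$ and $\varepsilon = \frac{1}{20n}$, then compare the $O(\log^2 n)$ asymmetric bound of Lemma~\ref{lem:andub} with the $\Omega(\sqrt{n/\log n})$ symmetric lower bound of Theorem~\ref{thm:andsymlb}. Your closing caveat is unnecessary, since the paper already proves Claim~\ref{clm:shortvec} non-constructively, using Siegel's lemma (Theorem~\ref{thm:linfvec}), the determinant bound of Lemma~\ref{lem:symdet} and the shift of Lemma~\ref{lem:rotation}; your suggested alternative of rounding a real dual polynomial would in general not keep the vector in $\nulls(\Ms(n,d))$.
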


\begin{proof}
  We compare the symmetric torus polynomial lower bound with the upper bound from Lemma~\ref{lem:andub}.
  Using Lemma~\ref{lem:andub}, we get \(\tdeg_\frac{1}{20n}(\AND) \leq \log^2(n)\).
  On the other hand, symmetric torus polynomials require much higher degree to \(\frac{1}{20n}\)-approximate \(\AND\).
  This proves the separation of their power.
\end{proof}

We complete the remaining part of the proof for Theorem~\ref{thm:andsymlb}, which is to prove Claim~\ref{clm:shortvec}.
For this purpose, we will need the following statement from~\cite{bombieri1983siegel}.

\begin{theorem}[\cite{bombieri1983siegel}]
  \label{thm:linfvec}
  Consider a full-rank integer matrix \(B\) of size \(n \times m\), with \(n > m\).
  Then, \(\mathcal{L}(B) = B\mathbb{Z}^m\) contains a non-zero vector \(\mathbf{v} \in \mathbb{Z}^n\) with its \(\ell^\infty\)-norm bounded by:
  \[
    \norm{\mathbf{v}}_\infty \leq \left(\frac{\sqrt{\det(B^TB)}}{D}\right)^{\frac{1}{n - m}}
  \]
  Here, \(D\) is the GCD of all \(m \times m\) minors of \(B\).
\end{theorem}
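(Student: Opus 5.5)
We read the statement in the Bombieri--Vaaler form of Siegel's lemma, since that is the only reading under which the exponent \(\frac{1}{n-m}\) makes sense. The lattice in question is \(\Lambda = \{\mathbf{v} \in \mathbb{Z}^n : B^T\mathbf{v} = 0\}\), the integer points of the orthogonal complement of the column span of \(B\), which has rank \(n-m\). By contrast, a lattice of rank \(m\) would naturally produce the exponent \(\frac{1}{m}\). This reading is also how the result is applied to \(\nulls(\Ms(n,d))\). The plan is then \emph{Minkowski's first theorem} applied to \(\Lambda\) inside the real subspace \(V = \Lambda \otimes \mathbb{R} = \ker B^T\), with the convex body \(K = [-1,1]^n \cap V\). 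Write \(k = n-m\). If \(\mathrm{vol}_k(tK) \geq 2^k \det \Lambda\), then \(tK\) contains a nonzero point of \(\Lambda\), and such a point has \(\ell^\infty\)-norm at most \(t\). So two estimates are needed: an exact formula for \(\det\Lambda\), and a lower bound on \(\mathrm{vol}_k(K)\).

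\textbf{Step 1 (determinant).} We will show \(\det \Lambda = \sqrt{\det(B^TB)}/D\). Let \(W = \mathrm{span}_{\mathbb{R}}(B)\) and let \(\Lambda_W = W \cap \mathbb{Z}^n\) be the saturation of \(\mathcal{L}(B)\). For primitive sublattices of \(\mathbb{Z}^n\) we use the standard duality fact \(\det(W\cap\mathbb{Z}^n) = \det(W^\perp \cap \mathbb{Z}^n)\). This follows because the projection of \(\mathbb{Z}^n\) onto \(W^\perp\) is the dual lattice \((W^\perp\cap\mathbb{Z}^n)^*\), combined with \(1 = \det\mathbb{Z}^n = \det(W\cap\mathbb{Z}^n)\cdot\det(\pi_{W^\perp}\mathbb{Z}^n)\). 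It then remains to show \([\Lambda_W : \mathcal{L}(B)] = D\). Since \(\det\mathcal{L}(B) = \sqrt{\det B^TB}\), this gives \(\det\Lambda_W = \sqrt{\det B^TB}/D\). For the index, take a Smith normal form \(B = U\,\mathrm{diag}(d_1,\dots,d_m)\,V\) with \(U,V\) unimodular. The saturation corresponds to dropping the invariant factors, so the index is \(d_1\cdots d_m\). By Cauchy--Binet invariance of determinantal divisors, this product is the gcd of the \(m\times m\) minors, namely \(D\).

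\textbf{Step 2 (volume), the main obstacle.} We need \(\mathrm{vol}_k([-1,1]^n\cap V) \geq 2^k\) for every \(k\)-dimensional subspace \(V\), which is \emph{Vaaler's cube-slicing inequality}. I would follow Vaaler's route. Write the indicator of \([-\tfrac12,\tfrac12]\) as a limit of log-concave densities, and express the section volume as the density of a projected product measure. Then compare with Gaussians: one uses that the uniform distribution on \([-\tfrac12,\tfrac12]\) is ``more peaked'' than a suitable Gaussian in the sense of Kanter's peakedness ordering, which is preserved under products and linear maps. This gives \(\mathrm{vol}_k([-\tfrac12,\tfrac12]^n\cap V)\geq 1\). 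Scaling by \(2\) yields \(2^k\). This is the deep analytic step. If it resists a self-contained treatment, we will simply cite Vaaler's theorem, as the paper already treats the Bombieri--Vaaler result as a black box.

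\textbf{Step 3 (conclusion).} Set \(t = (\det\Lambda)^{1/k}\). Then \(\mathrm{vol}_k(tK) = t^k\mathrm{vol}_k(K) \geq 2^k\det\Lambda\). Minkowski's theorem, applied to the symmetric convex body \(tK\) and the lattice \(\Lambda\) in \(V \cong \mathbb{R}^k\), gives a nonzero \(\mathbf{v}\in\Lambda\) with \(\norm{\mathbf{v}}_\infty \leq t = \left(\sqrt{\det(B^TB)}/D\right)^{\frac{1}{n-m}}\). This is the claimed bound, for a nonzero integer vector orthogonal to the columns of \(B\).
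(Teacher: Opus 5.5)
The paper gives no proof of this statement; it only cites \cite{bombieri1983siegel}. Your argument is sound for the statement you chose to prove. Its ingredients are the index $[\Lambda_W:\mathcal{L}(B)]=D$ via Smith normal form, the duality $\det(W\cap\mathbb{Z}^n)=\det(W^\perp\cap\mathbb{Z}^n)$, Vaaler's cube-slicing bound, and Minkowski's first theorem. This is essentially the Bombieri--Vaaler argument, specialised to a single vector. What you establish is the kernel form: there is a nonzero $\mathbf{v}\in\mathbb{Z}^n$ with $B^T\mathbf{v}=0$ and $\norm{\mathbf{v}}_\infty\le(\sqrt{\det(B^TB)}/D)^{1/(n-m)}$.

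You are right that the statement cannot be read literally. Take $B=(1,0,2)^T$. Then $D=1$ and the bound is $5^{1/4}<2$, yet every nonzero vector of $B\mathbb{Z}$ has $\ell^\infty$-norm at least $2$.

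Your claim that your reading is how the paper applies the result is wrong, however. In the proof of Claim~\ref{clm:shortvec}, the theorem is applied with $B=\Bs(n,d)$, of size $(n+1)\times(n-d)$. The sought vector must lie in $\mathcal{L}(\Bs)=\nulls(\Ms(n,d))\cap\mathbb{Z}^{n+1}$. Moreover, $\det(\Bs^T\Bs)$ is raised to the power $\frac{1}{2(n-d)}$, so the exponent actually used is $\frac{1}{m}$, not $\frac{1}{n-m}=\frac{1}{d+1}$. Under your reading with $B=\Bs$, you would get a vector orthogonal to $\nulls(\Ms)$, lying in the row span of $\Ms$. That vector is useless for the claim, and it comes with the much weaker exponent $\frac{1}{d+1}$.

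What the paper needs is the column-space version: $W\cap\mathbb{Z}^n$ contains a nonzero $\mathbf{v}$ with $\norm{\mathbf{v}}_\infty\le(\sqrt{\det(B^TB)}/D)^{1/m}$. This coincides with a statement about $\mathcal{L}(B)$ exactly when $D=1$, which is the paper's case. For $D>1$ the saturation is genuinely needed: with $B=(2,0)^T$, the bound is $1$, but $\mathcal{L}(B)$ has minimum $2$.

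Your own tools give this version at no extra cost, in either of two ways.
\begin{itemize}
\item Run your Step~3 in the $m$-dimensional space $W=\mathrm{span}(B)$ with the lattice $\Lambda_W$, whose determinant $\sqrt{\det(B^TB)}/D$ you already computed in Step~1. The duality step is then unnecessary.
\item Apply your kernel version to $B'=\Ms^T$. Then use your duality identity to rewrite $\sqrt{\det(\Ms\Ms^T)}/D_{\Ms}$ as $\sqrt{\det(\Bs^T\Bs)}$.
\end{itemize}
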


To prove Claim~\ref{clm:shortvec} using Theorem~\ref{thm:linfvec}, we first describe a basis for \(\nulls(\Ms(n, d))\).
We state the construction without proof, omitting tedious calculations.

\begin{lemma}
  \label{lem:latticebasis}
  For \(n \in \mathbb{N}\), and \(d \in [n-1]^*\), construct a matrix \(\Bs(n, d)\), of size \((n+1) \times (n-d)\).
  Keep the following entry for \(i \in [n]^*\) and \(j \in [n-d-1]^*\): \(\Bs(n, d)_{i, k} = (-1)^{i-k}\binom{d+1}{i-k}\).

  Then, the columns of \(\Bs(n, d)\) form a basis for \(\Ms(n, d)\).
\end{lemma}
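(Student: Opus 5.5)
The plan is to prove three things about the matrix \(\Bs(n, d)\) with entries \(\Bs(n, d)_{i, k} = (-1)^{i-k}\binom{d+1}{i-k}\) (and \(0\) when \(i-k \notin [d+1]^*\)): every column lies in \(\nulls(\Ms(n, d))\), the columns are linearly independent, and their number equals the dimension of the nullspace. The intended reading of the statement is that the columns span \(\nulls(\Ms(n, d))\); this is the object needed for Claim~\ref{clm:shortvec}. Each column \(k\) is the coefficient vector of the \((d+1)\)-st forward difference operator, shifted to start at position \(k\). Concretely, column \(k\) is supported on \(\{k, \ldots, k+d+1\}\), and \(k \leq n-d-1\) keeps this support inside \([n]^*\).

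For the nullspace property, fix a row \(j \in [d]^*\) of \(\Ms(n, d)\). The inner product of that row with column \(k\) is
\[\sum_{t=0}^{d+1} (-1)^{t}\binom{d+1}{t}\binom{k+t}{j}.\]
As a function of \(t\), \(\binom{k+t}{j}\) is a polynomial of degree \(j \leq d\). This sum is, up to sign, the \((d+1)\)-st finite difference of that polynomial. A polynomial of degree at most \(d\) has vanishing \((d+1)\)-st finite difference; this is the same fact as Fact~\ref{fact:degzerosum} in the special case \(d'=0\). So every column lies in \(\nulls(\Ms(n, d))\).

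For independence, note that column \(k\) has entry \((-1)^0\binom{d+1}{0} = 1\) in row \(k\) and zeros in all rows \(i < k\). The \((n-d) \times (n-d)\) submatrix on rows \(0, \ldots, n-d-1\) is therefore lower triangular with ones on the diagonal. Hence \(\Bs(n, d)\) has full column rank \(n-d\).

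For the dimension count, \(\Ms(n, d)\) has rank \(d+1\). Its first \(d+1\) columns form the matrix \(\left(\binom{i}{j}\right)_{0 \leq i, j \leq d}\), which is unitriangular (Pascal) and so invertible; this uses \(d < n\). Hence \(\dim \nulls(\Ms(n, d)) = (n+1) - (d+1) = n-d\). Combining the three steps, the \(n-d\) independent columns form a basis of the nullspace.

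For the later application of Theorem~\ref{thm:linfvec}, it helps to record that \(\Bs(n, d)\) is an integer matrix and that its triangular \((n-d)\)-minor equals \(1\). So the GCD \(D\) of its maximal minors is \(1\). The same triangular submatrix also shows that the columns generate all integer vectors in the nullspace: if \(v\) is integral, the coefficients obtained by solving the unitriangular system are integral. I expect no genuine obstacle in any of these steps. The only care needed is the index bookkeeping (the stated range \(j\) versus \(k\), and checking that supports stay inside \([n]^*\)) and verifying the finite-difference identity for a general shift \(k\), which is immediate because shifting the argument preserves the degree of a polynomial.
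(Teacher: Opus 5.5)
Your proof is correct and complete. Each column is annihilated by every row of \(\Ms(n,d)\) because the \((d+1)\)-st finite difference of \(t \mapsto \binom{k+t}{j}\) vanishes. The columns have full rank because the block on rows \(0,\ldots,n-d-1\) is unitriangular. The invertible Pascal block gives \(\dim \nulls(\Ms(n,d)) = n-d\). Together these prove the intended statement, reading ``basis for \(\Ms(n,d)\)'' as ``basis for \(\nulls(\Ms(n,d))\)'' as you did.

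The paper states this lemma without proof, so there is no argument to compare yours against. Your unit \((n-d)\)-minor is exactly the one the paper uses to get \(D = 1\) in the proof of Claim~\ref{clm:shortvec}. Your extra observation, that the columns generate all of \(\nulls(\Ms(n,d)) \cap \mathbb{Z}^{n+1}\), is also correct. The later applications only need the containment \(\Bs(n,d)\mathbb{Z}^{n-d} \subseteq \nulls(\Ms(n,d)) \cap \mathbb{Z}^{n+1}\), however.
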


Now, to apply Theorem~\ref{thm:linfvec}, we need to estimate \(\det(\Bs(n, d)^T\Bs(n, d))\).
We state the result below, deferring the calculations to the appendix.

\begin{lemma}
  \label{lem:symdet}
  There exists a universal constant \(c\) such that \(\det({\Bs(n, d)}^T\Bs(n, d)) \leq 2^{cd^2 \log(n)}\).
\end{lemma}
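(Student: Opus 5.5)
We need to bound $\det(\Bs^T\Bs)$, where $\Bs=\Bs(n,d)$ is the $(n+1)\times(n-d)$ banded Toeplitz matrix whose column $k$ is the coefficient vector of $t^k(1-t)^{d+1}$ (up to sign), i.e.\ entries $(-1)^{i-k}\binom{d+1}{i-k}$. The plan is to use the Cauchy--Binet formula,
\[
  \det(\Bs^T\Bs) = \sum_{J} \det(\Bs_J)^2,
\]
where $J$ ranges over all $(n-d)$-subsets of the $n+1$ rows and $\Bs_J$ is the corresponding square submatrix. There are $\binom{n+1}{n-d}=\binom{n+1}{d+1}\le (n+1)^{d+1}$ terms. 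So it suffices to show $\abs{\det(\Bs_J)}\le 2^{O(d^2\log n)}$ for each $J$. Equivalently, the complement of $J$ consists of $d+1$ deleted rows.

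\textbf{Step 1: reduce each minor to a polynomial-division quantity.} The lattice of integer combinations of the columns corresponds to polynomials $g(t)(1-t)^{d+1}$ with $\deg g\le n-d-1$. Deleting $d+1$ rows at positions $r_0<\dots<r_d$ and taking the determinant of what remains is a classical object. Since $\Bs$ is lower triangular Toeplitz with unit diagonal (banded), the minor obtained by deleting rows is a skew Schur-type (or Jacobi--Trudi) determinant in the complete/elementary symmetric functions of the roots of $(1-t)^{d+1}$. All roots equal $1$, so the minor equals a principal specialization of a Schur function. Concretely, by the dual Jacobi--Trudi identity, $\det(\Bs_J)$ equals, up to sign, $s_\lambda(1,\dots,1)$ with $d+1$ ones. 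Here $\lambda$ is a partition (or its conjugate) determined by the gaps of the deleted rows, and $\lambda$ has at most $d+1$ parts, each at most $n$.

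\textbf{Step 2: bound the specialization.} The Weyl dimension formula gives
\[
  s_\lambda(1^{d+1})=\prod_{0\le i<j\le d}\frac{\lambda_i-\lambda_j+j-i}{j-i}\le (n+d+1)^{\binom{d+1}{2}} = 2^{O(d^2\log n)}.
\]
Combining this with the term count gives $\det(\Bs^T\Bs)\le (n+1)^{d+1}\cdot 2^{O(d^2\log n)}$, which is $2^{cd^2\log n}$ for a universal $c$.

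\textbf{Fallback.} The main obstacle is Step 1: correctly matching deleted-row minors of a banded Toeplitz matrix to Schur functions, with the right indexing conventions. If this becomes messy, I would use a cruder route. Expand $\det(\Bs_J)$ by a Laplace/block argument. Rows strictly between consecutive deleted rows form triangular blocks with unit diagonal, so each block contributes $\pm1$. Only $O(d)$ rows near each deleted row interact nontrivially, and the coupling region is confined to at most $(d+1)^2$ columns with entries bounded by $2^{d+1}$. Hadamard's bound on that $O(d^2)$-dimensional core would give $2^{O(d^3)}$, which is too weak. To repair this I would instead bound via Hadamard on the $(d+1)\times(d+1)$ complementary minor of $\Bs$'s inverse-Toeplitz (by Jacobi's complementary minor theorem). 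The inverse has entries $\binom{m+d}{d}\le n^{d}$, giving $\abs{\det}\le ((d+1)^{1/2}n^d)^{d+1}=2^{O(d^2\log n)}$. This Jacobi-identity route is actually the cleanest, and I would try it first.
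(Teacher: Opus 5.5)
Your proposal is correct, and both of your routes go through.

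\textbf{The Schur route.} The indexing you were worried about in Step~1 works out. Let the kept rows be $r_1<\dots<r_{n-d}$. Pulling the signs $(-1)^{i-k}$ out of rows and columns leaves $\det\bigl(e_{r_a-b+1}\bigr)_{a,b}$, where $e_m=\binom{d+1}{m}=e_m(1^{d+1})$. Reversing both the row order and the column order turns this into the dual Jacobi--Trudi determinant $\det(e_{\lambda_a-a+b})$ for $\lambda_a=r_{n-d+1-a}-(n-d)+a$. This $\lambda$ is a partition in an $(n-d)\times(d+1)$ box. Hence $\abs{\det\Bs_J}=s_{\lambda'}(1^{d+1})$, where $\lambda'$ has at most $d+1$ parts, each at most $n-d$. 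Each factor of the Weyl product is then at most $n$, so the product is at most $n^{\binom{d+1}{2}}$.

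\textbf{The Jacobi route.} This route is cleaner and complete as you state it. Let $T$ be the unit lower-triangular Toeplitz extension of $\Bs$, with symbol $(1-t)^{d+1}$ and $\det T=1$. Then $\det\Bs_J=\pm\det T^{-1}[\{n-d,\dots,n\},J^c]$. The entries of $T^{-1}$ are $\binom{i-k+d}{d}\le\binom{n+d}{d}\le(2n)^d$; your bound $n^d$ is slightly off for small $d$, but harmlessly. Hadamard plus Cauchy--Binet then give $2^{O((d+1)^2\log n)}$.

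\textbf{A caveat about the statement.} At $d=0$ the claimed bound is $2^0=1$, but $\det(\Bs(n,0)^T\Bs(n,0))=n+1$, since all $n+1$ maximal minors are $\pm1$. So your final absorption of the $\binom{n+1}{d+1}$ factor into $2^{cd^2\log n}$ needs $d\ge1$. The bound $2^{O((d+1)^2\log n)}$ that you actually prove suffices for the application in Claim~\ref{clm:shortvec}.

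\textbf{Comparison with the paper.} The paper defers its own computation to an appendix that is not included in this source, so I cannot say whether it argues the same way. Note, though, that your Jacobi step is the minor-by-minor form of a lattice duality that shortens the proof further.
\begin{itemize}
\item The last $d+1$ rows of $T^{-1}$ and the rows of $\Ms(n,d)$ are both bases of the lattice of integer vectors orthogonal to the columns of $\Bs(n,d)$.
\item The rows of $\Ms$ generate a primitive lattice, because columns $0,\dots,d$ of $\Ms$ form a unit triangular block; and $\Bs$ has a unit maximal minor.
\item Hence $\det(\Bs^T\Bs)=\det(\Ms\Ms^T)$.
\item Hadamard on the $d+1$ rows of $\Ms$ then gives $\det(\Ms\Ms^T)\le\prod_{j=0}^{d}\sum_{i=0}^{n}\binom{i}{j}^2\le(n+1)^{d+1}n^{d(d+1)}$ directly, with no sum over minors.
\end{itemize}
Your version, in exchange, is self-contained and bounds every individual maximal minor of $\Bs$ explicitly. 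In fact, your Weyl product equals $\prod_{a<b}(k_b-k_a)/(b-a)=\det\bigl(\binom{k_a}{j}\bigr)$ for the deleted rows $k_0<\dots<k_d$, which is exactly the complementary minor of $\Ms$.
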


Finally, we need the following statement to find a vector \(\psi\) with an odd entry at the desired place.
Informally speaking, consider a vector \(\psi\) such that \(\psi_n = 0\).
Then, we note that \(\Bs\) is a column-circulant matrix, with \(0\)s beyond the two main diagonals.
Hence, we can shift \(\psi\) to obtain \(\overline{\psi}'\) such that \(\overline{\psi}'_n\) contains the \emph{last} non-zero entry of \(\psi\).
We formalize this in the statement below, stated without proof.

\begin{lemma}
  \label{lem:rotation}
  Consider a vector \(\psi \in \nulls(\Ms(n, d))\) such that the maximum index \(i\) where \(\psi_i \neq 0\) is \(i_0\).
  Define the following vector \(\psi'\):
  \[
    \psi'_{i} =
    \begin{cases}
      \psi_{i+i_0-n} & n-i_0 \leq i \leq n\\
      0 & 0 \leq i < n-i_0
    \end{cases}
  \]
  Then, \(\psi' \in \nulls(\Ms(n, d))\).
\end{lemma}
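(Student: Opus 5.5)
The plan is to replace the specific rows of \(\Ms(n, d)\) by a translation-invariant description of \(\nulls(\Ms(n, d))\), and then observe that the shift defining \(\psi'\) is just a translation of the argument.

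\textbf{Step 1: reformulate the nullspace.} By Construction~\ref{cons:symmatm}, a vector \(\psi \in \mathbb{R}^{n+1}\) lies in \(\nulls(\Ms(n, d))\) if and only if
\[
  \sum_{i=0}^{n} \psi_i \binom{i}{j} = 0 \quad \text{for every } j \in [d]^* .
\]
The univariate polynomials \(\binom{t}{j} = \frac{t(t-1)\cdots(t-j+1)}{j!}\) for \(j \in [d]^*\) have degrees exactly \(0, 1, \ldots, d\). Hence they form a basis of the space of real univariate polynomials of degree at most \(d\). By linearity, \(\psi \in \nulls(\Ms(n, d))\) if and only if \(\sum_{i=0}^n \psi_i\, p(i) = 0\) for every \(p \in \mathbb{R}[t]\) with \(\deg p \leq d\).

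\textbf{Step 2: rewrite the sum for \(\psi'\).} Set \(s = n - i_0 \geq 0\). By definition, \(\psi'_i = \psi_{i-s}\) for \(s \leq i \leq n\), and \(\psi'_i = 0\) for \(i < s\). As \(i\) runs over \(s, \ldots, n\), the index \(k = i - s\) runs over \(0, \ldots, i_0\). Since \(\psi_k = 0\) for \(k > i_0\) by the choice of \(i_0\), no nonzero entry of \(\psi\) is lost in the shift. Therefore, for any \(p\) of degree at most \(d\),
\[
  \sum_{i=0}^n \psi'_i\, p(i) = \sum_{k=0}^{i_0} \psi_k\, p(k+s) = \sum_{k=0}^{n} \psi_k\, \tilde p(k),
  \qquad \text{where } \tilde p(t) = p(t+s).
\]

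\textbf{Step 3: conclude.} Translation preserves degree, so \(\deg \tilde p \leq d\). By Step 1 applied to \(\psi\), the right-hand side vanishes. Applying Step 1 in the converse direction, now to \(\psi'\), gives \(\psi' \in \nulls(\Ms(n, d))\).

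I do not expect a real obstacle here. The only points requiring care are the basis fact in Step 1 (degree \(\binom{t}{j}\) equals \(j\)) and the index bookkeeping in Step 2, namely checking that the vanishing of \(\psi_k\) for \(k > i_0\) is exactly what lets the truncated shift retain every nonzero entry. Finally, \(\psi'_n = \psi_{i_0} \neq 0\), which is the property used downstream to obtain an odd last coordinate.
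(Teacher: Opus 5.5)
Your proof is correct, but it takes a different route from the one the paper indicates. The paper gives no proof. Its remark before the lemma points to the explicit basis \(\Bs(n,d)\) of Lemma~\ref{lem:latticebasis}. Column \(k\) of that basis is the pattern \(\bigl((-1)^{i-k}\binom{d+1}{i-k}\bigr)_i\), supported on rows \(k,\ldots,k+d+1\), so the columns are translates of one another. Write \(\psi=\sum_k c_k \Bs_{\cdot,k}\) and let \(k^*\) be the largest index with \(c_{k^*}\neq 0\). Since the bottom entry of each column is \((-1)^{d+1}\neq 0\), the last nonzero entry of \(\psi\) sits at \(i_0=k^*+d+1\). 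Shifting by \(n-i_0\) therefore sends every column in use to another column of \(\Bs\), and \(\psi'\) is again a combination of basis vectors with the same coefficients.

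You bypass the basis instead. You identify \(\nulls(\Ms(n,d))\) with the set of vectors annihilating every polynomial of degree at most \(d\) sampled at \(0,\ldots,n\), and you use that this polynomial space is closed under translation. Your argument buys two things:
\begin{itemize}
\item It is self-contained, since it does not lean on Lemma~\ref{lem:latticebasis}, which the paper also states without proof.
\item It shows more generally that any shift toward higher indices by \(s\) with \(0\le s\le n-i_0\) stays in the nullspace.
\end{itemize}
The paper's route tracks coordinates in the very lattice basis to which Theorem~\ref{thm:linfvec} is applied, but nothing downstream needs that. The proof of Claim~\ref{clm:shortvec} only uses three facts: \(\psi'\) is an integer vector in the nullspace, it has the same norms as \(\psi\), and \(\abs{\psi'_n}=\abs{\psi_{i_0}}\). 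Your argument gives all three, since the shift merely relocates entries. The two views are really one fact: each column of \(\Bs\) kills all polynomials of degree at most \(d\) precisely because it encodes a \((d+1)\)-th finite difference.
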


Now, we are ready to prove Claim~\ref{clm:shortvec}, as follows.

\begin{proof}[Proof of Claim~\ref{clm:shortvec}]
  First, choose any \(d \leq \sqrt{\frac{n}{c \log(n)}}\).
  Then, apply Theorem~\ref{thm:linfvec} for \(\nulls(\Ms(n, d))\) with \(\Bs(n, d)\) as its basis.
  Note that \(\Bs(n, d)\) contains a minor, of size \((n-d) \times (n-d)\), with value \(1\).
  Hence, we get \(D = 1\) when we apply Theorem~\ref{thm:linfvec}.
  Therefore, there exists a non-zero vector \(\overline{\psi} \in \nulls(\Ms) \cap \mathbb{Z}^{n+1}\) with \(\norm{\overline{\psi}}_\infty \leq 2^{c \frac{n}{c \log(n)} \log(n) \frac{1}{2(n-d)}} \leq \sqrt{2}\).
  As \(\norm{\overline{\psi}}_\infty\) must be a positive integer, and \(\sqrt{2} < 2\), this shows that \(\norm{\overline{\psi}}_\infty = 1\).

  Now, if \(\overline{\psi}_n = \pm 1\), then, either \(\overline{\psi}\) or \(-\overline{\psi}\) satisfies the claimed conditions.
  Otherwise, if \(\overline{\psi}_n = 0\), use Lemma~\ref{lem:rotation} to obtain \(\overline{\psi'}\) from \(\overline{\psi}\).
  Note that \(\norm{\overline{\psi'}}_1 = \norm{\overline{\psi}}_1 \leq n+1\), and \(\abs{\overline{\psi'}_n} = 1\).
  Hence, we get the desired lower bound using \(\psi'\).
  This completes the proof of the claim.
\end{proof}

\subsection{Error-Degree Trade-off}

The lower bound described in Theorem~\ref{thm:andsymlb} applies to a particular error.
It is natural to attempt to prove a stronger lower bound for a tighter error.
Indeed, we are able to prove stronger lower bounds for tighter errors, but for \(\maj\).
This significantly strengthens the lower bound from~\cite[Corollary 23]{bhrushundi2019torus}.
Following is the statement of the lower bound.

\begin{theorem}
  \label{thm:symtradeoff}
  Fix \(r \in \mathbb{R}\), \(r \geq 0\).
  For any \(\varepsilon \leq 2^{-\Omega(\log^{r+1}(n))}\), any symmetric torus polynomial that \(\varepsilon\)-approximates \(\majn\) has degree at least \(\Omega\left(\sqrt{n\log^r(n)}\right)\).
\end{theorem}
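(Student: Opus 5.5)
We will combine the symmetric duality statement Theorem~\ref{thm:symconvert} with the parity trick of Lemma~\ref{lem:symshorttolb}. Fix \(r \geq 0\) and let \(d \leq \sqrt{n \log^r(n)/C}\) for a large constant \(C\). It then suffices to exhibit an integer vector \(\overline{\psi} \in \nulls(\Ms(n, d))\) with two properties: \(\ip{\majn, \overline{\psi}}\) is odd, and \(\norm{\overline{\psi}}_1 \leq 2^{O(\log^{r+1}(n))}\). Lemma~\ref{lem:symshorttolb} then rules out every \(Z\) simultaneously for \(\varepsilon < 1/(2\norm{\overline{\psi}}_1)\), which is the claimed regime.

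\textbf{Step 1: a short integer vector.} We apply Theorem~\ref{thm:linfvec} to the lattice basis \(\Bs(n, d)\) of Lemma~\ref{lem:latticebasis}. As in the proof of Claim~\ref{clm:shortvec}, it has a unit \((n-d)\times(n-d)\) minor, so \(D = 1\). By Lemma~\ref{lem:symdet}, \(\det(\Bs^T\Bs) \leq 2^{cd^2\log n}\). With \(n+1\) rows and \(n-d\) columns the exponent is \(1/(d+1)\), not \(1/(n-d)\), so Theorem~\ref{thm:linfvec} yields a nonzero integer \(\psi\) in the nullspace with
\[
  \norm{\psi}_\infty \leq 2^{c d^2 \log(n)/(2(d+1))} \leq 2^{O(d\log n)}.
\]
This is too weak, since \(d \log n\) is about \(\sqrt{n}\). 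We therefore refine the choice of lattice. We restrict to vectors supported on a window of \(m = \Theta(d^2/\log^{r} n) \approx n/C'\) consecutive weights; using the shift structure in Lemma~\ref{lem:rotation}, we place this window near the threshold \(n/2\), where \(\majn\) changes value. On such a window the relevant basis is \(\Bs(m,d)\). The same determinant bound gives \(\det \leq 2^{cd^2\log m}\), and now the exponent is \(1/(m-d)\) with \(m-d\) of order \(d^2/\log^r n\). This gives \(\norm{\psi}_\infty \leq 2^{O(\log^{r+1} n)}\) and hence \(\norm{\psi}_1 \leq (n+1)2^{O(\log^{r+1} n)}\). Choosing \(m\) to balance \(d^2 \log n/m\) against \(\log^{r+1}n\) is exactly what produces the trade-off \(d = \Theta(\sqrt{n\log^r n})\) with \(m = \Theta(n)\). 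Concretely, we take \(m = n\) and use the given bound \(d^2\log n / (n-d) \lesssim \log^{r+1} n\), so the original \(n \times (n-d)\) setting is applied directly.

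\textbf{Step 2: making \(\ip{\majn,\psi}\) odd.} This is the main obstacle. For \(\AND\) it was enough to place an odd entry at index \(n\). For \(\majn\), \(\ip{\majn,\psi} = \sum_{i \geq n/2}\psi_i\) is a sum over a half-interval. The plan proceeds in three moves.

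First, we divide \(\psi\) by the gcd of its entries, so that some entry is odd.

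Second, we use the shifts of Lemma~\ref{lem:rotation}. Generalized to shifts by any amount \(s\) that keep the support inside \([n]^*\), they stay in the nullspace, because columns of \(\Bs\) are translates of \(((-1)^j\binom{d+1}{j})_j\). We choose \(s\) so that the support slides across the threshold \(\lceil n/2\rceil\). Write \(\sigma(s)\) for the inner product with \(\majn\) after shifting by \(s\). Consecutive differences \(\sigma(s+1)-\sigma(s)\) equal the single entry of \(\psi\) that crosses the threshold.

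Third, we conclude. When the shifted support lies entirely below the threshold, \(\sigma = 0\). When it lies entirely above, \(\sigma = \sum_i \psi_i\), which is \(0\) because \(\psi\) is orthogonal to the constant row of \(\Ms\). Moving between these two extremes, the increments run through every entry of \(\psi\), and at least one of them is odd. So if every \(\sigma(s)\) were even, every entry would be even, a contradiction. Hence some shift gives an odd \(\sigma(s)\).

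This argument needs the support of \(\psi\) to have length at most about \(n/2\), so that full slides fit inside \([n]^*\). We secure this by running Step 1 on a window of length \(m = \lfloor n/2 \rfloor\) padded with zeros, which changes only constants. Shifting preserves the \(\ell^1\)-norm, so the bound on \(\norm{\psi}_1\) from Step 1 still holds, and Lemma~\ref{lem:symshorttolb} finishes the proof.
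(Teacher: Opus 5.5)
Your proof goes through, and Step~2 takes a genuinely different and more direct route than the paper's. The paper never shows that $\ip{\majn,\overline{\psi}}$ is odd for a short $\overline{\psi}$; it says explicitly that it could not. Instead it does three things:
\begin{itemize}
\item it takes an $\ell^1$-minimal integer vector in $\nulls(\Ms(n,d))$ and chooses $w$ with $\overline{\psi}_w$ odd;
\item it gets the bound for $\dw$ from Lemma~\ref{lem:symshorttolb};
\item it transfers that bound to $\majn$ through Lemma~\ref{lem:majdeltareduction}, a reduction borrowed from \cite{bhrushundi2019torus} and stated without proof.
\end{itemize}
You use two facts instead. First, $\nulls(\Ms(n,d))$ is exactly the space of vectors annihilating all polynomials of degree at most $d$, so it is translation-invariant; this makes both zero-padding from $\lfloor n/2\rfloor+1$ weights and shifting legitimate. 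Second, its vectors sum to zero. Sliding a gcd-reduced short vector across $\lceil n/2\rceil$ then exhibits every entry as a difference of consecutive inner products with $\majn$, so some translate gives an odd value. For even $n$ with $m=\lfloor n/2\rfloor$, the fully-below position is not available, but starting from the fully-above position and using $\sum_i\psi_i=0$ for the last entry still works.

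What your route buys is a self-contained proof with one explicit witness for $\majn$ itself, at the cost of only a constant in the exponent. The paper's route instead gives a bound for some $\dw$ plus a generic transfer. Your argument also shows why the idea does not extend to $\AND$: with the threshold at $n$, only the last nonzero entry is ever exposed. That is precisely the obstacle behind the paper's closing conjecture.

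One correction to Step~1: the window detour does nothing. Under your first reading of Theorem~\ref{thm:linfvec}, $\Bs(m,d)$ has $m+1$ rows and $m-d$ columns, so the exponent would again be $1/(d+1)$. The bound you actually need has its exponent governed by the lattice rank: $\norm{\psi}_\infty\le\det(\Bs^T\Bs)^{1/(2(m-d))}$, up to a polynomial factor. This is what Minkowski's theorem (Lemma~\ref{corr:lattice}) gives, and it is what the paper actually uses in both Claim~\ref{clm:shortvec} and Lemma~\ref{lem:symtradeoff}.

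As printed, Theorem~\ref{thm:linfvec} is the Bombieri--Vaaler bound for integer solutions of $B^T x=0$, not for vectors of $B\mathbb{Z}^m$. Either way, for $\nulls(\Ms(m,d))$ the exponent is set by the dimension $m-d$. With that, applying Step~1 directly at $m=\lfloor n/2\rfloor$ gives $\norm{\psi}_1\le n\,2^{O(\log^{r+1}n)}$, which is what you need.
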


To prove this statement, we plan to use Minkowski's Theorem on the length of shortest vectors in a lattice.
Following is the version we need.

\begin{lemma}[Minkowski's Theorem~\cite{minkowski}]
  \label{corr:lattice}
  Consider a full-rank integer matrix \(B\) of size \(n \times m\), with \(n > m\).
  Then, the lattice \(\mathcal{L}(B) = B\mathbb{Z}^m\) contains a vector \(\mathbf{v} \neq 0\) with
  \[\norm{\mathbf{v}}_1 \leq \sqrt{mn} \mathop{det}(B^TB)^{\frac{1}{2n}}\]
\end{lemma}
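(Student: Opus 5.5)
The plan is to reduce to the classical form of Minkowski's first theorem for a lattice of rank \(m\), and then pass from the \(\ell^2\)-norm to the \(\ell^1\)-norm. Since \(B\) has full column rank \(m\), \(\mathcal{L}(B)\) is a rank-\(m\) lattice inside the \(m\)-dimensional subspace \(V = B\mathbb{R}^m \subseteq \mathbb{R}^n\). Its covolume inside \(V\) is \(\det(\mathcal{L}) = \sqrt{\det(B^TB)}\): the Gram determinant is the squared volume of the fundamental parallelepiped spanned by the columns of \(B\).

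The key steps, in order, are as follows.
First, identify \(V\) isometrically with \(\mathbb{R}^m\), for instance via an orthonormal basis coming from a QR decomposition \(B = QR\). Under this identification \(\mathcal{L}(B)\) becomes the full-rank lattice \(R\mathbb{Z}^m\), with \(\abs{\det R} = \sqrt{\det(B^TB)}\).
Second, apply Minkowski's convex body theorem in \(\mathbb{R}^m\) to the centrally symmetric convex cube \([-\lambda,\lambda]^m\). If \((2\lambda)^m \geq 2^m \det(\mathcal{L})\), the cube contains a nonzero lattice point. This gives a nonzero \(\mathbf{v}\) with \(\norm{\mathbf{v}}_\infty \leq \det(\mathcal{L})^{1/m}\) in the rotated coordinates. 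Hence \(\norm{\mathbf{v}}_2 \leq \sqrt{m}\, \det(B^TB)^{\frac{1}{2m}}\), and this \(\ell^2\)-norm is unchanged when we map back to \(\mathbb{R}^n\), because \(Q\) is an isometry.
Third, use Cauchy--Schwarz in \(\mathbb{R}^n\): \(\norm{\mathbf{v}}_1 \leq \sqrt{n}\norm{\mathbf{v}}_2 \leq \sqrt{mn}\,\det(B^TB)^{\frac{1}{2m}}\).

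The main obstacle is the exponent in the statement, which reads \(\frac{1}{2n}\) rather than \(\frac{1}{2m}\). Since \(B\) is an integer matrix of full rank, \(\det(B^TB)\) is a positive integer, so \(\det(B^TB)\geq 1\). The \(\frac{1}{2n}\)-bound is therefore stronger than what the argument above yields. In fact it appears to be false as written. Take \(n=2\), \(m=1\) and \(B=(k,0)^T\). Every nonzero lattice vector has \(\ell^1\)-norm at least \(k\), whereas the claimed bound is \(\sqrt{2}\,k^{1/2}\), which is smaller than \(k\) once \(k > 2\).

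So I would prove the statement with exponent \(\frac{1}{2m}\), which is the genuine Minkowski bound. I would then check that the downstream use in Theorem~\ref{thm:symtradeoff} still goes through. There the lattice is \(\nulls(\Ms(n,d))\) with basis \(\Bs(n,d)\) of size \((n+1)\times(n-d)\), and Lemma~\ref{lem:symdet} bounds \(\det(B^TB)\) by \(2^{cd^2\log n}\). With \(m = n-d\) and \(d = o(n)\), the exponents \(\frac{1}{2m}\) and \(\frac{1}{2(n+1)}\) agree up to a constant factor. So the intended error--degree trade-off should survive with adjusted constants.
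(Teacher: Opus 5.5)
Your proof is correct. The paper gives no proof of its own, only the citation to Minkowski, and your derivation is the standard argument behind that citation: isometrically reduce to the full-rank lattice \(R\mathbb{Z}^m\) of covolume \(\sqrt{\det(B^TB)}\), apply the convex body theorem to a cube, then use \(\norm{\mathbf{v}}_1 \leq \sqrt{n}\norm{\mathbf{v}}_2\). This gives the bound with exponent \(\frac{1}{2m}\).

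Your counterexample \(B=(k,0)^T\) is valid, so the \(\frac{1}{2n}\) in the statement is an error for \(\frac{1}{2m}\). You are also right that the only application, Lemma~\ref{lem:symtradeoff}, survives. There \(m = n-d \geq n/2\), so one still gets \(\norm{\overline{\psi}}_1 \leq (n+1)\,2^{c d^2 \log(n)/n} = (n+1)\,2^{o(\log^{r+1}(n))}\).
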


We plan to produce short vectors using Minkowski's Theorem, and then invoke Lemma~\ref{lem:symshorttolb} to argue the lower bound.
Toward this, we need to argue that \(\ip{\maj, \overline{\psi}}\) should be odd for a short vector \(\overline{\psi}\) which, unfortunately, we could not prove.
We prove the lower bound indirectly, by looking at a wider class of symmetric functions.
The \(\dw\) function is defined as follows: \(\dw(x) = 1\) if and only if \(\abs{x} = w\).
We obtain the following lower bound for these functions.

\begin{lemma}
  \label{lem:symtradeoff}
  For any large enough \(n \in \mathbb{N}\), there exists a \(w \in [n]^*\), such that the following holds:

  Any symmetric torus polynomial that \(\varepsilon\)-approximates the \(\dw\) function over \(n\) variables, for \(\varepsilon \leq 2^{-\Omega(\log^{r+1}(n))}\), must have degree \(\Omega\left(\sqrt{n\log^r(n)}\right)\).
\end{lemma}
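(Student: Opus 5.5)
The plan is to find one short integer vector in \(\nulls(\Ms(n, d))\) and then pick \(w\) so that \(\ip{\dw, \overline{\psi}}\) is odd. Since \(\ip{\dw, \psi} = \psi_w\), we only need an odd entry of \(\overline{\psi}\) at a position we may choose freely. Lemma~\ref{lem:symshorttolb} then gives the lower bound for every \(\varepsilon < \frac{1}{2\norm{\overline{\psi}}_1}\).

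\textbf{Step 1: a short nonzero lattice vector.} Set \(d = \delta\sqrt{n\log^r(n)}\) for a small constant \(\delta\). Apply Lemma~\ref{corr:lattice} to the lattice generated by the columns of \(\Bs(n, d)\) from Lemma~\ref{lem:latticebasis}; this is an \((n+1)\times(n-d)\) integer matrix of full rank. By Lemma~\ref{lem:symdet}, \(\det(\Bs^T\Bs) \le 2^{cd^2\log n}\), so Minkowski's bound gives a nonzero \(\psi\) in the lattice with
\[
  \norm{\psi}_1 \le \sqrt{(n+1)(n-d)}\, 2^{\frac{c d^2 \log n}{2(n+1)}} \le (n+1)\, 2^{O(\delta^2 \log^{r+1}(n))} .
\]
This is \(2^{O(\log^{r+1}(n))}\).

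\textbf{Step 2: getting an odd entry.} Divide \(\psi\) by the largest power of \(2\) dividing all its entries. The result is still in the lattice: \(\Bs\) contains a unimodular \((n-d)\times(n-d)\) minor (the one giving \(D=1\) in the proof of Claim~\ref{clm:shortvec}), so the lattice is primitive and equals \(\nulls(\Ms(n,d))\cap\mathbb{Z}^{n+1}\). Dividing by \(2\) keeps the vector integral and in the nullspace, so it stays in the lattice. After this, \(\psi\) has some odd entry \(\psi_{i}\), and its \(\ell^1\)-norm has not grown.

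\textbf{Step 3: fixing \(w\) independently of \(d\).} If we simply took \(w = i\), then \(w\) would depend on \(d\) and \(\varepsilon\), whereas the lemma asks for one \(w\) per \(n\). To handle this, use Lemma~\ref{lem:rotation}, or its obvious analogue shifting downward, to move the last nonzero entry to a fixed position. The simplest choice is to take the last odd entry and move it to index \(n\), which would give \(w = n\). However, the shifted vector must still be zero past that position, so instead I would take \(i_0\) to be the maximal index with \(\psi_{i_0}\) odd and shift using the maximal nonzero index. The cleaner route is to apply the rotation first, so the last nonzero entry sits at \(n\), and only then divide out powers of \(2\). The parity issue remains, so I would actually argue over \(\mathbb{F}_2\): reduce \(\psi\) mod \(2\) and shift its last odd entry to position \(n\). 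This shift is legitimate because \(\Bs \bmod 2\) is again banded circulant, so the shifted vector stays in the lattice mod \(2\), and a short lift exists with the same norm.

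Alternatively, and more simply, allow \(w\) to be whatever position the odd entry lands on. The quantifiers "there exists \(w\)" for each \(n\) then need uniformity only over the relevant \(\varepsilon\) scale, and we can fix \(\varepsilon = 2^{-C\log^{r+1}(n)}\), which determines \(d\).

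\textbf{Conclusion.} With \(\overline{\psi}_w\) odd, Lemma~\ref{lem:symshorttolb} and Theorem~\ref{thm:symconvert} show that every symmetric torus polynomial \(\varepsilon\)-approximating \(\dw\), for \(\varepsilon < \frac{1}{2\norm{\overline{\psi}}_1} = 2^{-O(\log^{r+1}(n))}\), has degree more than \(d = \Omega(\sqrt{n\log^r(n)})\). The main obstacle is Step 3: choosing \(w\) uniformly while keeping oddness and integrality. The rest is routine.
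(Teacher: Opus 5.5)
Your final argument is essentially the paper's proof: you combine Minkowski with Lemma~\ref{lem:symdet} to get a short integral null vector, divide out powers of \(2\) to expose an odd entry, and take \(w\) to be that position, which depends only on \(n\) through the fixed threshold \(d\). The paper uses an \(\ell^1\)-minimal vector in \(\nulls(\Ms)\cap\mathbb{Z}^{n+1}\) instead of dividing by \(2\), which is equivalent; your \(\mathbb{F}_2\) shifting detour in Step 3 is unnecessary, and its claim that a short integral lift of the shifted vector exists is unjustified, so drop it rather than rely on it.
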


\begin{proof}
  We start by appealing to Minkowski's theorem (Lemma~\ref{corr:lattice}).

  Say \(d = o\left(\sqrt{n\log^r(n)}\right)\) for some \(r \geq 0\), and denote \(\Ms = \Ms(n, d)\).
  We use Lemma~\ref{lem:symdet}, together with Minkowski's theorem, to get that there exists a non-zero vector \(\overline{\psi} \in \nulls(\Ms) \cap \mathbb{Z}^{n+1}\) with \(\norm{\overline{\psi}}_1 \leq n 2^{o\left(\log^{r+1}(n)\right)}\).
  We want to choose a \(w\), and prove the lower bound with respect to that \(\dw\).
  Hence, we need to find a \(w\) such that \(\ip{\overline{\psi}, \dw}\) is odd.

  Consider a non-zero vector \(\overline{\psi} \in \nulls(\Ms) \cap \mathbb{Z}^{n+1}\) with the smallest \(\ell^1\)-norm.
  At least one of the entries of \(\overline{\psi}\) must be odd.
  Otherwise, if they are all even, then \(\overline{\psi}/2\) has strictly smaller \(\ell^1\)-norm.
  The containment \(\overline{\psi}/2 \in \nulls(\Ms) \cap \mathbb{Z}^{n+1}\) follows, because \(\nulls(\Ms)\) is a vector space, and \(\overline{\psi}/2\) has integral entries.
  This is a contradiction.

  This still does not tell us which entry of \(\overline{\psi}\) will be odd.
  Hence, it is still not clear which \(\dw\) we should choose.
  Note that we have to choose \(w\) to write the family of linear programs, all of them must use the same function \(f=\dw\).
  To make our life easy, we turn the problem over its head.

  We notice that we just need to choose \(w\) independent of \(Z\), but it can depend on \(n, d\).
  Hence, as the vector \(\overline{\psi}\) with the smallest \(\ell^1\)-norm depends only on \(n, d\), we can choose \(w\) based on \(\overline{\psi}\).
  This is exactly what we do, we choose \(w\) such that \(\overline{\psi}_w\) is odd.
  Note that at least one such \(w\) must exist, we pick one of them arbitrarily.
  This completes the proof of the statement.
\end{proof}

Now, we apply a trick employed in the proof of~\cite[Corollary 23]{bhrushundi2019torus}.
We describe it below without proof, as it is very similar to~\cite[Lemma 22]{bhrushundi2019torus}.

\begin{lemma}
  \label{lem:majdeltareduction}
  For any \(r \geq 0\), \(\varepsilon \leq 2^{-\Omega\left(\log^{r+1}(n)\right)}\), and large enough \(n\), the following holds:
  If there exists a symmetric torus polynomial of degree \(o\left(\sqrt{n\log^r(n)}\right)\) that \(\varepsilon\)-approximates \(\maj\), then for any \(w \in [n]^*\), there exists a symmetric torus polynomial of degree \(o\left(\sqrt{n\log^r(n)}\right)\) that \(\varepsilon\)-approximates \(\dw\).
\end{lemma}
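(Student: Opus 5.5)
The plan is to follow the threshold-decomposition argument behind \cite[Lemma 22]{bhrushundi2019torus}. We write \(\dw\) as a difference of two threshold functions, and obtain each threshold from \(\maj\) by fixing some input variables to constants. Let \(\mathsf{Th}_w : \{0,1\}^n \to \{0,1\}\) be \(\mathsf{Th}_w(x) = 1\) if and only if \(\abs{x} \geq w\). Since \(\mathsf{Th}_w \geq \mathsf{Th}_{w+1}\) pointwise, we have \(\dw = \mathsf{Th}_w - \mathsf{Th}_{w+1}\) exactly, as \(\{0,1\}\)-valued functions.

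First step: realise each \(\mathsf{Th}_w\) as a restriction of majority. Take \(\maj\) on \(N = 2n\) (or \(2n+1\)) variables. Fix \(\alpha\) of the extra variables to \(1\) and the rest to \(0\), choosing \(\alpha\) so that \(\abs{x} + \alpha \geq N/2\) if and only if \(\abs{x} \geq w\). Such an \(\alpha \in [n]^*\) exists for every \(w \in [n]^*\), with the edge cases \(w = 0\) and \(w = n+1\) giving constant functions, which are handled trivially.

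Apply the same substitution to the assumed symmetric torus polynomial \(P\) for \(\maj_N\). This gives a polynomial \(P_w\) in \(x_1,\dots,x_n\). It is symmetric, because permuting the free variables is a permutation of the variables of \(P\), and \(P\) is invariant under it. Its degree is at most \(\deg P = o(\sqrt{N\log^r N}) = o(\sqrt{n\log^r n})\), since \(N = \Theta(n)\). It \(\varepsilon\)-approximates \(\mathsf{Th}_w\) with the integer function \(Z_w(x) = Z(x,\text{fixed bits})\). Here we read ``approximates \(\maj\)'' at input length \(N=\Theta(n)\); this is how the lemma will be used, since only asymptotic degrees matter.

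Second step: set \(Q = P_w - P_{w+1}\). On every Boolean \(x\),
\[
\abs*{Q(x) - \tfrac{\dw(x)}{2} - (Z_w(x) - Z_{w+1}(x))} \leq 2\varepsilon .
\]
So \(Q\) is a symmetric torus polynomial of degree \(o(\sqrt{n\log^r n})\) that \(2\varepsilon\)-approximates \(\dw\). Because the error regime is only specified up to the constant in \(\Omega(\cdot)\), \(2\varepsilon \leq 2^{-\Omega(\log^{r+1} n)}\) still holds after adjusting that constant. Alternatively, we can start from an \(\varepsilon/2\)-approximation of \(\maj\), which the hypothesis also provides for the adjusted constant. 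Also \(2\varepsilon < 1/4\), so Definition~\ref{def:torus} applies.

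The main obstacle is only bookkeeping: verifying the choice of \(\alpha\) including parity and tie-breaking in \(\maj\), and tracking the change of input length from \(N\) to \(n\) inside the asymptotic bounds. Neither requires new ideas. Combined with Lemma~\ref{lem:symtradeoff}, which supplies a \(w\) with no such approximation, this yields Theorem~\ref{thm:symtradeoff}.
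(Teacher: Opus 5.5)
Your proposal is correct and is essentially the argument the paper defers to \cite[Lemma 22]{bhrushundi2019torus}: restrict $\maj$ on $\Theta(n)$ variables to the thresholds $\mathsf{Th}_w$ and $\mathsf{Th}_{w+1}$ and subtract, absorbing the factor-$2$ error loss and the change of input length into the asymptotic notation. Of your two fixes for the factor $2$, only the first (adjusting the constant in the exponent) is valid, since the hypothesis supplies an $\varepsilon$-approximation of $\maj$, not an $\varepsilon/2$-approximation.
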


Now, we can finish the proof of Theorem~\ref{thm:symtradeoff} as follows.

\begin{proof}[Proof of Theorem~\ref{thm:symtradeoff}]
  For some \(r \in \mathbb{R}, r \geq 0\), consider \(\varepsilon = 2^{-\Omega\left(\log^{r+1}(n)\right)}\).
  Assume that there exists a symmetric torus polynomial that \(\varepsilon\)-approximates \(\maj\) with degree \(d = o\left(\sqrt{n\log^r(n)}\right)\).
  Then, for each \(w \in [n]^*\), there exists a symmetric torus polynomial that \(\varepsilon\)-approximates \(\dw\).
  Moreover, each of these polynomials have degree \(d = o\left(\sqrt{n\log^r(n)}\right)\).
  This contradicts Lemma~\ref{lem:symtradeoff}, completing the proof of the theorem.
\end{proof}

To us, it is a bit unsatisfactory that we cannot prove this lower bound for \(\AND\).
Note that the main hurdle for us is not knowing which entry of \(\overline{\psi}\) is odd.
If we can prove that \(\overline{\psi}_n\) is odd, then the lower bound goes through for \(\AND\).
Hence, we state a conjecture that will prove the error-degree trade-off for \(\AND\) as well.
In fact, we believe that the following holds, which is stronger than what we need.

\begin{conjecture}
  For any \(n \in \mathbb{N}, 0 \leq d < n\), there is a vector \(\overline{\psi} \in \nulls(\Ms(n, d)) \cap Z^{n+1}\) with the smallest \(\ell^1\)-norm and \(\overline{\psi}_n = 1\).
\end{conjecture}

\subsection*{Acknowledgements}
We thank Shachar Lovett for helpful feedback on an earlier draft. We also thank anonymous ITCS 2026 referees for their detailed feedback.

\bibliographystyle{alpha}
\bibliography{references}{}

\end{document}